  \let\oldparagraph\paragraph
  \renewcommand{\paragraph}{
    \@ifstar
      \xxxParagraphStar
      \xxxParagraphNoStar
  }
  \newcommand{\xxxParagraphStar}[1]{\oldparagraph*{#1}\mbox{}}
  \newcommand{\xxxParagraphNoStar}[1]{\oldparagraph{#1}\mbox{}}
  \let\oldsubparagraph\subparagraph
  \renewcommand{\subparagraph}{
    \@ifstar
      \xxxSubParagraphStar
      \xxxSubParagraphNoStar
  }
  \newcommand{\xxxSubParagraphStar}[1]{\oldsubparagraph*{#1}\mbox{}}
  \newcommand{\xxxSubParagraphNoStar}[1]{\oldsubparagraph{#1}\mbox{}}
\patchcmd\longtable{\par}{\if@noskipsec\mbox{}\fi\par}{}{}
\def\maxwidth{\ifdim\Gin@nat@width>\linewidth\linewidth\else\Gin@nat@width\fi}
\def\maxheight{\ifdim\Gin@nat@height>\textheight\textheight\else\Gin@nat@height\fi}
\def\fps@figure{htbp}
  \renewcommand*\contentsname{Table of contents}
  \newcommand\contentsname{Table of contents}
  \renewcommand*\listfigurename{List of Figures}
  \newcommand\listfigurename{List of Figures}
  \renewcommand*\listtablename{List of Tables}
  \newcommand\listtablename{List of Tables}
  \renewcommand*\figurename{Figure}
  \newcommand\figurename{Figure}
  \renewcommand*\tablename{Table}
  \newcommand\tablename{Table}
\newcommand{\anon}{1}
\newcommand{\iid}{\overset{\mathrm{iid}}{\sim}}
\DeclareMathOperator*{\argmax}{arg\,max}
\theoremstyle{plain}
\newtheorem{theorem}{Theorem}
\newtheorem{lemma}{Lemma}
\newtheorem{prop}{Proposition}
\newtheorem{example}{Example}
\newtheorem{corollary}{Corollary}
\theoremstyle{remark}
\theoremstyle{remark}
\newcommand*{\addFileDependency}[1]{
\typeout{(#1)}
%
%
\@addtofilelist{#1}
%
\IfFileExists{#1}{}{\typeout{No file #1.}}
}\makeatother
\begin{document}

\def\spacingset#1{\renewcommand{\baselinestretch}%
{#1}\small\normalsize} \spacingset{1}


\if1\anon
{
  \title{Moment Martingale Posteriors \\ for Semiparametric Predictive Bayes}
  \author{Yiu Yin Yung$^1$, Stephen M.S. Lee$^1$ and Edwin Fong$^1$\\
  \\
  \begin{small}
  $^1$Department of Statistics and Actuarial Science, University of Hong Kong
    \end{small}
}
  \date{}
  \maketitle
} \fi

\if0\anon
{
  \bigskip
  \begin{center}
    {\LARGE\bf Moment Martingale Posteriors \\\vspace{1mm} for Semiparametric Predictive Bayes}
\end{center}
  \medskip
} \fi

\begin{abstract}
The predictive Bayesian view involves eliciting a sequence of one-step-ahead predictive distributions in lieu of specifying a likelihood function and prior distribution. 
Recent methods have leveraged predictive distributions which are either nonparametric or parametric, but not a combination of the two. This paper introduces a semiparametric martingale posterior which utilizes a  predictive distribution that is a mixture of a parametric and nonparametric component. The semiparametric nature of the predictive allows for regularization of the nonparametric component when the sample size is small, and robustness to model misspecification of the parametric component when the sample size is large. We call this approach the \textit{moment martingale posterior}, as the core of our proposed methodology is to utilize the method of moments as the vehicle for tying the nonparametric and parametric components together. In particular, the predictives are constructed so that the moments are martingales, which allows us to verify convergence under predictive resampling. A key contribution of this work is a novel procedure based on the energy score to optimally weigh between the parametric and nonparametric components, which has desirable asymptotic properties. The effectiveness of the proposed approach is demonstrated through simulations and a real world example.\end{abstract}

\section{Introduction}
Bayesian uncertainty can be understood as arising through  missing data, where given an i.i.d. sample $y_{1:n}$ from an infinite population, the uncertainty in parameters of interest comes from the unobserved data $y_{n+1:\infty}$. To deal with such uncertainty, a natural way is to impute the missing data \citep{fortini2020quasi,fong2023martingale}, which is also known as \textit{predictive resampling}. Under such a view, the  Bayesian model can be viewed as an approach for predictive resampling through the chain rule,
$$p(y_{n+1:\infty}\mid y_{1:n})=\prod_{i=n+1}^\infty p(y_i\mid y_{1:i-1}),$$
where $p(y_i\mid y_{1:i-1})$ is the posterior predictive constructed from the prior and likelihood. Posterior uncertainty on parameters of interest is then obtained by computing functionals of the imputed population, i.e. through $\theta(y_{n+1:\infty})$. 

This predictive view of Bayesian inference has a rich history dating back to de Finetti \citep{de1937prevision}, but was recently invigorated by recent works that leveraged the predictive view to develop new methodology, e.g. \cite{fortini2020quasi, berti2023bayesian,fong2023martingale}. An excellent review of the foundations of predictive Bayesian inference is given by \cite{fortini2025exchangeability}. A key advantage of the predictive approach over the traditional Bayesian approach
is the ability to avoid specification of the likelihood and prior, and predictive resampling can often be much more expedient than Markov Chain Monte Carlo (MCMC) and quantify uncertainty more accurately than variational Bayes.

A canonical example of the predictive view is the Bayesian bootstrap (BB) \citep{rubin1981bayesian}, which can be interpreted as using the empirical distribution for predictive resampling in connection to P\'{o}lya urns \citep{blackwell1973ferguson}. The BB posterior assigns Dirichlet-distributed weights $w \sim \text{Dirichlet}(1,\ldots,1)$ to the observations, and  
this time-honored method is recognized for its speed and robustness to model misspecification \citep{Lyddon2018, Fong2019}. However, it is also known that it performs poorly in small sample scenarios and it is unable to generate samples beyond the convex hull of the observed data. Furthermore, the Dirichlet weights are highly variable when the sample size is small, leading to erratic posterior distributions which fluctuates significantly across different bootstrap replicates.  These issues are for example problematic in the estimation of quantiles, as the BB posterior of a quantile will place point masses on a subset of the data points, which we demonstrate in the following example.
\begin{example}
   Suppose we have observed $n = 7$ samples from a skew normal distribution with true 95th percentile and median equal to 10.8 and 3.8 respectively. We conduct the BB to generate 5000 posterior samples. The  posterior of the 95th percentile is concentrated at a single datum near 10, with minimal mass distributed among other values (Fig. \ref{fig:i} (left)). Similarly, the posterior of the median has most of its mass on 3 data points  (Fig. \ref{fig:i} (right)).\vspace{-5mm}
   \begin{figure}[ht]
    \centering
    \begin{subfigure}[t]{0.4\textwidth}
        \centering
\includegraphics[width=\linewidth]{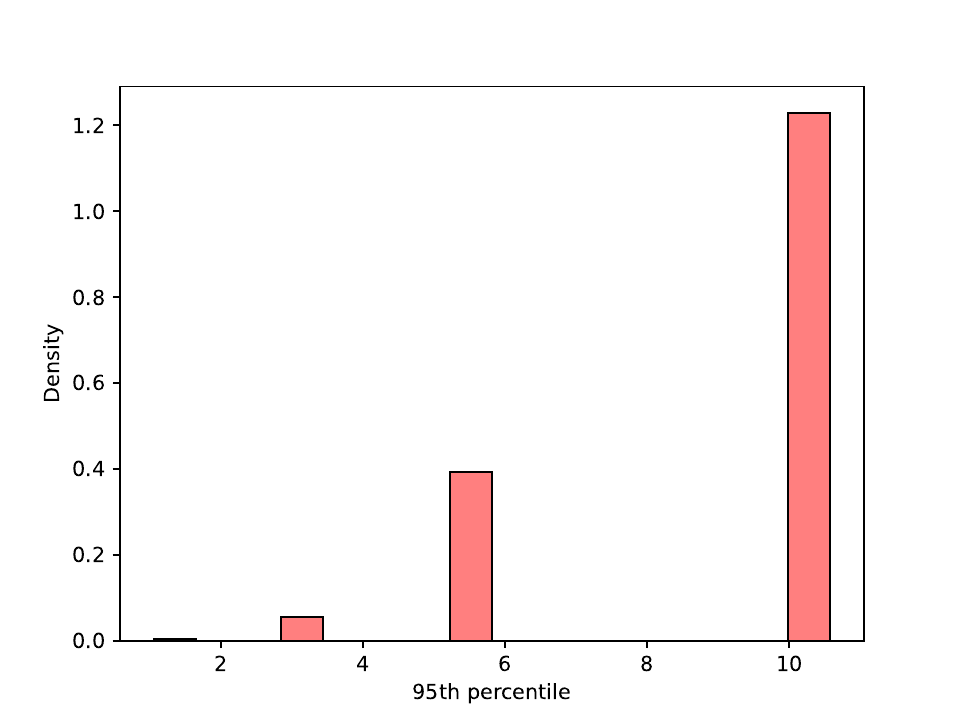}
    \end{subfigure}
    \begin{subfigure}[t]{0.4\textwidth}
        \centering
\includegraphics[width=\linewidth]{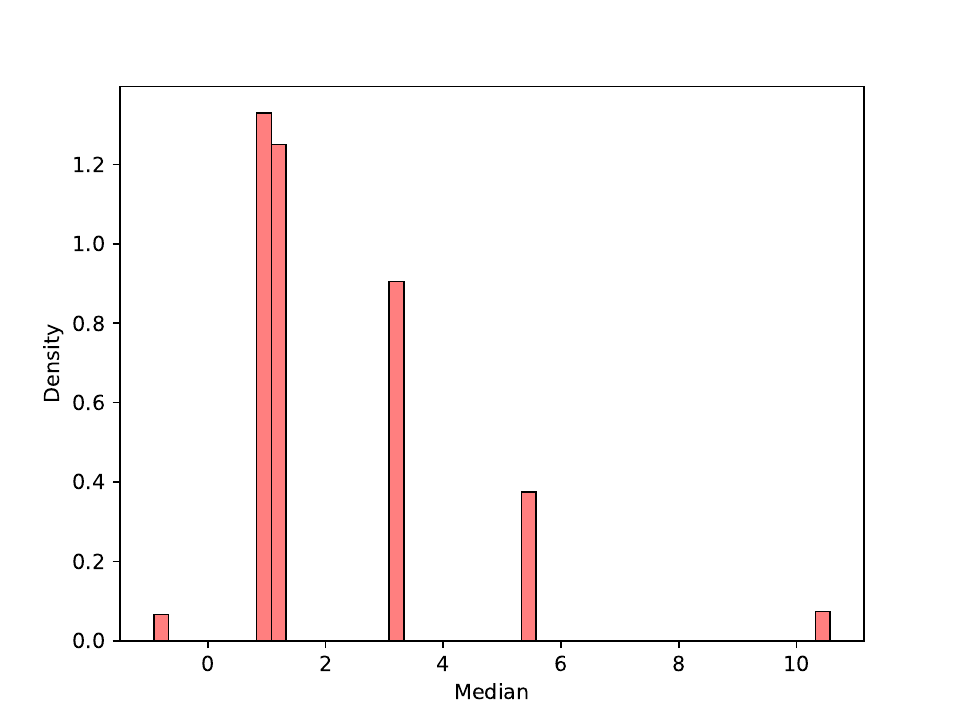}
    \end{subfigure}
    \vspace{-2mm}
    \caption{Posterior distribution obtained by the BB for the (Left) 95th percentile and (Right) median.} 
        \label{fig:i}
\end{figure}
\label{exp1}\vspace{-2mm}
\end{example} 
 A natural extension to the BB is to utilize a recursive estimate of a nonparametric predictive distribution, e.g. using copulas \citep{hahn2018recursive, fong2023martingale} or predictive recursion \citep{fortini2020quasi}. We refer to this approach as the nonparametric martingale posterior (MP) as coined in \cite{fong2023martingale}. The sequence of predictive distributions is required to satisfy a martingale condition $E[p_i(y)\mid y_{1:i-1}] = p_{i-1}(y)$, which induces a conditionally identically distributed (c.i.d.) sequence \citep{berti2004limit}. A key advantage is that predictive resampling is computationally efficient as it avoids MCMC. The BB is then a special case of the nonparametric MP using the empirical distribution as the predictive. 
 The MP framework has also been extended to survival analysis \citep{walker2024martingale}, quantile regression \citep{fong2024Bayesian}, time series \citep{Moya2024}, and machine learning \citep{lee2023martingale,falck2024large}.
 A parametric variant of the MP has also been investigated by \citep{holmes2023statistical,garelli2024asymptotics, fortini2025exchangeability}, where the sequence of predictives is a plug-in parametric model updated with stochastic gradient descent. In this case, the model parameter is constructed to be the martingale. 

In this paper, we consider the construction of a semiparametric MP with the goal of alleviating the pathologies of the BB in finite samples, whilst preserving its robustness to model misspecification and computational benefits. As parametric and nonparametric models have their strengths and drawbacks, a semiparametric approach can act as a potentially useful compromise \citep{powell1994estimation}. Drawing inspiration from the mixture of Dirichlet processes (MDP) \citep{Antoniak1974,Lyddon2018}, the main idea is to utilize a \textit{mixture predictive distribution}, which contains a parametric and nonparametric component.  Moment estimation will play a crucial role in tying these two components together, and we henceforth refer to our method as the \textit{moment martingale posterior}. 
 While we focus on using the BB as the nonparametric component in this paper, our methodology can be easily extended to use more general nonparametric components such as those in \cite{fortini2020quasi} and \cite{fong2023martingale}. 
Interestingly, although the parametric and nonparametric MP have been investigated, the semiparametric MP will require a novel approach.

 In Section 2, we review the nonparametric and parametric MP framework, and discuss connections to the MDP. Then, in Section 3, we introduce the moment MP and establish its asymptotic properties. In contrast to the parametric and nonparametric MP, our method depends crucially on a hyperparameter which balances the nonparametric and parametric components of the mixture predictive.  We consider the selection of this hyperparameter using a specific strictly proper scoring rule in Section 4, and show its asymptotic properties. Section 5 then extends the method to regression setting. Section 6 includes the experimental studies and Section 7 concludes the paper with future directions.

\section{Martingale posteriors and related work}

\subsection{Martingale posteriors}
In this section, we review the MP framework. The main idea of the MP is to specify a sequence of predictive distributions $\left(p_i(y)\right)_{i \geq n}$, which can then be utilized to impute the population $y_{n+1:\infty}$ with predictive resampling. When eliciting the sequence of predictives, two criteria are introduced in \cite{fong2023martingale} for \textit{coherency} of the MP: the first is the existence of the limit of predictive resampling, while the second is a guarantee that no bias is induced through predictive resampling. Both properties can be satisfied by requiring a martingale condition and  relying on  Doob's celebrated convergence theorem. 
A general criterion is thus to specify the sequence of predictives so that the object of interest is a martingale. This angle encompasses both the nonparametric and parametric MP, where the predictive distribution and parameter of interest are enforced as martingales respectively.

We now state two key results that are leveraged in most MP constructions. For convenience, we omit technical details regarding the filtration and measurability.
\begin{lemma}[Convergence]\label{lem:1}
Let $(X_i)_{i \geq n}$ be a martingale bounded in $L_1$, that is 
$\sup_{i \geq n} E\left[|X_i| \mid X_{1:n}\right] < \infty$.
Then we have that the limit
$X_\infty:=\lim_i X_i$
exists a.s.
\end{lemma}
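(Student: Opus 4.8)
The plan is to recognize this as a conditional version of Doob's forward convergence theorem and to prove it via the upcrossing inequality. Fix two rationals $a < b$ and, for each $N \geq n$, let $U_N[a,b]$ denote the number of upcrossings of $[a,b]$ completed by the finite sequence $X_n, X_{n+1}, \dots, X_N$. The first step is to construct the associated predictable gambling strategy: a $\{0,1\}$-valued predictable process $(C_i)$ that equals $1$ precisely during the time intervals in which the martingale is in the middle of an upcrossing attempt — it switches on as soon as $X_{i-1} < a$ and switches off once the process next exceeds $b$. The "winnings" process $Y_N = \sum_{i=n+1}^N C_i (X_i - X_{i-1})$ is then itself a martingale (a martingale transform of $(X_i)$ by a bounded predictable process), and a pathwise bookkeeping argument shows that on every sample path $Y_N \geq (b-a)\, U_N[a,b] - (X_N - a)^-$.

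Taking conditional expectations given $X_{1:n}$ and using $E[Y_N \mid X_{1:n}] = 0$ yields the upcrossing inequality $(b-a)\, E[U_N[a,b] \mid X_{1:n}] \leq E[(X_N - a)^- \mid X_{1:n}] \leq |a| + \sup_{i \geq n} E[|X_i| \mid X_{1:n}]$, whose right-hand side is finite by hypothesis and, crucially, does not depend on $N$. Since $U_N[a,b] \uparrow U_\infty[a,b]$ as $N \to \infty$, monotone convergence gives $E[U_\infty[a,b] \mid X_{1:n}] < \infty$, hence $U_\infty[a,b] < \infty$ almost surely.

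The third step is a countable union argument: the event $\Lambda = \bigcup_{a<b,\; a,b \in \mathbb{Q}} \{U_\infty[a,b] = \infty\}$ has probability zero. On its complement, for no rational pair $a<b$ does $(X_i)$ cross from below $a$ to above $b$ infinitely often, which forces $\liminf_i X_i = \limsup_i X_i$; that is, $X_\infty := \lim_i X_i$ exists in $[-\infty,+\infty]$ almost surely. Finally, the conditional Fatou lemma gives $E[|X_\infty| \mid X_{1:n}] \leq \liminf_i E[|X_i| \mid X_{1:n}] \leq \sup_{i \geq n} E[|X_i| \mid X_{1:n}] < \infty$, so $X_\infty$ is almost surely finite, completing the proof.

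I expect the main obstacle to be the careful construction and pathwise analysis of the upcrossing strategy $(C_i)$: verifying that it is genuinely predictable with respect to the relevant filtration, and checking the deterministic inequality $Y_N \geq (b-a)\,U_N[a,b] - (X_N-a)^-$ on every trajectory. Everything afterwards — the monotone convergence step, the countable union, and the conditional Fatou estimate — is routine. For this audience it would also be legitimate simply to invoke the classical statement of Doob's martingale convergence theorem, but the upcrossing argument above is the self-contained route.
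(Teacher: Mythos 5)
Your proposal is correct. Note, however, that the paper does not actually prove Lemma \ref{lem:1}: it is stated as a known result (Doob's martingale convergence theorem, with filtration and measurability details deliberately suppressed), and the paper simply invokes it. What you have written is the standard self-contained upcrossing proof of that classical theorem, so in that sense you are supplying a proof the paper omits rather than diverging from one it gives. The one substantive point worth highlighting is that you correctly adapted the argument to the paper's \emph{conditional} hypothesis $\sup_{i \geq n} E[|X_i| \mid X_{1:n}] < \infty$ (a.s.), which is the natural formulation under predictive resampling where everything is conditioned on the observed $y_{1:n}$: you carry conditional expectations through the upcrossing inequality (using $E[Y_N \mid X_{1:n}] = Y_n = 0$ for the martingale transform), then use conditional monotone convergence to get $E[U_\infty[a,b] \mid X_{1:n}] < \infty$ a.s., the countable union over rational pairs, and conditional Fatou for a.s. finiteness of $X_\infty$. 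Each of these steps is valid, and your pathwise inequality $Y_N \geq (b-a)\,U_N[a,b] - (X_N - a)^-$ together with $(X_N-a)^- \leq |X_N| + |a|$ is the standard bookkeeping; the predictability of $(C_i)$ with respect to the natural filtration $\sigma(X_{1:i-1})$ is immediate from its definition. So the argument is complete and consistent with how the lemma is used in the paper.
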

\begin{lemma}[Unbiasedness]\label{lem:2}
If the martingale $(X_i)_{i\geq n}$ is uniformly integrable (where being bounded in $L_2$ is sufficient), the limiting random variable $X_\infty$ satisfies
    $E[X_\infty \mid X_{1:n}]=X_n$.
\end{lemma}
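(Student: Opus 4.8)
The plan is to combine Doob's convergence theorem (Lemma~\ref{lem:1}) with the fact that uniform integrability promotes almost sure convergence to convergence in $L_1$, and then to pass the limit inside the conditional expectation using the $L_1$-contractivity of conditional expectation.

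First I would observe that a uniformly integrable martingale is automatically bounded in $L_1$, so Lemma~\ref{lem:1} applies and $X_\infty := \lim_i X_i$ exists almost surely. The next step is to upgrade this to $L_1$ convergence: by the Vitali convergence theorem, a sequence that converges almost surely and is uniformly integrable also converges in $L_1$, so $E[|X_i - X_\infty|] \to 0$ as $i \to \infty$. In the special case where $(X_i)_{i \geq n}$ is bounded in $L_2$, one may instead appeal to the fact that an $L_2$-bounded martingale converges in $L_2$, which in particular gives $L_1$ convergence; since $L_2$-boundedness implies uniform integrability, this case is subsumed by the general argument.

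Then, for any fixed $m \geq n$, the martingale property gives $E[X_i \mid X_{1:m}] = X_m$ for every $i \geq m$. Because conditional expectation is an $L_1$ contraction, $E[\,|E[X_i \mid X_{1:m}] - E[X_\infty \mid X_{1:m}]|\,] \leq E[|X_i - X_\infty|] \to 0$, so the left-hand side converges in $L_1$ to $E[X_\infty \mid X_{1:m}]$. Since that left-hand side is identically $X_m$ for all $i \geq m$, we conclude $E[X_\infty \mid X_{1:m}] = X_m$ almost surely, and taking $m = n$ gives the stated identity.

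The only delicate point is the interchange of the limit with the conditional expectation, and this is exactly where uniform integrability — rather than mere $L_1$-boundedness — is needed: without it the a.s. limit can lose mass, so that $E[X_\infty \mid X_{1:n}]$ need not equal $X_n$. The remaining steps are routine invocations of standard martingale theory, so I do not anticipate further obstacles.
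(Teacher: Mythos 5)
Your argument is correct and is exactly the standard route the paper implicitly relies on: the paper states Lemma~\ref{lem:2} as a classical consequence of Doob's theory and omits the proof, and your chain (uniform integrability $\Rightarrow$ $L_1$-boundedness and, via Vitali, $L_1$ convergence of the a.s.\ limit, then passing the limit through the conditional expectation by $L_1$-contractivity) is the textbook proof of that closed-martingale property, including the correct observation that the $L_2$-bounded case is subsumed. Nothing further is needed.
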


The first result ensures the existence of the MP, which is precisely the distribution of $X_\infty$. Under uniform integrability, the second result ensures  the MP distribution of $X_\infty$ is unbiased. We now outline the two canonical choices of the martingale for the nonparametric and parametric MPs, before generalizing to the semiparametric case in Section 3.

\subsubsection{Nonparametric MP}
For the nonparametric MP, we require the predictive cumulative distribution function (CDF) $P_i(y)$ to be a martingale under predictive resampling \cite{fortini2020quasi,fong2023martingale}. For example, in the BB, the following update rule satisfies this:
\begin{align*}
    P_{i+1}(y) = (1-\alpha_{i+1})P_i(y) + \alpha_{i+1} \mathbbm{1}\left(y_{i+1} \leq y\right)
\end{align*}
where $\mathbbm{1}$ is the indicator function. Other update functions for smooth CDFs also exist, e.g. using a bivariate copula. Given the observed data $y_{1:n}$, we estimate $P_n(y)$ and sample $y_{n+1} \sim P_n(y)$, then update to $P_{n+1}$ using the above rule. Iterating the above gives a random sample $Y_{n+1:\infty}$, from which we can compute a sample from the MP through $\theta(Y_{n+1:\infty})$. 
The existence of the MP relies on the existence of the limiting predictive CDF $P_\infty$. As $P_i$ is a CDF bounded by 1, both Lemmas \ref{lem:1} and \ref{lem:2} apply, and the sequence $\{P_{n+1}(y), P_{n+2}(y),...\}$ converges to a random $P_\infty(y)$ a.s. for each $y \in \mathbbm{R}$. This is sufficient for $P_i$ to converge to $P_\infty$ weakly a.s. \citep{berti2004limit}. For the nonparametric MP, letting the martingale be the predictive CDF $P_i(y)$ is thus sufficient.

\subsubsection{Parametric MP}

For the parametric MP \citep{holmes2023statistical, fong2024asymptotics, fortini2025exchangeability}, 
the predictive is a parametric density $p_{\theta}$. The parameters $\theta$ of the parametric model are then updated given an imputed sample $y_i\sim f_{\theta_{i-1}}$ as follows: 
\begin{align}\label{eq:param_MP}
    \theta_{i} =\theta_{i-1} + i^{-1}\mathcal{I}(\theta_{i-1})^{-1} s(y_{i}, \theta_{i-1})
\end{align}
in which $s$ is the score function, $\mathcal{I}$ is the Fisher information matrix, and the $i^{-1}$ is the learning rate that satisfies $\sum_{i=n+1}^\infty i^{-1}=\infty$ and $\sum_{i=n+1}^\infty i^{-2}<\infty$. Repeating the aforementioned predictive resampling process yields the posterior distribution of the parameter of interest. 

 Under predictive resampling, the score function has mean 0, so $\theta_i$ is a martingale, and under regularity conditions, the martingale is bounded in $L^2$ \citep{fong2024asymptotics}. Again both Lemmas \ref{lem:1} and \ref{lem:2} apply, where the martingale is $\theta_i$ for the parametric MP.
 
\subsection{Mixture of Dirichlet processes}
A key source of inspiration of our method is the MDP of \cite{Antoniak1974}. For a parametric model $f_\theta$, the MDP prior can be written as
$F|\theta\sim \text{DP}(c,f_\theta)$ where $\theta\sim\pi(\theta).$
This is a mixture of standard Dirichlet processes (DP) with hyper-prior $\pi(\theta)$ and concentration parameter $c > 0$. The MDP can be concisely written as
$F|\theta\sim \text{MDP}(\pi(\theta),c,f_\theta).$
By the conjugacy of the DP, the posterior distribution  $\pi(F\mid y_{1:n})$ is also an MDP,
\begin{flalign*}
\text{MDP}\Big(\pi(\theta \mid y_{1:n}),c + n,\frac{c}{c+n}f_\theta+\frac{n}{c+n}\mathbb{P}_n\Big),
\end{flalign*}
where $\pi(\theta \mid y_{1:n})$ is the standard parametric Bayesian posterior and $\mathbb{P}_n$ is the empirical measure of the sample $y_{1:n}$.

The key advantage of the MDP is the robustness to model misspecification, whilst still using $f_\theta$ to regularize the nonparametric component \citep{Lyddon2018}.
The magnitude scalar $c$ can be understood as the confidence level on the baseline model $f_\theta$ being accurate, whilst asymptotically the nonparametric component dominates. Using the MDP, \cite{Lyddon2018} suggested a BB-like method to construct a robust posterior distribution on $\theta$. However, the main downside is the requirement to still draw $\theta^{(i)}\sim \pi(\theta\mid y_{1:n})$, which would require MCMC. Furthermore, while $c$ is crucial for the performance of the MDP, \cite{Lyddon2018} did not select the $c$ in a systematic fashion. 

\section{Moment martingale posteriors}

 Inspired by the centering measure of the posterior MDP, we elicit our predictive distribution for the semiparametric case as
\begin{flalign}
p_{i}(y)=\frac{c}{c+i}f_{\theta_i}(y) + \frac{i}{c+i}\mathbbm{P}_i\label{eq:2}
\end{flalign}
where $\theta_i$ is a plug-in estimate to be specified. The above choice is highly intuitive - our predictive distribution is a weighted sum of a regularizing parametric component $f_{\theta}$ with the empirical distribution. As the number of samples $y_{1:i}$ increases, the empirical component dominates so we recover the BB asymptotically. We thus have a predictive that performs well for small $n$ but maintain the robustness to model misspecification. 

Given the above choice of predictive, we are immediately faced with the following question: which quantity should the martingale be, and how do we estimate $\theta_i$ to satisfy this condition? To see that this is not immediately obvious, consider estimating $\theta_{i+1}$ with the update rule (\ref{eq:param_MP}) after drawing $y_{i+1} \sim p_{i}$ given by (\ref{eq:2}). Then we have
\begin{align*}
    E\left[s(y_{i+1}, \theta_{i})\mid y_{1:i}\right] = \frac{i}{c+i}\frac{1}{i} \sum_{j = 1}^i s(y_j, \theta_i) \neq 0
\end{align*}
As a result, due to the nonparametric component, we may have $E[\theta_{i+1} \mid y_{1:i}] \neq \theta_i$, so the parameter is not necessarily a martingale. Furthermore, a similar argument shows that the predictive CDF $P_i(y)$ will not be a martingale either due to the parametric component.

\subsection{Method of moments}
As alluded to in the earlier sections, it turns out that enforcing the predictive \textit{moments} to be a martingale is crucial in our semiparametric setting. This is a surprisingly satisfying answer, as the moments occupy a semiparametric middle ground in between the nonparametric CDF and the parametric $\theta$.
Specifically then, if we allow $\theta_i$ to be the \textit{method of moments} estimator, then a finite subset of the predictive moments will form a martingale.

To solidify the idea, let us assume that $\theta \in \mathbb{R}^p$, and 
define the $k$-th parametric and empirical moments respectively as
$$\mu_\theta^{(k)}=\int y^k \,f_\theta(y)\, dy, \quad \mu_n^{(k)}=\frac{1}{n}\sum_{i=1}^ny_i^k.$$ 
For the method of moments, we match the first $p$ moments, that is we find $\theta_n$ such that for all $k\in\{1,...,p\}$, the following is satisfied:
\begin{align}\mu^{(k)}_{{\theta}_n}=\mu^{(k)}_n.\label{eq:mom_cond}
\end{align}
As the predictive is a mixture of the empirical distribution and the parametric model, the predictive moments will match both the nonparametric and parametric components.  

It is standard to assume that there exists a function $h$ such that $\theta_n = h(\mu_n^{(1)},\ldots, \mu_n^{(p)})$ satisfies the above. Our first result then follows immediately.
\begin{prop}\label{prop:mom_mart} Let our predictive distribution be  (\ref{eq:2}). If $\theta_i$ is estimated by the method of moments, then the first $p$ moments are martingales under predictive resampling, that is $$E\left[\mu^{(k)}_{i+1}\mid y_{1:i}\right]=\mu^{(k)}_i, \quad \forall k\in\{1,...,p\},\quad i\geq n$$
\end{prop}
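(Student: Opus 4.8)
The plan is to reduce everything to a single observation: the method-of-moments condition (\ref{eq:mom_cond}) forces the $k$-th moment of the mixture predictive (\ref{eq:2}) to coincide with the $k$-th empirical moment for every $k\le p$. The first step is therefore to compute the raw moments of $p_i$. By linearity,
\[
\int y^k\,p_i(y)\,dy=\frac{c}{c+i}\,\mu^{(k)}_{\theta_i}+\frac{i}{c+i}\,\mu^{(k)}_i=\frac{c}{c+i}\,\mu^{(k)}_i+\frac{i}{c+i}\,\mu^{(k)}_i=\mu^{(k)}_i,
\]
where the middle equality is exactly (\ref{eq:mom_cond}). Consequently, drawing $y_{i+1}\sim p_i$ gives $E\!\left[y_{i+1}^k\mid y_{1:i}\right]=\mu^{(k)}_i$ for each $k\in\{1,\dots,p\}$.

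The second step is the bookkeeping of the recursive moment update. Since $\mu^{(k)}_{i+1}=\frac{1}{i+1}\sum_{j=1}^{i+1}y_j^k=\frac{i}{i+1}\,\mu^{(k)}_i+\frac{1}{i+1}\,y_{i+1}^k$ and $\mu^{(k)}_i$ is a function of $y_{1:i}$, taking conditional expectations and substituting the result of the first step yields
\[
E\!\left[\mu^{(k)}_{i+1}\mid y_{1:i}\right]=\frac{i}{i+1}\,\mu^{(k)}_i+\frac{1}{i+1}\,E\!\left[y_{i+1}^k\mid y_{1:i}\right]=\frac{i}{i+1}\,\mu^{(k)}_i+\frac{1}{i+1}\,\mu^{(k)}_i=\mu^{(k)}_i,
\]
which is the asserted identity. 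Because $\theta_{i+1}=h(\mu^{(1)}_{i+1},\dots,\mu^{(p)}_{i+1})$ is again a valid method-of-moments estimate, the same argument applies at step $i+1$, so the identity holds for all $i\ge n$.

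There is no delicate analytic step: the proof is the one-line moment computation above together with the elementary running-average identity. The only points requiring care are existence and integrability --- one must assume the parametric moments $\mu^{(k)}_{\theta}$ are finite and the method-of-moments map $h$ is well defined at $(\mu^{(1)}_i,\dots,\mu^{(p)}_i)$ along the entire resampling path, so that each $\theta_i$ and hence each predictive $p_i$ exists; these are precisely the standing assumptions made just before the statement. I would state them explicitly as hypotheses, present the two displays, and add a remark that this computation is exactly why the \emph{moments} (and not the predictive CDF or the parameter $\theta$) are the right martingale here: the parametric and empirical weights $c/(c+i)$ and $i/(c+i)$ in (\ref{eq:2}) are reconciled only at the level of the first $p$ moments.
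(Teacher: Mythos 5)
Your proposal is correct and follows essentially the same route as the paper's proof: expand $\mu^{(k)}_{i+1}$ as the running average, compute $E[Y_{i+1}^k \mid y_{1:i}]$ under the mixture predictive, and invoke the moment-matching condition (\ref{eq:mom_cond}) to collapse the parametric and empirical contributions to $\mu^{(k)}_i$. The only cosmetic difference is that you apply the moment matching at the level of the predictive's $k$-th moment before the update bookkeeping, whereas the paper substitutes it after expanding the conditional expectation; the computation is identical.
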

 We hence term the resulting MP given by our choice of the mixture predictive (\ref{eq:2}) as the \textit{moment martingale posterior}. Moment predictive resampling is then illustrated in Algorithm \ref{alg:ps}, where we recursively update the first $p$ empirical moments and update $\theta$ afterwards.

\begin{algorithm}[!h]
\caption{Moment Predictive Resampling}\label{alg:ps}
\begin{algorithmic}[1]
\State \textbf{Input:} Observed data $y_{1:n}$, parametric model $f_\theta$, and number of posterior samples $B$
\State Calculate $\{\mu^{(k)}_{n}\}_{k = 1,\ldots,p}$ and initial estimate ${\theta}_{n}$ 
\For{$i \gets n$ \textbf{to} $N$}
    \State Sample $y_{i+1}$ by Equation (\ref{eq:2})
    \State Update moments:
    $\mu^{(k)}_{i+1} \gets \frac{i}{i+1}\mu^{(k)}_i + \frac{1}{i+1}y_{i+1}^k, \quad \forall k\in\{1,...,p\}$
    \State Update $\theta_{i+1}$ with $\{\mu^{(k)}_{i+1}\}_{k = 1,\ldots,p}$
\EndFor
\State Repeat Lines 3-7 $B$ times and \textbf{return} $\{\theta^{(1)}_N,...,\theta^{(B)}_N\}$
\end{algorithmic}
\end{algorithm}

 As the moments $\mu_i^{(k)}$ are martingales, we can apply Lemma \ref{lem:1} to show a.s. convergence of the moments. We will however require additional assumptions on $f_\theta$ to ensure the martingale sequence is in $L_1$. To that end, let us define the $k$-th parametric \textit{absolute moment} as 
 \begin{align}
     |\mu_\theta|^{(k)} := \int |y|^k f_{\theta}(y)\, dy.
 \end{align}
 Note that the above is distinct to the absolute value of the moment $|\mu_\theta^{(k)}|$, which is upper bounded by the absolute moment $|\mu_{\theta}|^{(k)}$.
 Our result then follows.
\begin{theorem}\label{thm:1} Let $f_\theta$ be the parametric component of our mixture predictive in (\ref{eq:2}), and assume that the method of moments estimator exists for all values of $\mu^{(k)}$. Assume additionally that at least one of the following conditions is satisfied:
\begin{enumerate}
    \item  $p$ is an even integer;
    \item  $p$ is an odd integer, 
    and there exist some non-negative constants $a$ and $b$ such that 
    $$|\mu_\theta|^{(p)} \leq a|\mu^{(p)}_\theta |+ b.$$
\end{enumerate}\vspace{-2mm}
Then under predictive resampling, the limiting moments exist, that is
\begin{align*}
    \mu_i^{(k)} \to \mu_\infty^{(k)} \quad \forall k \in \{1,\ldots,p\}\quad \textnormal{a.s.}
\end{align*} 
\end{theorem}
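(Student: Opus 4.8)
The plan is to combine Proposition~\ref{prop:mom_mart} with Lemma~\ref{lem:1}. Proposition~\ref{prop:mom_mart} already tells us that, for each $k\in\{1,\dots,p\}$, the sequence $(\mu_i^{(k)})_{i\ge n}$ is a martingale under predictive resampling, so by Lemma~\ref{lem:1} it is enough to verify the $L_1$-boundedness condition $\sup_{i\ge n}E[\,|\mu_i^{(k)}|\mid y_{1:n}]<\infty$ for every such $k$. Writing $\mu_i^{(k)}=\frac1i\sum_{j=1}^i y_j^k$ (observed data for $j\le n$, imputed draws for $j>n$) and using the elementary bound $|y|^k\le 1+|y|^p$ valid for all $k\le p$, we get $|\mu_i^{(k)}|\le 1+\nu_i$, where $\nu_i:=\frac1i\sum_{j=1}^i|y_j|^p$ is the running empirical absolute $p$-th moment. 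Hence the whole theorem reduces to the single estimate $\sup_{i\ge n}E[\nu_i\mid y_{1:n}]<\infty$.

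The even case ($p$ even) is immediate: then $\nu_i=\mu_i^{(p)}$ is itself a non-negative martingale, so $E[\nu_i\mid y_{1:n}]=\mu_n^{(p)}<\infty$ for every $i$, and no further hypothesis is needed. The real work is the odd case, where I would derive a one-step recursion for $\nu_i$. From the update $\nu_{i+1}=\frac{i}{i+1}\nu_i+\frac{1}{i+1}|y_{i+1}|^p$ together with the mixture form~(\ref{eq:2}) of $p_i$, taking the conditional expectation given $y_{1:i}$ yields $E[|y_{i+1}|^p\mid y_{1:i}]=\frac{c}{c+i}|\mu_{\theta_i}|^{(p)}+\frac{i}{c+i}\nu_i$. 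The crucial step is to control the parametric absolute moment $|\mu_{\theta_i}|^{(p)}$: since $\theta_i$ is the method-of-moments estimator we have $\mu_{\theta_i}^{(p)}=\mu_i^{(p)}$, hence $|\mu_{\theta_i}^{(p)}|=|\mu_i^{(p)}|\le\nu_i$, and the linear-growth hypothesis $|\mu_\theta|^{(p)}\le a|\mu_\theta^{(p)}|+b$ then gives $|\mu_{\theta_i}|^{(p)}\le a\nu_i+b$. Substituting back, a short computation produces $E[\nu_{i+1}\mid y_{1:i}]\le\delta_i\nu_i+\varepsilon_i$ with $\delta_i=1+\frac{c(a-1)}{(i+1)(c+i)}$ and $\varepsilon_i=\frac{cb}{(i+1)(c+i)}$; note $\delta_i>0$ for all $i$ (since $a\ge0$ and $(i+1)(c+i)>c$), $|\delta_i-1|=O(i^{-2})$, and $\sum_i\varepsilon_i<\infty$.

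To finish, set $m_i:=E[\nu_i\mid y_{1:n}]$, so that $m_{i+1}\le\delta_i m_i+\varepsilon_i$ with $m_n=\nu_n<\infty$. Because $\sum_{i\ge n}|\delta_i-1|<\infty$ and $\sum_{i\ge n}\varepsilon_i<\infty$, a discrete Grönwall unrolling gives $m_i\le\big(\prod_{j\ge n}(1+|\delta_j-1|)\big)\big(m_n+\sum_{j\ge n}\varepsilon_j\big)<\infty$ uniformly in $i$ (when $a\le1$ one has $\delta_i\le1$ and the argument simplifies to a supermartingale-plus-summable-drift bound). This establishes $\sup_{i\ge n}E[\nu_i\mid y_{1:n}]<\infty$, hence $\sup_{i\ge n}E[\,|\mu_i^{(k)}|\mid y_{1:n}]<\infty$ for each $k\le p$, and Lemma~\ref{lem:1} then delivers the a.s.\ convergence $\mu_i^{(k)}\to\mu_\infty^{(k)}$ for all $k\in\{1,\dots,p\}$.

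The main obstacle is exactly the odd case: the parametric absolute moment $|\mu_{\theta_i}|^{(p)}$ is neither a martingale nor monotone in the matched moment $\mu_i^{(p)}$, so some structural assumption is unavoidable, and the linear-growth condition is precisely what lets the $\nu_i$-recursion close; once that is in place the only delicate point is checking that the multiplicative factors $\delta_i$ accumulate to a finite product, which is where the $i^{-2}$-type decay of $\delta_i-1$ (inherited from the $i/(c+i)$ weights) is used.
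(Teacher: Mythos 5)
Your proof is correct and takes essentially the same route as the paper's: the even case rests on the non-negative martingale $\mu_i^{(p)}$, and the odd case on the one-step recursion for the empirical absolute $p$-th moment, $E[\nu_{i+1}\mid\mathcal{F}_i]\le(1+O(i^{-2}))\,\nu_i+O(i^{-2})$, obtained from moment matching ($\mu_{\theta_i}^{(p)}=\mu_i^{(p)}$) and the linear-growth hypothesis, followed by iterating the expectation — exactly the paper's argument. The only (harmless) differences are cosmetic: you bound the lower moments via the elementary inequality $|y|^k\le 1+|y|^p$ and handle all $k\le p$ at once, whereas the paper uses a Jensen--H\"{o}lder step in the even case and reduces $k<p$ to the even moment $p-1$ in the odd case.
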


We now outline the intuition for the required assumptions above. When $p$ is even, we can rely on the non-negative martingale to ensure that $\mu_n^{(p)}$ is bounded in $L_1$. When $p$ is odd however, the absolute value of the moment $|\mu_n^{(p)}|$ is not a martingale. We thus require the absolute moment of the parametric model to be sufficiently regular in order to control the expectation of $|\mu_n^{(p)}|$. Convergence of lower moments $k < p$ then follows from standard applications of Jensen's and H\"{o}lder's inequalities. Similar assumptions are used in \cite{fong2024asymptotics} for the score function in the parametric MP setting. 
We highlight the key property that the required condition is only a function of the parametric component $f_{\theta}$, so it is easy to verify in practice. We demonstrate verification of the above assumption in Appendix \ref{sec:app_proofth1} for the case where $p$ is odd, and note that it is usually simpler to assume the existence of the $(p+1)$-th moment.

 Although the parameter and predictive CDF are no longer martingales, we have shown that the MP over moments exists. In the subsequent sections, we will additionally show that having the moment martingale and appropriate regularity conditions is sufficient for showing the existence of the posterior distribution on the parameter, CDF and higher-order moments, thereby justifying the semiparametric nature of the moment MP.
 
 For the remainder of the paper,  the term ``parametric model'' will refer to $f_\theta$, and the term ``mixture predictive'' will refer to the $p_i(y)$ as given in (\ref{eq:2}).  Section 3.2 will focus on the convergence of the parametric component, and Section 3.3 will study the convergence of the nonparametric component and higher order moments (i.e. with order higher than $p$).

\subsection{Convergence of the parametric component}

Given a.s. convergence of the first $p$ moments, showing the convergence of $\theta$ is a simple application of the continuous mapping theorem. This is stated formally below.
\begin{corollary} \label{cor:paraconverge}
 Assume the conditions of Theorem \ref{thm:1} are met, and additionally assume that $\theta$ can be expressed as a continuous function of the $p$ moments $\theta= h(\mu^{(1)},...,\mu^{(p)})$ for all values of $\mu^{(k)}$. Then we have $\theta_i \to \theta_\infty$ a.s.
\end{corollary}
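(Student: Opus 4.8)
The plan is to reduce the claim to a pathwise application of the continuous mapping theorem. First I would invoke Theorem \ref{thm:1}: under the stated conditions, for each $k \in \{1,\ldots,p\}$ there is an event of probability one on which $\mu_i^{(k)} \to \mu_\infty^{(k)}$, and since $p$ is finite, the intersection of these $p$ events is again an event of probability one. On that intersection the $\mathbb{R}^p$-valued sequence $M_i := (\mu_i^{(1)},\ldots,\mu_i^{(p)})$ converges to $M_\infty := (\mu_\infty^{(1)},\ldots,\mu_\infty^{(p)})$, where each coordinate of $M_\infty$ is finite a.s.

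Next I would use the method-of-moments construction underlying Proposition \ref{prop:mom_mart} and Algorithm \ref{alg:ps}: at every step $i \geq n$ the plug-in parameter is obtained by solving the moment equations (\ref{eq:mom_cond}), so $\theta_i = h(M_i)$ with the same map $h$ hypothesized in the statement. By assumption $h$ is defined and continuous on the entire range of attainable moment vectors, in particular at $M_\infty$. Composing, on the probability-one event above we get $\theta_i = h(M_i) \to h(M_\infty)$; setting $\theta_\infty := h(M_\infty)$ yields $\theta_i \to \theta_\infty$ a.s., which is the assertion.

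I do not anticipate a genuine obstacle — the argument is essentially a restatement of continuity along an almost surely convergent sequence. The only points deserving a sentence of care are: (i) that componentwise a.s. convergence gives a.s. convergence of the vector $M_i$ (immediate, being a finite intersection of probability-one events); and (ii) that $h$ is evaluated at a valid argument, which is guaranteed by the hypothesis that $h$ is continuous for all values of $\mu^{(k)}$ together with the finiteness of the limiting moments supplied by Theorem \ref{thm:1}. A useful by-product worth noting is that $\theta_\infty$ is then automatically the method-of-moments solution associated with the limiting moment vector $M_\infty$, which is convenient for the convergence statements about the CDF and higher-order moments in Section 3.3.
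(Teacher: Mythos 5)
Your proposal is correct and is essentially the paper's own argument: the paper likewise obtains almost sure convergence of the moment vector from Theorem \ref{thm:1} and then applies the continuous mapping theorem to $\theta_i = h(\mu_i^{(1)},\ldots,\mu_i^{(p)})$. The only additions you make — intersecting the $p$ probability-one events and noting that $\theta_\infty = h(\mu_\infty^{(1)},\ldots,\mu_\infty^{(p)})$ — are harmless elaborations of the same route.
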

The continuity condition in Corollary \ref{cor:paraconverge} is well studied in the literature. The detailed condition for the existence and continuity of the function $h$ is for example stated in \citet[Lemma 4.3]{van2000}. In short, suppose $g$ is a vector valued function that maps parameters to the theoretical moments, that is $g(\theta)=(\mu^{(1)},...,\mu^{(p)}\,)$. If $g$ is continuously differentiable with non-singular derivative, then the $g^{-1}:=h$ is well-defined and continuous. In general, most distributions satisfy this property \citep{van2000}. In conclusion, we have shown that since all $p$ moments converge, the parameters of the parametric component also converge under predictive resampling.

\subsection{Convergence of the nonparametric component and higher order moments}

We now take the natural next step, and show that nonparametric components
of the moment MP also exist, specifically regarding the CDF and higher order moments. We first present the existence of the limiting random probability distribution $P_\infty$ in the following proposition: 
\begin{prop}\label{prop:exist} 
The sequence $P_{n+1}(\cdot),P_{n+2}(\cdot),...$ converges weakly to a random $P_{\infty}(\cdot)$ a.s., where $P_{\infty}(\cdot)$  is a random distribution.
\end{prop}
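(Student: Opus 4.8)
The plan is to reduce weak convergence of the random CDFs $P_i(\cdot)$ to pointwise a.s. convergence of $P_i(y)$ for each fixed $y$, and then obtain the latter from a martingale argument analogous to the nonparametric MP case. First I would write out the recursion for the predictive CDF. From the mixture predictive \eqref{eq:2}, sampling $y_{i+1}\sim p_i$ and updating the empirical measure gives, for each fixed $y\in\mathbb{R}$,
\begin{align*}
P_{i+1}(y) = \frac{c}{c+i+1}F_{\theta_{i+1}}(y) + \frac{i+1}{c+i+1}\,\mathbb{P}_{i+1}(\mathbbm{1}(\cdot\le y)),
\end{align*}
where $F_\theta$ is the parametric CDF and $\mathbb{P}_{i+1}$ is the updated empirical measure. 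Unlike in the pure nonparametric MP, $P_i(y)$ is \emph{not} itself a martingale here, because the parametric weight $c/(c+i)$ decays and $\theta_i$ moves. So the key idea is to split $P_i(y)$ into the parametric piece, whose weight $c/(c+i)\to 0$, and the empirical piece; the empirical CDF value is a bounded (by $1$), nonnegative quantity whose increments have conditional mean governed by the moment-martingale structure.

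The main technical step is to show $\mathbb{P}_i(\mathbbm{1}(\cdot\le y))$ converges a.s. for each fixed $y$. I would argue this is again a bounded martingale-type sequence: conditionally on $y_{1:i}$, drawing $y_{i+1}\sim p_i$ and forming $\mathbb{P}_{i+1}$, the quantity $G_i(y):=\mathbb{P}_i(\mathbbm{1}(\cdot\le y))$ has increment with conditional mean $\frac{1}{i+1}\big(P_i(y) - G_i(y)\big) = \frac{1}{i+1}\cdot\frac{c}{c+i}\big(F_{\theta_i}(y)-G_i(y)\big)$, which is $O(i^{-2})$ and summable. Hence $G_i(y)$ is a quasi-martingale (it differs from a martingale by a summable-variation drift); since it is bounded in $[0,1]$, a quasi-martingale convergence theorem gives $G_i(y)\to G_\infty(y)$ a.s. Combined with $\tfrac{c}{c+i}F_{\theta_i}(y)\to 0$ (since $F_{\theta_i}\in[0,1]$ and the weight vanishes — here Corollary~\ref{cor:paraconverge} even gives $F_{\theta_i}(y)$ convergent, but boundedness already suffices), we get $P_i(y)\to G_\infty(y)=:P_\infty(y)$ a.s. for each $y$.

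It remains to upgrade pointwise a.s. convergence to weak a.s. convergence and to check $P_\infty$ is a genuine (random) CDF. Running the argument simultaneously over all rational $y$ on a single probability-one event, $P_\infty$ is nondecreasing and right-continuous-extendable on $\mathbb{Q}$; define $P_\infty$ on $\mathbb{R}$ by right limits. A standard argument (as invoked for the nonparametric MP via \citealp{berti2004limit}) then shows $P_i\Rightarrow P_\infty$ a.s. To confirm $P_\infty$ has total mass one, i.e.\ no mass escapes to $\pm\infty$, I would use the moment martingale: by Theorem~\ref{thm:1}, $\mu_i^{(k)}\to\mu_\infty^{(k)}$ a.s.\ for $k\le p$, so in particular $\mu_i^{(2)}$ (or $\mu_i^{(1)}$ together with a second moment, invoking case~1 of Theorem~\ref{thm:1} or the $p$-even padding) stays a.s.\ bounded along the sequence, which gives uniform integrability/tightness and rules out mass escaping; hence $P_\infty(y)\to 1$ as $y\to\infty$ and $\to 0$ as $y\to-\infty$ a.s. The main obstacle I anticipate is the quasi-martingale bookkeeping for $G_i(y)$ — carefully verifying the drift has a.s.\ summable variation uniformly enough to apply the convergence theorem — together with the measurable-selection details needed to get a single exceptional null set valid for all $y$; both are routine but must be done with care.
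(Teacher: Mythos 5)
Your core computation is sound and mirrors the paper's own argument: you split $P_i$ into the parametric piece, whose weight $c/(c+i)$ vanishes, and the empirical piece, and you show the empirical piece has conditional drift of size at most $c/((i+1)(c+i))$, hence converges a.s. by a summable-drift (Robbins--Siegmund / quasi-martingale) argument. The paper does exactly this, except it works with $\mathbb{P}_i[g]$ for bounded continuous $g$ (split into $g^+$ and $g^-$ to obtain nonnegative almost supermartingales) rather than with the indicators $\mathbbm{1}(\cdot\le y)$; since only boundedness of the test function is used, that part of your proposal is interchangeable with theirs.

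The gap is in the final upgrade to weak convergence to a genuine random distribution. The paper concludes by invoking \citet{berti2006almost} (Theorem 2.2): a.s. convergence of $\int g\,dP_i$ for every bounded continuous $g$ already yields weak a.s. convergence to a random probability measure, with no assumptions whatsoever on $f_\theta$ or on how $\theta_i$ is estimated -- and the surrounding text explicitly emphasizes that Proposition \ref{prop:exist} is assumption-free. Your route instead establishes pointwise convergence of the CDF on the rationals (note the appeal to \citet{berti2004limit} is not automatic here, since the mixture predictive is not a CDF-martingale and the sequence is not c.i.d., so you really are left with the rationals-plus-tightness construction), and you then rule out mass escaping to $\pm\infty$ by invoking Theorem \ref{thm:1}. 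That imports hypotheses the proposition does not make: existence of the method-of-moments estimator and, effectively, $p\ge 2$ together with the even/odd-$p$ conditions. With $p=1$, for instance, the single matched moment does not control tails and your tightness step has nothing to stand on. Even when $p\ge 2$ you must also control the parametric tails uniformly in $i$; this does work, via moment matching $\mu^{(2)}_{\theta_i}=\mu^{(2)}_i$ and Chebyshev, but it needs to be said explicitly rather than gestured at as ``uniform integrability''. So as written your proposal proves Proposition \ref{prop:exist} only under the conditions of Theorem \ref{thm:1}, a strictly weaker statement than the paper's; to recover the assumption-free claim you either need the bounded-continuous-test-function route with the result of \citet{berti2006almost}, as the paper uses, or some other argument that the limit has total mass one that does not lean on the moment assumptions.
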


Proposition \ref{prop:exist} thus confirms that our moment martingale posterior satisfies the convergence condition from \cite{fong2023martingale}. Consequently, $P_\infty$ is defined via the sequence of predictives, allowing us to regard $P_\infty$ directly as the random distribution function without requiring a representation based on Bayes' rule. More importantly, we do not impose any assumptions in Proposition \ref{prop:exist}. Specifically, irrespective of the parametric family $f_\theta$ or the method used to estimate the parameter, our predictive mixture will converge to a random CDF.

For existence of the moment MP on higher order moments (larger than $p$), additional conditions are needed since weak convergence does not imply the convergence of moments (e.g. \cite{van2000}). However, we are able to show that higher order moments are also almost supermartingales under quite weak conditions, with convergence guaranteed by the result of \cite{robbins1971convergence}.
\begin{prop}\label{prop:higher} 
Assume the conditions of Theorem \ref{thm:1} are met. For $k > p$, suppose the $k$-th parametric absolute moment  satisfies
$|\mu_{\theta}|^{(k)} \leq g(\theta)$
where $g(\theta)$ is a continuous function and $g(\theta) < \infty$ for each $\theta$. Then the $k$-th empirical moment satisfies $\mu^{(k)}_i\to \mu^{(k)}_\infty$ a.s.
\end{prop}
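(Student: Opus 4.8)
The plan is to write down the one-step recursion for the $k$-th empirical moment under predictive resampling, observe that --- unlike the first $p$ moments of Proposition~\ref{prop:mom_mart} --- it carries a non-vanishing drift, and then bound that drift tightly enough to invoke the almost-supermartingale convergence theorem of \cite{robbins1971convergence}. Write $\mathcal F_i:=\sigma(y_{1:i})$. Drawing $y_{i+1}\sim p_i$ from (\ref{eq:2}) gives $E[y_{i+1}^k\mid\mathcal F_i]=\frac{c}{c+i}\mu^{(k)}_{\theta_i}+\frac{i}{c+i}\mu^{(k)}_i$, which together with the update $\mu^{(k)}_{i+1}=\frac{i}{i+1}\mu^{(k)}_i+\frac{1}{i+1}y_{i+1}^k$ yields
\[
E\bigl[\mu^{(k)}_{i+1}\mid\mathcal F_i\bigr]=\mu^{(k)}_i+\frac{c}{(c+i)(i+1)}\bigl(\mu^{(k)}_{\theta_i}-\mu^{(k)}_i\bigr).
\]
Since $k>p$ the correction term does not vanish, so $\mu^{(k)}_i$ is not a martingale and Lemma~\ref{lem:1} does not apply to it directly. (Every conditional expectation above is a.s. finite, because $|\mu_{\theta_i}|^{(k)}\le g(\theta_i)<\infty$ a.s. and each $\mu^{(k)}_i$ is a finite average, so the recursion is well posed.)

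The first main step is to show the compensator increments are a.s. summable. By Theorem~\ref{thm:1} the first $p$ moments converge a.s.; taking the method-of-moments map $h$ to be continuous (the additional hypothesis of Corollary~\ref{cor:paraconverge}), it follows that $\theta_i\to\theta_\infty$ a.s., and continuity of $g$ then gives $g(\theta_i)\to g(\theta_\infty)<\infty$, so $M:=\sup_{i\ge n} g(\theta_i)<\infty$ a.s. Since $\frac{c}{(c+i)(i+1)}=O(i^{-2})$ is summable, $\sum_{i\ge n}\frac{c}{(c+i)(i+1)}\,g(\theta_i)<\infty$ a.s.

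The second main step handles the sign of $\mu^{(k)}_i$, which for odd $k$ may be negative, so $|\mu^{(k)}_i|$ is not itself an almost-supermartingale; I would instead split it into non-negative pieces. Set $U^{\pm}_i:=\frac1i\sum_{j=1}^i\bigl(|y_j|^k\pm y_j^k\bigr)\ge 0$, so $\mu^{(k)}_i=\tfrac12(U^{+}_i-U^{-}_i)$. Repeating the computation above with $y_j^k$ replaced by $|y_j|^k\pm y_j^k$ gives
\[
E\bigl[U^{\pm}_{i+1}\mid\mathcal F_i\bigr]=U^{\pm}_i+\frac{c}{(c+i)(i+1)}\Bigl(\bigl(|\mu_{\theta_i}|^{(k)}\pm\mu^{(k)}_{\theta_i}\bigr)-U^{\pm}_i\Bigr)\le U^{\pm}_i+\frac{2c}{(c+i)(i+1)}\,g(\theta_i),
\]
using $|\mu^{(k)}_{\theta_i}|\le|\mu_{\theta_i}|^{(k)}\le g(\theta_i)$ and $U^{\pm}_i\ge 0$. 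Thus each $U^{\pm}_i$ is a non-negative process whose conditional increment is dominated by an $\mathcal F_i$-measurable quantity that is a.s. summable by the previous step; the Robbins--Siegmund theorem (\cite{robbins1971convergence}) then gives $U^{\pm}_i\to U^{\pm}_\infty$ a.s. with finite limits, whence $\mu^{(k)}_i=\tfrac12(U^{+}_i-U^{-}_i)\to\tfrac12(U^{+}_\infty-U^{-}_\infty)=:\mu^{(k)}_\infty$ a.s.

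I expect the main obstacle to be controlling the drift injected by the empirical component. For $k\le p$ the moments are exact martingales, but for $k>p$ one must simultaneously (i) bound $g(\theta_i)$ along almost every path --- there is no crude deterministic bound, so this leans on convergence of $\theta_i$ together with continuity of $g$ --- and (ii) circumvent the failure of $|\mu^{(k)}_i|$ to be an almost-supermartingale when $k$ is odd, which is exactly what the $U^{\pm}$ splitting into non-negative summands is designed to fix. Note that we assert only the existence of $\mu^{(k)}_\infty$, with no unbiasedness analogue of Lemma~\ref{lem:2}; this is consistent both with $\mu^{(k)}_i$ not being a martingale and with the remark that weak convergence of $P_i$ alone does not transfer higher-order moments.
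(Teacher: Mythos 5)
Your proof is correct and follows essentially the same route as the paper's: write the one-step recursion, recognize an almost supermartingale, bound the drift via $\theta_i \to \theta_\infty$ and continuity of $g$, and invoke Robbins--Siegmund. Your splitting $U^{\pm}_i = \frac{1}{i}\sum_j(|y_j|^k \pm y_j^k)$ is just twice the positive/negative-part decomposition $\mu_i^{(k)\pm}$ used in the paper, so the two arguments coincide up to a factor of $2$.
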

We provide an example of verifying the above assumption in Appendix \ref{sec:app_proofhigher}, and once again, if $k$ is odd, it may be simpler to verify the above assumption using the even moment $k+1$ if one wants to avoid absolute moments. Proposition \ref{prop:higher} thus allows us to assess the posterior uncertainty of any functions of higher order moments. For example, if the parametric model follows a Gaussian distribution (Appendix \ref{sec:app_proofhigher}), we can now study the posterior uncertainty on skewness and kurtosis.

\subsection{Comparison to the parametric and nonparametric MP}

Compared to the nonparametric MP and the BB, the moment MP allows the incorporation of the prior information on the distribution function $F$ through the parametric component $f_\theta$. For instance, if we choose a Gaussian parametric component, this provides strong prior information that the kurtosis is 3, so the posterior kurtosis obtained by the moment MP will be noticeably different to that of the BB, which we demonstrate later. Hence, $c$ not only quantifies the level of confidence for the parametric model, but can also be understood as the amount of prior information that we want to incorporate. We highlight however, that our method does not provide strong prior information on $\theta$, as we are still relying on empirical moments to estimate it.

Furthermore, the parametric component can offer regularization for the nonparametric component, which is beneficial when $n$ is small.
As a result, our mixture predictive is able to produce samples that extend beyond the convex hull of the observed data, which produces reasonable posterior distribution even when $n$ is small. On the contrary, we showed in Example \ref{exp1} that the BB posterior is constrained to point masses focused on a small number of observations. The regularization offered by the parametric component greatly alleviates this pathological behavior, which we demonstrate in Section 6.

On the other hand, the nonparametric component of the moment MP also means that we do not rely on the assumption that $f_\theta$ is well-specified, as we recover the BB asymptotically. This robustness to model misspecification is not too surprising given the connections to the MDP \citep{Antoniak1974,Lyddon2018}. Finally, posterior computation is straightforward to compute, especially if an explicit function that maps the moments to $\theta$ is available. This makes posterior sampling potentially significantly faster for the moment MP than traditional Bayes or the MDP which requires MCMC.

Despite its advantages, this moment MP has some limitations. One issue that arises is when the explicit function between the parameters and moments is unavailable. In some cases, such as the discrete uniform distribution, although the explicit functions between the parameters (lower/upper bounds) and the moments exist, the resulting solutions are often not integers, and rounding complicates the proof of martingale convergence. Another obvious problem is that if the moments are undefined (e.g. the Cauchy distribution), it cannot be used as the parametric component of our mixture predictive.

\section{Energy score for model selection}

A critical task is the selection of the hyperparameter $c$ for our mixture predictive (\ref{eq:2}). Setting
 $c =0$ means that we discard the parametric model completely, recovering the BB, whereas for $c =\infty$ we obtain the parametric martingale posterior. \cite{Lyddon2018} set $c$ by tuning the a priori variance of the population mean, but their approach relies on the prior distribution $\pi(\theta)$ and is thus not data driven. 
 Given that the moment MP is primarily a predictive framework, we instead opt to identify a value of $c$ that results in optimal predictions from the mixture predictive $P_n$.

In standard Bayesian practice, the model quality is typically quantified through the marginal likelihood,
$p_\mathcal{M}(y_{1:n})=\int p_n(y_{1:n}\mid \theta)\, \pi(\theta)\, d\theta.$
 Choosing $c$ to maximize the marginal likelihood would then correspond to empirical Bayes. Within the predictive setting in the absence of a prior, the marginal likelihood can be interpreted as a \textit{prequential score} \citep{dawid1984present} based on the factorization
$\log p_\mathcal{M}(y_{1:n}) = \sum_{i = 1}^n \log p(y_i \mid y_{1:i-1})$.
The scoring rule used for measuring the quality of the predictive is the log score \citep{gneiting2007strictly}, which has close connections to the Kullback-Leibler (KL) divergence. The marginal likelihood is also closely connected to cross-validation (CV) \citep{gneiting2007strictly, fong2020marginal}, and a popular alternative to the marginal likelihood is to utilize leave-one-out CV (LOOCV), where 
\begin{align}\label{eq:marginalLikelihoodLOOCV}
    S = \sum_{i=1}^n \log p_{n-1}(y_i \mid y_{-i}) 
\end{align}
 where $y_{-i}= \{y_1,...,y_n\}\setminus \{y_i\}$. LOOCV with the log score is known to alleviate some of the deficiencies of the marginal likelihood \citep{Vehtari2012}, and is thus an immediate contender for selecting $c$.

However, an immediate issue is that our predictive is a mixture of a discrete and continuous component, so such a log score is not well-defined on hold out data due to the discrete nonparametric component. 
This is unsurprising, as it is well known that (\ref{eq:marginalLikelihoodLOOCV}) approximates the negative of the KL divergence between $p_n(\cdot)$ and the unknown data generating distribution, and the KL divergence has strict requirements of support overlap. As a result, for the moment MP, identifying an alternative score is equivalent to identifying an alternative distance measure that allows for mixed predictive distributions. 

In Section 4.1, we will motivate the choice of the {energy distance} \citep{rizzo2016} as an alternative to the KL divergence, which results in the \textit{energy score}.  In Section 4.2, we will then discuss the asymptotic theory related to choosing $c$ using an estimate of the expected energy score. 

\subsection{The energy score}

An alternative popular choice of distances between probability measures is the class of integral pseudo-probability metrics \cite{muller1997integral}, which is defined as: 
$$\text{IPM}(\mathbb{Q},\mathbb{P})=\sup_{f\in\mathcal{F}}\Big|\int_\mathcal{X}f(x)\,d\mathbb{Q}(x)-\int_\mathcal{X}f(x)\,d\mathbb{P}(x)\Big|$$
IPMs can be thought of as comparing a family of generalized moments indexed by the class $\mathcal{F}$. The key is that $\mathbb{Q}$ and $\mathbb{P}$ need not have the same support as we are only comparing expectations, and the connection to moments is particularly elegant in our setting.

There are two commonly used IPMs, specifically the Wasserstein distance \citep{vallender1974calculation} and maximum mean discrepancy (MMD) \citep{smola2006maximum}. We propose the use of the energy distance \citep{rizzo2016}, which is a specific instance of the MMD employing $-|x-y|$ as the kernel. We opt to use this specific choice of kernel as it is tuning-free, and the energy distance is known to have robust properties \citep{gneiting2007strictly,huk2024quasi}. As the predictive score is the negative of the distance, we will use the energy score as the scoring rule for our LOOCV procedure. In the remainder of this section, we present the formal definition of the energy score and provide a rationale for its use. Subsequently, we describe the procedure for estimating the expected energy score using LOOCV.

The energy score between a model $P_n$ and a datum $y_i$ is defined as: 
$$s(y_i,P_n)=  -2E_{X\sim P_n}|y_i-X| + E_{X,X'\iid P_n}|X-X'|.$$
This score is easily interpretable: the first expectation $E_{X\sim P_n}|y_i-X|$ can be understood as the discrepancy between the distribution $P_n$ and the data $y_i$. The negative of this term thus quantifies the model's ability to predict $y_i$. The second term is to guarantee that it is a strictly proper scoring rule \citep{gneiting2007strictly}, which will be discussed shortly. Again, the connection of the energy score to (generalized) moments of $P_n$ is elegant within our moment MP framework.

Assuming $y_i \sim \mathbb{Q}$, the expected energy score between $\mathbb{Q}$ and $P_n$ is 
\begin{flalign}
S(\mathbb{Q},P_n) =E_{Y\sim\mathbb{Q}}[s(Y,P_n)]. \label{eq:theoEnergy}
\end{flalign}
There are several reasons for selecting the expected energy score as our evaluation metric. Firstly, it is based on the expected value of random variables coming from $\mathbb{Q}$ and $P_n$ so unlike the log score, it is perfectly applicable to our predictive which is a mixture of discrete and continuous components. The expected energy score is also a strictly proper scoring rule \citep{gneiting2007strictly}, which means that $$S(\mathbb{Q},\mathbb{Q}) \geq S(\mathbb{Q},P) \,\,\,\,\, \text{for any } P\neq \mathbb{Q},$$ with equality holding if and only if $P = \mathbb{Q}$. 

Computation is another important reason - the energy score can be computed in quadratic time in $n$ while the computation time of Wasserstein distance is cubic in $n$ \citep{Dellaporta2022}. Furthermore, as we will see shortly in Section \ref{sec:asymp}, the expected energy score is a strictly concave function of $\lambda={c}/{(c+n)}$, which guarantees that there is a unique value of $c$ that maximizes the score. In fact, we can write $\lambda$ as an {explicit} function of quantities that can be estimated from the observations, which is particularly helpful for both computation and theoretical study. See Proposition \ref{prop:concave} and Appendix \ref{sec:app_energy} for further details.

We now outline how to practically choosing $c$ given observations $\{y_1,\ldots,y_n\}$, as we do not have access to $\mathbb{Q}$ directly. For each $y_i\in\{y_1,...,y_n\}$, we define $y_{-i} = \{y_1,...,y_n\}\setminus \{y_i\}$.  Using  $y_{-i}$, we can derive the $p$ moments: $$\mu_{-i}^{(k)}=\frac{1}{n-1}\sum_{j\neq i} y^k_j \quad  \forall k\in\{1,...,p\}.$$
Subsequently, we can estimate the parameters $\theta_{-i}$ using the method of moments as before. The predictive model is then fitted as  
$$p_{-i}(y) =\frac{c}{c+n-1} f_{\theta_{-i}}(y) + \frac{n}{c+n-1}\mathbb{P}_{-i}.$$
where $\mathbb{P}_{-i}$ denotes the empirical distribution of $y_{-i}$.  To assess the predictive performance of the model for the datum $y_i$, we calculate 
$s(y_i, P_{-i}),$
where $P_{-i}$ is the CDF of $p_{-i}$. We then seek to find $c$ that maximizes
\begin{flalign}
 \widehat{S}(\mathbb{Q},P_n) =\frac{1}{n}\sum_{i=1}^n s(y_i, P_{-i}) \label{eq:loocvEnergy}.
\end{flalign}
The cumulative energy score $\sum_{i=1}^n s(y_i,P_{-i})$ in (\ref{eq:loocvEnergy}) can be interpreted in a similar fashion to  the marginal likelihood in (\ref{eq:marginalLikelihoodLOOCV}), where the log score is replaced with the energy score. Alternatively, we may also determine $c$ using a hold out estimate, prequential scoring or $K$-fold CV. We suggest to use LOOCV for small/ medium sample sizes, and a hold out estimate or $K$-fold CV for larger datasets due to their computational efficiency. We will shortly study the asymptotic properties of a hold out estimate of the optimal $c$. Finally, we highlight that $c$ is calculated only once prior to the initiation of predictive resampling and remains constant throughout the bootstrap process, thereby saving computational costs.

\subsection{Asymptotic theory} \label{sec:asymp}
 In practice, the presence of parametric model helps to regularize the BB, which will particularly helpful when sample size is small as we will see in Section 6. However, it would be desirable for our selection procedure for $c$ to be consistent. Specifically, we would like that our selection strategy yields $c=O(1)$ asymptotically under model misspecification, so the nonparametric part dominates. In this subsection, we study the asymptotics of our selection procedure for $c$ that maximizes an estimate of the expected energy score (\ref{eq:theoEnergy}).

 We begin with introducing some notation. Let {$\lambda={c}/{(c+n)}$}, and let $\mathbb{Q}$ be the true data generating mechanism with density $q(\cdot)$.  Suppose we observe $n$ i.i.d. samples $y_{1:n} \iid \mathbb{Q}$, and let $\theta^*$ denote the limit (in probability) of the method of moments estimator $\theta_n$, which exists under standard regularity conditions, e.g. the first $p$ moments of $\mathbb{Q}$ are finite and $h$ is continuous.
We define our parametric model as being misspecified if $f_{\theta^*}(\cdot)$ is not almost everywhere equal to $q(\cdot)$. Now define the optimal weighting as
\begin{align}\label{eq:opt_lam}
{\lambda_n = \argmax_{\lambda \in [0,1]} \, S\left(\mathbb{Q}, P_n\right),} 
\end{align}
where $P_n$ is our mixture predictive. In practice of course, we do not have access to $\mathbb{Q}$, so it is natural to utilize data splitting to estimate the above. Consider a partition of the $n$ observations into $n_t$ training and $n_v$ validation data points, where $n_t + n_v = n$. We denote these as $y_t$ and $y_v$ respectively where $y_{1:n} = \{y_t, y_v\}$, and {the mixture predictive $P_{n_t}$ is fitted to $y_t$ with weighting $\lambda = c/(c+n_t)$}. A natural estimate of (\ref{eq:opt_lam}) is then
\begin{align}\label{eq:est_lam}
{\widehat{\lambda}_n = \argmax_{\lambda \in [0,1]}\, S\left(\mathbb{P}_v, P_{n_t}\right),}
\end{align}
where $\mathbb{P}_v$ is the empirical distribution of $y_v$, and $P_{n_t}$ is the mixture predictive estimated from $y_t$.  Fortunately, both (\ref{eq:opt_lam}) and (\ref{eq:est_lam}) are straightforward to optimize as described below.

\begin{prop}\label{prop:concave}
    The energy scores $S\left(\mathbb{Q}, P_n\right)$ and $S\left(\mathbb{P}_v, P_{n_t}\right)$ are strictly concave in $\lambda$, and thus have unique maxima. Furthermore, both $\lambda_n$ and $\widehat{\lambda}_n$ can be expressed in closed form.
\end{prop}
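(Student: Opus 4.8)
The plan is to expand the energy score $S(\mathbb{Q}, P_n)$ as an explicit quadratic function of $\lambda$ by exploiting the linearity of the mixture predictive $P_n = (1-\lambda) \mathbb{P}_n + \lambda f_{\theta_n}$ in $\lambda$. Writing $X, X' \iid P_n$, I would decompose $X = (1-B) Z + B W$ where $B \sim \text{Bernoulli}(\lambda)$ indicates whether the draw comes from the parametric component, $Z \sim \mathbb{P}_n$, and $W \sim f_{\theta_n}$, all independent. Then $E_{X,X' \iid P_n}|X - X'|$ expands into four terms weighted by $(1-\lambda)^2$, $2\lambda(1-\lambda)$, and $\lambda^2$, namely $E_{Z,Z'}|Z-Z'|$, $E_{Z,W}|Z-W|$, and $E_{W,W'}|W-W'|$; similarly $E_{X \sim P_n}|y - X| = (1-\lambda) E_Z|y-Z| + \lambda E_W|y-W|$ is affine in $\lambda$. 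Substituting into $s(y, P_n) = -2 E_X|y-X| + E_{X,X'}|X-X'|$ and taking the expectation over $Y \sim \mathbb{Q}$ yields $S(\mathbb{Q}, P_n) = A \lambda^2 + C\lambda + D$ for constants $A, C, D$ depending only on $\mathbb{Q}$, $\mathbb{P}_n$, and $f_{\theta_n}$.

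Next I would identify the leading coefficient. Collecting the $\lambda^2$ terms, $A = E_{W,W'}|W-W'| - 2 E_{Z,W}|Z-W| + E_{Z,Z'}|Z-Z'|$, which is precisely the negative of the (squared) energy distance $\mathcal{E}(\mathbb{P}_n, f_{\theta_n})$ between the empirical distribution and the fitted parametric density. Since the energy distance is a genuine metric (the kernel $-|x-y|$ is strictly conditionally negative definite on $\mathbb{R}$), this quantity is strictly negative whenever $\mathbb{P}_n \neq f_{\theta_n}$ as distributions — which holds almost surely for any $n$, since $\mathbb{P}_n$ is discrete and $f_{\theta_n}$ is continuous. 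Hence $A < 0$, so $S(\mathbb{Q}, P_n)$ is a strictly concave parabola in $\lambda$ and has a unique unconstrained maximizer $\lambda^{\text{unc}} = -C/(2A)$; the constrained maximizer over $[0,1]$ is then the closed-form clipped value $\lambda_n = \min\{1, \max\{0, -C/(2A)\}\}$. The identical argument applies verbatim to $S(\mathbb{P}_v, P_{n_t})$, replacing $\mathbb{Q}$ by $\mathbb{P}_v$, $\mathbb{P}_n$ by $\mathbb{P}_{n_t}$, and $\theta_n$ by $\theta_{n_t}$; the leading coefficient becomes $-\mathcal{E}(\mathbb{P}_{n_t}, f_{\theta_{n_t}}) < 0$ by the same discrete-vs-continuous argument.

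I would state the closed forms explicitly: with $\lambda$ replaced by $c/(c+n)$ (resp. $c/(c+n_t)$) one can invert to recover $c$ directly, and the detailed formulas for $A$ and $C$ in terms of the pairwise absolute-difference expectations would be deferred to Appendix \ref{sec:app_energy}. The main obstacle — really the only subtle point — is justifying strict negativity of the leading coefficient $A$: this requires (i) the strict conditional negative-definiteness of $-|x-y|$, so that $\mathcal{E}(\mathbb{P}_n, f_{\theta_n}) = 0$ iff the two measures coincide, and (ii) the observation that a discrete empirical measure can never equal an absolutely continuous parametric density, guaranteeing strictness for every realized sample rather than merely generically. A minor additional point is ensuring all the expectations $E|X - X'|$ are finite, which follows from the assumption that the first $p \geq 1$ moments (hence first absolute moments) of both $\mathbb{Q}$ and $f_{\theta^*}$, and of the method-of-moments-fitted $f_{\theta_n}$, are finite — already granted under the standing regularity conditions of this section.
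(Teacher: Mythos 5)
Your proposal is correct and matches the paper's own argument in all essentials: the paper likewise expands the (squared) energy distance as a quadratic in $\lambda$, identifies the leading coefficient as $D^2(F_{\theta_n},\mathbb{P}_n)$, invokes strictness of the energy distance together with the discrete-versus-continuous observation to get strict convexity of the distance (equivalently strict concavity of the score), and obtains the clipped closed-form maximizer by completing the square. The only cosmetic difference is that you work with the score and a Bernoulli-mixture representation while the paper works with the distance and explicit pairwise-expectation terms $\Psi_{A,\cdot},\Psi_{B,\cdot},\Psi_{C,n}$.
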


In the interest of space, we provide the explicit forms of $\lambda_n$ and $\widehat{\lambda}_n$, as well as an extension of the above proposition to cross-validation, in Appendix \ref{sec:app_energy}. We highlight that for finite samples, {the maximum of $S\left(\mathbb{P}_v, P_{n_t}\right)$ may lie outside the range of $[0,1]$, hence the need to clip $\widehat{\lambda}_n$ to $0$ or $1$, which corresponds to $c = 0$ or $c = \infty$ respectively.} 
 Finally, although $\widehat{\lambda}_n$ is estimated from $n_t$ data points, since it is a fraction, we can solve $\widehat{\lambda}_n = c/(c+n)$ to obtain the optimal  $c$ for the full dataset.

We are now ready to state our main consistency result regarding the estimate $\widehat{\lambda}_n$.

\begin{theorem} \label{theo:optimal} 
Assume that $\int y^2 \,q(y) \,dy<\infty,$ the parametric density $f_{\theta}(x)$ is a continuous function of $\theta$ for each $x$, and there exists a neighborhood $U$ of $\theta^*$ such that $\sup_{\theta \in U}\mu_\theta^{(2)} < \infty$. If $f_\theta$ is misspecified, then the weight estimate (\ref{eq:est_lam}) satisfies $\widehat{\lambda}_n\overset{p}\to 0$ as $n_t, n_v \to \infty$.  
\end{theorem}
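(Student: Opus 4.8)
The plan is to use the closed-form expression for $\widehat{\lambda}_n$ furnished by Proposition~\ref{prop:concave}. Expanding the energy score of the mixture predictive $P_{n_t} = \lambda f_{\theta_{n_t}} + (1-\lambda)\mathbb{P}_{n_t}$ evaluated against $\mathbb{P}_v$, and using bilinearity of the kernel $-|x-y|$, one gets a quadratic $S(\mathbb{P}_v, P_{n_t}) = C_n + B_n\lambda + A_n\lambda^2$ whose coefficients are explicit empirical functionals: averages of $|y-y'|$ within $y_v$, within $y_t$, and across the two samples, together with $E_{X\sim f_{\theta_{n_t}}}|y-X|$ averaged over $y_v$ and over $y_t$, and $E_{X,X'\sim f_{\theta_{n_t}}}|X-X'|$. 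Strict concavity gives $A_n<0$, so $\widehat{\lambda}_n = \mathrm{clip}_{[0,1]}\!\left(-B_n/(2A_n)\right)$. It therefore suffices to prove that $B_n \overset{p}{\to} 0$ and $A_n \overset{p}{\to} -D^2(\mathbb{Q}, f_{\theta^*})$, where $D^2(\mathbb{Q}, f_{\theta^*})$ is the population energy distance between $\mathbb{Q}$ and $f_{\theta^*}$: then $-B_n/(2A_n)\overset{p}{\to} 0$ because $A_n$ is bounded away from $0$, and continuity of the clipping map yields $\widehat{\lambda}_n \overset{p}{\to} 0$.

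To identify the limits of $A_n$ and $B_n$ I would handle the two kinds of terms separately. The terms built only from pairwise distances among the data are (generalized) $U$- and $V$-statistics, so the law of large numbers for $U$-statistics, valid here since $\int y^2 q<\infty$ implies $E_{\mathbb{Q}}|Y-Y'|<\infty$, gives that each such average converges in probability to $E_{Y,Y'\iid\mathbb{Q}}|Y-Y'|$. For the terms involving the fitted parametric component, recall $\theta_{n_t}\overset{p}{\to}\theta^*$ under the standing regularity conditions. Put $\psi_\theta(y)=E_{X\sim f_\theta}|y-X|$ and $\phi(\theta)=E_{X,X'\sim f_\theta}|X-X'|$; the elementary bound $\psi_\theta(y)\le |y|+(\mu_\theta^{(2)})^{1/2}$, the local bound $\sup_{\theta\in U}\mu_\theta^{(2)}<\infty$, and continuity of $\theta\mapsto f_\theta$ give, via Scheffé and dominated convergence, that $\phi$ is continuous at $\theta^*$ and that $\{\psi_\theta:\theta\in U\}$ is a pointwise-continuous family with an integrable envelope. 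A uniform law of large numbers over $U$ then yields $\sup_{\theta\in U}\big|\tfrac{1}{m}\sum_\ell \psi_\theta(z_\ell)-E_{\mathbb{Q}}\psi_\theta(Z)\big|\overset{p}{\to}0$ for $(z_\ell)$ the validation or the training sample; combined with $E_{\mathbb{Q}}\psi_\theta\to E_{\mathbb{Q}}\psi_{\theta^*}$ as $\theta\to\theta^*$, this gives $\tfrac{1}{n_v}\sum_i\psi_{\theta_{n_t}}(y_v^{(i)})\overset{p}{\to}E_{\mathbb{Q}}\psi_{\theta^*}$ and likewise for the $y_t$ average, while $\phi(\theta_{n_t})\overset{p}{\to}\phi(\theta^*)$ by continuous mapping. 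Substituting these limits, the cross terms between $f_{\theta^*}$ and $\mathbb{Q}$ cancel exactly, so $B_n\to 0$, whereas $A_n\to E_{X,X'\sim f_{\theta^*}}|X-X'|-2E_{Y\sim\mathbb{Q},X\sim f_{\theta^*}}|Y-X|+E_{Y,Y'\sim\mathbb{Q}}|Y-Y'| = -D^2(\mathbb{Q},f_{\theta^*})$, which is strictly negative precisely because misspecification means $f_{\theta^*}\ne q$ and the energy distance on $\mathbb{R}$ separates distinct laws.

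The main obstacle is the interchange of limits in the parametric terms: $\theta_{n_t}$ is random, and for the training-sample averages it is a function of the very points being averaged, so a pointwise law of large numbers does not suffice. This is exactly why the argument must route through a uniform law of large numbers on a neighborhood of $\theta^*$, and it is here that the three hypotheses of the theorem are used — a finite second moment of $\mathbb{Q}$, continuity of $\theta\mapsto f_\theta$, and local boundedness of $\mu_\theta^{(2)}$ near $\theta^*$ jointly supply the integrable envelope and the continuity needed both to invoke the uniform law and to pass the expectation through the limit $\theta\to\theta^*$. Everything else — the energy-score expansion, the reduction via Proposition~\ref{prop:concave}, and the closing continuous-mapping step through the clip — is routine. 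An equivalent and perhaps cleaner presentation would instead show directly that $\lambda\mapsto S(\mathbb{P}_v,P_{n_t})$ converges in probability, uniformly on $[0,1]$, to the strictly concave function $\lambda\mapsto -E_{Y,Y'\iid\mathbb{Q}}|Y-Y'|-\lambda^2 D^2(\mathbb{Q},f_{\theta^*})$, whose unique maximizer on $[0,1]$ is $0$, and then invoke the argmax continuous-mapping theorem.
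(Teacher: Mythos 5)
Your proposal is correct and follows essentially the same route as the paper: reduce via the closed form of Proposition~\ref{prop:concave} to showing that the linear coefficient (numerator) tends to $0$ and the quadratic coefficient (denominator) tends to $\pm D^2(\mathbb{Q},F_{\theta^*})\neq 0$ under misspecification, prove convergence of the data-only terms by U-statistic arguments and of the parametric terms using the second-moment and continuity hypotheses, and finish by continuity of the clipping map. The only difference is technical: where you invoke a uniform law of large numbers over a neighborhood of $\theta^*$ with an integrable envelope to handle the dependence of $\theta_{n_t}$ on the averaged training points, the paper instead decouples via a triangle inequality against the same functionals evaluated at $\theta^*$ and controls the gap through Scheff\'{e}'s lemma and uniform integrability of $|x|f_{\theta}(x)$ near $\theta^*$ — both devices rest on exactly the stated assumptions.
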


The above result shows that our selection procedure is consistent: if $f_\theta$ is misspecified, our estimated $\widehat{\lambda}_n$ will converge to 0, leaving us to rely on the BB solely, which is known to have excellent asymptotic properties \citep{Lyddon2018}. Note that we assume $n_v \to \infty$, so our above result applies to the case where $n_v \sim kn$ for $k \in (0,1)$. As $\widehat{\lambda}_n$ has an explicit form, the above result can be easily extended to CV when the validation set grows with $n$. Although the above result does not directly apply to LOOCV, we also expect it to hold under regularity conditions \citep{arlot2010survey}. In practice, we observe this to be the case, and we leave a formal verification for future work. Furthermore, from the definition of (\ref{eq:opt_lam}), the  predictive mixture $P_n$ optimally weighted with $\lambda_n$ will always dominate both the parametric and nonparametric components, that is 
\begin{align}\label{eq:dom}
   S(\mathbb{Q},P_n)\geq \max\left\{S(\mathbb{Q},F_{\theta_n}),S(\mathbb{Q},\mathbb{P}_n)\right\}.
\end{align}
While (\ref{eq:dom}) only holds for the theoretical value $\lambda_n$ and not necessarily for the estimated $\widehat{\lambda}_n$, we show in Section 6 that this seems to also hold empirically for $\widehat{\lambda}_n$ as well.

When $f_\theta$ is well-specified, the asymptotic behavior of  $\widehat{\lambda}_n$ is less significant as both the parametric and nonparametric components are well-specified, so our method will resort to evaluating which component is a better predictive fit for the particular sample. Furthermore, the parameter computed as functional of $P_n$, i.e. $\theta(P_n)$, will always be the same value irrespective of the value of $\widehat{\lambda}_n$, due to moment matching. We suspect that asymptotically, $\widehat{\lambda}_n$ will converge to a random variable in this setting; a similar study of a bootstrap consisting of a parametric and nonparametric component has been conducted by \cite{lee1994optimal}  where  $\widehat{\lambda}_n\to Z/(Z+1)$ for $Z\sim \chi^2_1$ when the model is well-specified.  We leave a detailed investigation of this for future work.

\section{Regression}\label{sec:reg}
We now consider extending the mixture predictive to the regression setting. Let the observed data be $\{y_i, x_i\}_{i=1}^n$, where $x_i\in\mathbb{R}^p$ denotes the independent variables and $y_i \in \mathcal{Y}$ denotes the scalar dependent variable.  In this section, we consider generalized linear models (GLM) as the parametric component, as this class has particularly clear connections to the method of moments in some settings. We first present our suggested mixture predictive, before showing that this choice is a natural extension to the moment matching argument.

\subsection{Predictive resampling}
Let $f_{\beta}(y \mid x)$ denote the conditional density of the GLM with link function $g$. For $\mu  = g^{-1}(x^\top \beta)$, we have $E_{\beta}[Y\mid x]  = \mu$ and $\text{Var}_{\beta}\left[Y \mid x\right] = \phi \,V(\mu)$, where $V(\mu)$ is the variance function and $\phi$ is the dispersion parameter. We then define the mixture predictive as: 
\begin{align}\label{eq:linreg}
p_i(y_{i+1} \mid x_{i+1})&=\frac{c}{c+i}\, f_{\beta_i}(y_{i+1}|x_{i+1})
 + \frac{i}{c+i}\mathbb{P}_{i}(y_{i+1} \mid x_{i+1}),
\end{align}
where $\beta_i$ is estimated from $\{y_{j}, x_{j}\}_{j = 1}^i$, and the nonparametric component is
$$\mathbb{P}_i(y_{i+1} \mid x_{i+1}) = \frac{\sum_{j = 1}^i \mathbbm{1}\left(x_j = x_{i+1}\right) \delta_{y_j}(y_{i+1})}{\sum_{j = 1}^i \mathbbm{1}\left(x_j = x_{i+1}\right)}\cdot$$
We will discuss estimation of $\beta_i$ in the next subsection.
The nonparametric conditional predictive $\mathbb{P}_{i}(y_{i+1} \mid x_{i+1})$ can be interpreted as follows: among all $x_j \in \{x_1,...,x_i\}$ that share the same value as $x_{i+1}$, we choose $y_{i+1}$ uniformly from  the values of $y_j$ which correspond to $x_j$. Interestingly, unlike the BB, we will end up with multiple values of $y$ for each unique value $x$, since there is a possibility of drawing from the parametric predictive. We demonstrate this in Section \ref{sec:logreg}, and provide related theory in Appendix \ref{sec:app_condit}.

For predictive resampling of the covariates, we will draw the new covariates $x_{i+1}$ from the BB, which is consistent with \cite{holmes2023statistical}, and its convergence is well understood. 
It turns out that this choice is necessary for a martingale condition due to the nonparametric component $\mathbb{P}_i(y_{i+1} \mid x_{i+1})$, which we discuss shortly. The predictive resampling scheme is then summarized in Algorithm \ref{alg:linreg}.

\begin{algorithm}[!h]
\caption{Predictive resampling for GLM}\label{alg:linreg}
\begin{algorithmic}[1]
\State Given observed data $\{y_i, x_i\}_{i=1}^n$ and number of posterior samples $B$
\State Compute the moment matrix $ \mu_n^x ,\,\mu_n^y $ and the OLS $\beta_n$
  \For{$i \gets n$ to $N-1$}
        \State  Sample $x_{i+1}$ by the Bayesian bootstrap 
        \State  Sample $y_{i+1}$ by (\ref{eq:linreg}) 
        \State  Update cross-moments:
        $\mu^{yx}_{i+1} \gets \frac{i}{i+1}\mu^{yx}_{i} + \frac{1}{i+1}x_{i+1}y_{i+1} $
        \State Update $\beta_{i+1}$ with $\{\mu^{y x}_{i+1}, x_{1:i+1}\}$
  \EndFor
\State Repeat Lines 3-8 $B$ times and \textbf{return} $\{\beta^{(1)}_N,...,\beta^{(B)}_N\}$
\end{algorithmic}
\end{algorithm}

\subsection{Convergence}
An elegant connection that we leverage is that if $g$ is the canonical link function, choosing $\beta_i$ as the MLE is a natural extension of the moment matching argument, ensuring compatibility with the nonparametric component. To begin, consider the empirical cross-moment term
\begin{align*}
     \mu_i^{yx} := \frac{1}{i}\sum_{j = 1}^i y_j x_j.
\end{align*}
The key fact is that if $\beta_i$ is the MLE for a GLM with a canonical link, it will satisfy 
\begin{align}\label{eq:cross_moment}
    \frac{1}{i}\sum_{j = 1}^i g^{-1}\left(x_j^\top \beta_i\right) x_j = \mu_i^{yx},
\end{align}
that is $\beta_i$ matches the cross-moments of the parametric and nonparametric component. This cross-moment matching property is exactly what guarantees that $\mu_i^{yx}$ is a martingale, as the parametric component satisfies $E_{\beta_i}[Y_{i+1}X_{i+1} \mid \mathcal{F}_i] = i^{-1}\sum_{j = 1}^i g^{-1}\left(x_j^\top \beta_i\right) x_j$. We thus opt to use (\ref{eq:cross_moment}) to estimate $\beta$ in the parametric component of our mixture predictive, even if the link function $g$ is not  canonical. We then have the following convergence result.

\begin{theorem}\label{theo:reg}
Assume that the design matrix $\mathbf{X}_n = (x_1,\ldots,x_n)^\top$ is full rank and includes the intercept, the initial estimate $\beta_n$ exists, and $\beta_i$ satisfies (\ref{eq:cross_moment}) for $i \geq n$. Assume additionally that one of the following conditions is satisfied:
\begin{enumerate}
    \item The dependent variable $Y\geq 0$ is non-negative, and the link function $g$ is strictly monotone and continuously differentiable. 
    \item The dependent variable $Y\in \mathbb{R}$ is real, the link function $g$ is the identity function, and there exist some non-negative constants $a$ and $b$ such that
    \begin{align*}
   V(\mu) \leq a \mu^2 + b. 
\end{align*}
\end{enumerate}\vspace{-2mm}
Then under predictive resampling, the cross-moment $\{\mu_i^{yx}\}_{i \geq n}$ is a martingale bounded in $L_1$, and thus $\mu_i^{yx} \to \mu_\infty^{yx}$ and $\beta_i\to \beta_\infty$ a.s. 
\end{theorem}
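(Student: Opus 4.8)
The plan is to establish the three claims in turn: that $\mu_i^{yx}$ is a martingale, that it is bounded in $L_1$, and then to read off a.s.\ convergence of $\mu_i^{yx}$ and hence of $\beta_i$. The martingale property only needs to complete the sketch preceding the theorem: conditioning on $\mathcal{F}_i$, drawing $x_{i+1}$ by the Bayesian bootstrap on $\{x_1,\dots,x_i\}$ and then $y_{i+1}$ from (\ref{eq:linreg}), the covariate equals a given $x_j$ with probability $1/i$, and given $x_{i+1}=x$ the response has conditional mean $\tfrac{c}{c+i}g^{-1}(x^\top\beta_i)+\tfrac{i}{c+i}\bar y_i(x)$, where $\bar y_i(x)$ averages the $y_k$, $k\le i$, with $x_k=x$. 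Grouping $\tfrac1i\sum_{j\le i}x_j\bar y_i(x_j)$ by distinct covariate values collapses it to $\mu_i^{yx}$ (this is the nonparametric half not spelled out in the text), while $\tfrac1i\sum_{j\le i}x_j\,g^{-1}(x_j^\top\beta_i)=\mu_i^{yx}$ is exactly (\ref{eq:cross_moment}); hence $E[Y_{i+1}X_{i+1}\mid\mathcal{F}_i]=\mu_i^{yx}$, and the recursion $\mu_{i+1}^{yx}=\tfrac{i}{i+1}\mu_i^{yx}+\tfrac1{i+1}Y_{i+1}X_{i+1}$ gives $E[\mu_{i+1}^{yx}\mid\mathcal{F}_i]=\mu_i^{yx}$.

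The heart of the argument is the $L_1$ bound, and the enabling fact is that Bayesian-bootstrap resampling keeps every covariate in the original finite set, so $\lVert x_j\rVert\le M:=\max_{k\le n}\lVert x_k\rVert$ for all $j$ and each coordinate satisfies $|\mu_i^{yx,(\ell)}|\le M\cdot\tfrac1i\sum_{j\le i}|y_j|$. Under Condition~1 ($Y\ge 0$), $\tfrac1i\sum_{j\le i}|y_j|=\tfrac1i\sum_{j\le i}y_j$ is the intercept coordinate of $\mu_i^{yx}$, a non-negative martingale, so $E[\,\cdot\mid\mathcal{F}_n]=\bar y_n<\infty$ uniformly in $i$ and $\mu_i^{yx}$ is bounded in $L_1$; here neither a variance bound nor differentiability of $g$ is used. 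Under Condition~2 (identity link, $V(\mu)\le a\mu^2+b$), I would instead control $m_i:=\tfrac1i\sum_{j\le i}y_j^2$: computing $E[Y_{i+1}^2\mid\mathcal{F}_i]$ from the mixture, the nonparametric part again collapses to $\tfrac{i}{c+i}m_i$, and the parametric part is bounded via $E_{\beta_i}[Y^2\mid x]=\phi V(x^\top\beta_i)+(x^\top\beta_i)^2\le(\phi a+1)(x^\top\beta_i)^2+\phi b$ together with the self-bounding inequality $\tfrac1i\sum_{j\le i}(x_j^\top\beta_i)^2=\beta_i^\top\mu_i^{yx}\le m_i$, which follows from $\tfrac1i\sum_{j\le i}x_jx_j^\top\beta_i=\mu_i^{yx}$ and Cauchy--Schwarz. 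This yields $E[m_{i+1}\mid\mathcal{F}_i]\le(1+\alpha_i)m_i+\alpha_i'$ with $\alpha_i,\alpha_i'=O(i^{-2})$ summable, so by the Robbins--Siegmund theorem \cite{robbins1971convergence} $m_i$ converges a.s.\ and $\sup_i E[m_i\mid\mathcal{F}_n]<\infty$; since $|\mu_i^{yx,(\ell)}|\le M\,m_i^{1/2}$, $\mu_i^{yx}$ is bounded in $L_1$ (indeed in $L_2$).

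With the martingale bounded in $L_1$, Lemma~\ref{lem:1} gives $\mu_i^{yx}\to\mu_\infty^{yx}$ a.s.\ in both cases. To pass to $\beta_i$, note that the covariate frequencies $w_i^{(\ell)}=\tfrac1i\sum_{j\le i}\mathbbm{1}(x_j=v_\ell)$ over the distinct values $v_1,\dots,v_d$ are bounded martingales, hence converge a.s.\ to limits $w_\infty^{(\ell)}$ that are a.s.\ strictly positive (the P\'olya-urn/Dirichlet limit, as in \cite{holmes2023statistical}), so the limiting design $\Sigma_\infty=\sum_\ell w_\infty^{(\ell)}v_\ell v_\ell^\top$ is a.s.\ positive definite since the original design is full rank. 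Under Condition~2 this is immediate: $\beta_i=\widehat\Sigma_i^{-1}\mu_i^{yx}$ with $\widehat\Sigma_i=\tfrac1i\sum_{j\le i}x_jx_j^\top\to\Sigma_\infty\succ 0$, so $\beta_i\to\Sigma_\infty^{-1}\mu_\infty^{yx}=:\beta_\infty$ by continuity of matrix inversion, in the spirit of Corollary~\ref{cor:paraconverge}. Under Condition~1, (\ref{eq:cross_moment}) reads $\Psi_i(\beta_i)=\mu_i^{yx}$ with $\Psi_i(\beta)=\sum_\ell w_i^{(\ell)}g^{-1}(v_\ell^\top\beta)v_\ell$; since $g^{-1}$ is strictly monotone and $\{v_\ell\}$ spans $\mathbb{R}^p$, the map $\Psi_\infty$ is a strictly monotone, continuous injection, hence a homeomorphism onto its range, and $\Psi_i\to\Psi_\infty$ locally uniformly, whence $\beta_i=\Psi_i^{-1}(\mu_i^{yx})\to\Psi_\infty^{-1}(\mu_\infty^{yx})=:\beta_\infty$ a.s., after checking that monotonicity/coercivity of $\Psi_\infty$ confines $\{\beta_i\}$ to a compact set.

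I expect the main obstacle to be the $L_1$ bound under Condition~2: because $Y$ is not sign-constrained one cannot lean on a non-negative martingale and must tame the second empirical moment $m_i$ by an almost-supermartingale argument, and the delicate point is closing the recursion via $\tfrac1i\sum_{j\le i}(x_j^\top\beta_i)^2\le m_i$ without any a priori control on $\lVert\beta_i\rVert$. A secondary subtlety is the final passage $\mu_\infty^{yx}\mapsto\beta_\infty$ for a link that is neither canonical nor the identity, which needs both the a.s.\ convergence of the covariate empirical measure to a nondegenerate limit and the continuity of the inverse cross-moment map.
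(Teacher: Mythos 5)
Your first three steps essentially reproduce the paper's argument. The martingale computation (collapse of the nonparametric part by grouping over distinct atoms, plus the cross-moment matching (\ref{eq:cross_moment}) for the parametric part) is exactly the paper's; so is the Condition-1 $L_1$ bound via $\|x_j\|\le M$ and the non-negative intercept coordinate of $\mu_i^{yx}$. For Condition 2 your self-bounding inequality $\tfrac1i\sum_j(x_j^\top\beta_i)^2=\beta_i^\top\mu_i^{yx}\le m_i$ (normal equations plus Cauchy--Schwarz) is a correct and equivalent substitute for the paper's route, which writes $\beta_i$ as the OLS solution and uses the projection bound $\mathbf{Y}_i^\top\mathbf{H}_i\mathbf{Y}_i\le\mathbf{Y}_i^\top\mathbf{Y}_i$; both close the same almost-supermartingale recursion for $m_i=\mu_i^{yy}$ and invoke Robbins--Siegmund, and your observation that this actually yields an $L_2$ bound matches the paper. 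The identity-link passage $\beta_i=\widehat\Sigma_i^{-1}\mu_i^{yx}\to\Sigma_\infty^{-1}\mu_\infty^{yx}$ with $\Sigma_\infty\succ0$ a.s.\ (positive Dirichlet limit weights, full-rank design) is also fine and agrees in substance with the paper.

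The one genuine gap is the final step under Condition 1, and you flag it yourself: your monotone-injection argument for $\Psi_\infty$ leaves unproven that $\{\beta_i\}$ stays in a compact set, that $\mu_\infty^{yx}$ lies in the (open) range of $\Psi_\infty$, and that $\Psi_i^{-1}(\mu_i^{yx})\to\Psi_\infty^{-1}(\mu_\infty^{yx})$ under locally uniform convergence; as written this is a sketch, not a proof. The paper closes this differently, via the implicit function theorem applied to $F(\beta,w_{1:n},\mu^{yx})=\sum_{j=1}^n w_j\,g^{-1}(x_j^\top\beta)\,x_j-\mu^{yx}$: continuous differentiability of $g$ (the hypothesis you never use) gives continuous partials, and the Jacobian $\sum_j w_j\,x_jx_j^\top/g'\bigl(g^{-1}(x_j^\top\beta)\bigr)$ is positive (or negative) definite for positive weights by strict monotonicity and the full-rank design, so $\beta$ is a continuous function of $(w_{1:n},\mu^{yx})$ at every solution point. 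Since solutions $\beta_i$ are assumed to exist along the trajectory, and $\mu_i^{yx}\to\mu_\infty^{yx}$ a.s.\ while $w_{1:n}^i\to w_{1:n}^\infty\sim\mathrm{Dir}(1,\ldots,1)$ a.s.\ with strictly positive entries, the continuous mapping theorem yields $\beta_i\to\beta_\infty$ a.s.\ without any separate coercivity or range argument. To complete your proposal you should either import this implicit-function-theorem step or supply the missing compactness/range argument for $\Psi_\infty$; everything else is in order.
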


The condition on $V(\mu)$ is quite weak, for example being satisfied by the normal or the Laplace distribution.
Finally, we show in Appendix \ref{sec:app_condit} that empirical conditional moments $E[Y \mid x]$ also converge a.s. when $Y \geq 0$, which will be demonstrated in Section \ref{sec:logreg}.

\subsection{Practical considerations}\label{sec:reg_prac}
We now discuss a few practical considerations for the regression setting.
While we consider the exact solution to (\ref{eq:cross_moment}) when we study the convergence theory in the previous subsection, it is much more expedient to use an online update to approximate $\beta_i$ when predictive resampling when it is not available in closed form. This variant of Algorithm \ref{alg:linreg} is outlined in Appendix \ref{sec:app_regprac}, where we instead approximate the exact $\beta_i$ computation in Line 7 with an online update based on predictive recursion as in (\ref{eq:param_MP}), in a similar fashion to the update considered in \cite{fong2024asymptotics}. We show in  Appendix \ref{sec:app_regprac} that in practice, this approximation gives very accurate estimates of the solution to (\ref{eq:cross_moment}) under predictive resampling for logistic regression, and the resulting posteriors are very similar.  Finally, the selection of the hyperparameter $c$ is again crucial, and the methodology outlined in Section 4.2 can be easily extended to regression contexts, which we discuss in Appendix \ref{sec:app_regprac}.

\section{Experimental studies}
In this section,  we present two experimental studies. The first is a simulation example, which we will divide into two cases: one involving a moderate amount of data and the other involving a very small dataset. In both cases, we will further consider two scenarios, where $f_{\theta}$ is well-specified or misspecified. The second study is a logistic regression task using real-world data, namely the German Credit dataset \citep{statlog_(german_credit_data)_144}. In both experiments, we will primarily compare the mixture MP to the BB and the parametric MP.  

In the simulation example, we predictively resample $Y_{n+1:N}$ with $N=n+1500$
where $B=5000$ posterior samples are generated. In the logistic regression example, we draw of $N = n+4000$ forward samples, and generate $B=5000$ posterior samples. Note that $N$ and $B$ here align with the notation in Algorithm \ref{alg:ps}. The difference in $N$ arises from the differing numbers of predictive resamples needed for the martingale to converge. All simulations were executed on an Intel i7-13620H CPU.\footnote{Code for the experiments can be found at \url{https://github.com/w41039/Mixture-MP}.}

\subsection{Simulation example}
In all examples discussed within this subsection, the parametric model $f_\theta$ is selected as the normal distribution, {where $\theta$ consists of the mean and variance}. In the well-specified case, data are generated from a normal distribution $N(1, 25)$ for both data sizes. In the misspecified case, data are generated from a skew normal distribution characterized by parameters $\xi = 1$, $\omega = 5$, and $\alpha = 1$ for the small $n$ scenario, and $\alpha = 2$ for the moderate $n$ scenario. Consequently, in the second misspecified case, we expect that the prior hyperparameter $c$ will take on small value.

\subsubsection{Moderate sample size}

In both the well-specified and misspecified cases, the sample size is $n = 100$. We first examine the LOOCV estimate of the expected energy score of our mixture predictive. In the well-specified case (Fig. \ref{fig:2} (left)), the value of $c$ that maximizes the LOOCV energy score in this sample is $c = \infty$ (as we observe {$\widehat{\lambda}_n = 1$}), which suggests that one could drop the BB component in (\ref{eq:2}) and use the parametric part solely. In the misspecified setting (Fig. \ref{fig:2} (right)), the optimal value is $c=144$. This aligns with our expectation that the $c$ should be smaller (relative to $n$) when $f_\theta$ is misspecified.

Table \ref{table1} displays the optimal values of $c$ across the two scenarios, evaluated over 200 repeated samples of the datasets of size $n = 100$. In the well-specified case, 127 instances yielded an optimal $c$ of infinity ({$\widehat{\lambda}_n = 1$}). Conversely, only 53 instances demonstrate an optimal $c$ of infinity in the misspecified case. This outcome indicates that our approach can  differentiate the well-specified and misspecified cases, and is in line with the result of Theorem \ref{theo:optimal}.

\begin{figure}[!ht]
    \centering
    \begin{subfigure}[t]{0.46\textwidth}
        \centering
\includegraphics[width=\linewidth]{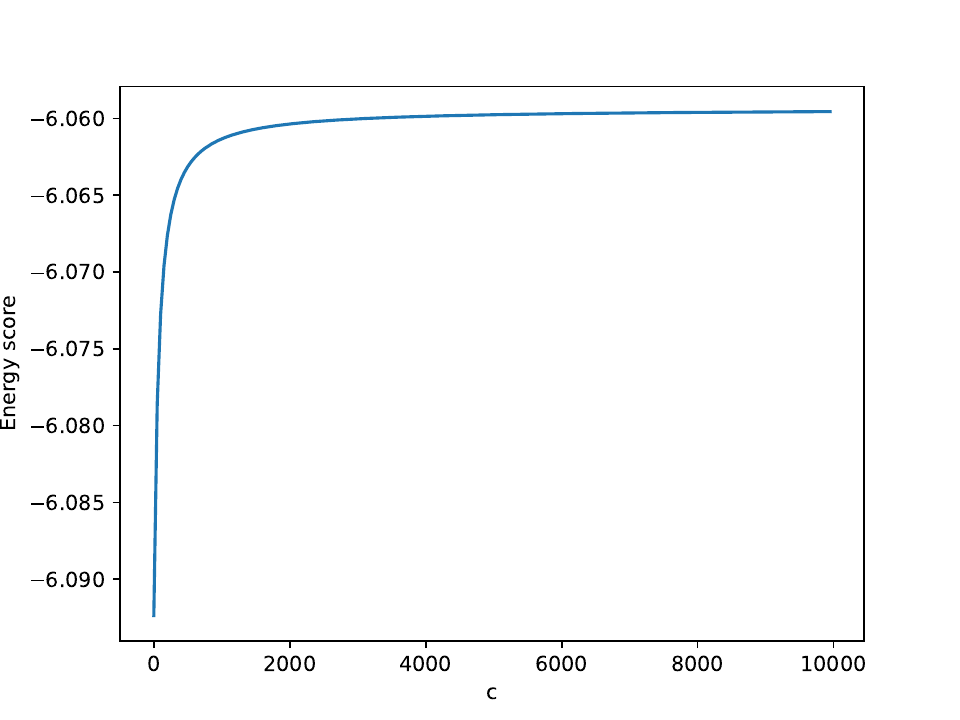}
    \end{subfigure}
    \begin{subfigure}[t]{0.46\textwidth}
        \centering
\includegraphics[width=\linewidth]{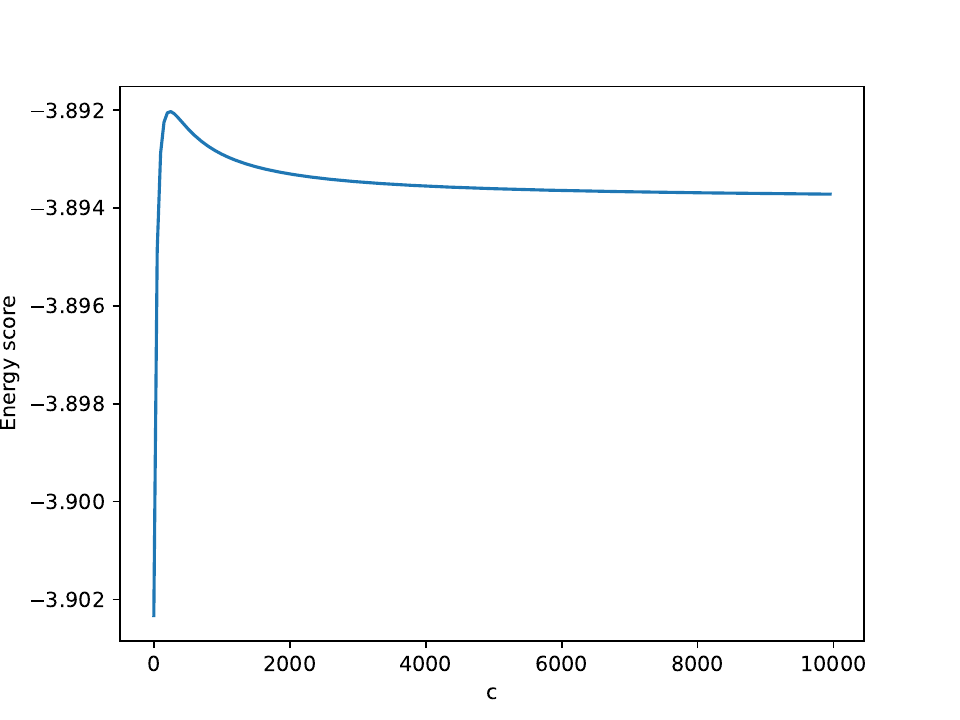}
    \end{subfigure}
    \vspace{-2mm}
    \caption{Plots of $c$ against LOOCV energy score for the (Left) well-specified and (Right) misspecified scenario. } 
    \label{fig:2}
\end{figure}

\begin{table}[!ht]
    \centering
    \scriptsize
    \begin{tabular}{c c c | c c c}
         \textbf{Moderate $n$} & \textbf{Well-specified} & \textbf{Misspecified} & \textbf{Small $n$} & \textbf{Well-specified} & \textbf{Misspecified}  \\ \hline
        \textbf{\% of $c=\infty$} & 63.5\% & 26.5\% &\textbf{\% of $c=\infty$} & 58\% & 61\% \\
    \end{tabular}
    \caption{Percentage of optimal $c$ being infinity under the well-specified scenario ($\mathbb{Q}$ is normal) and misspecified scenario ($\mathbb{Q}$ is skew normal) for (Left) moderate $n$ and (Right) small $n$. 
    }
    \label{table1}
\end{table}

Fig. \ref{fig:3} (left) and (right) 
illustrate the posterior distributions of the mean for the three different approaches. The posteriors are very similar and are centered around the correct value, irrespective of whether the model is well-specified or misspecified. This is because $\theta$ only relies on the empirical moment, which is the same in all three methods. This illustrates the lack of prior information on $\theta$ induced by our semiparametric mixture.

In contrast, Fig. \ref{fig:4} (left)
illustrates the posterior distribution of the skewness in the misspecified case, where the three approaches differ significantly. The posterior distribution derived solely from parametric MP is a point mass at 0, whereas the posterior obtained through the BB is centered around the true skewness of 0.45.  For the mixture MP, the posterior distribution of the skewness is a smooth density centered at around 0.2, indicating regularization induced by the parametric component $f_{\theta}$. This demonstrates the ability of the mixture MP to incorporate strong prior information into the BBB for higher-order moments. In this case however, the BB still performs the best due to $n$ being moderate, but we will see shortly that the regularization can help greatly when $n$ is small. Finally, we highlight that the mixture MP and BB will agree as $n \to \infty$ and the parametric component becomes negligible.

For illustrative purposes, we perform predictive resampling  with $c = 1e10$ in the well-specified case, and examine the convergence of moments through the sample paths (Fig. \ref{fig:4} (right)), where we see that the sample paths $\mu^{(3)}_{n+i}$ converge to a random $\mu_\infty^{(3)}$ as $i$ increases. 

\begin{figure}[!ht]
\centering
    \begin{subfigure}[t]{0.46\textwidth}
        \includegraphics[width=\linewidth]{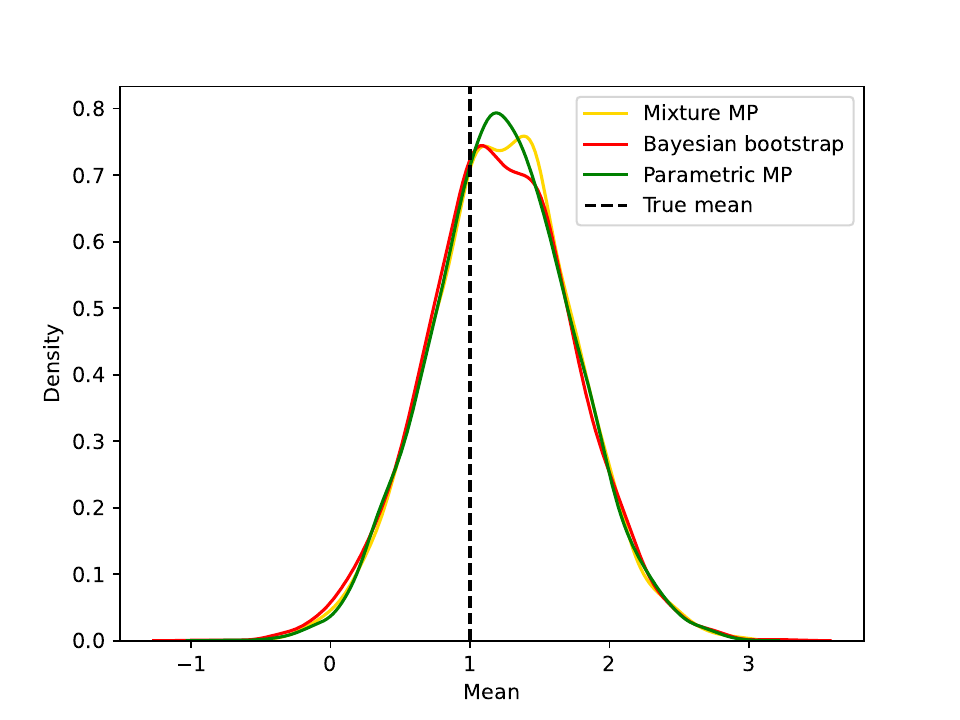}
    \end{subfigure}
    \begin{subfigure}[t]{0.46\textwidth}
\includegraphics[width=\linewidth]{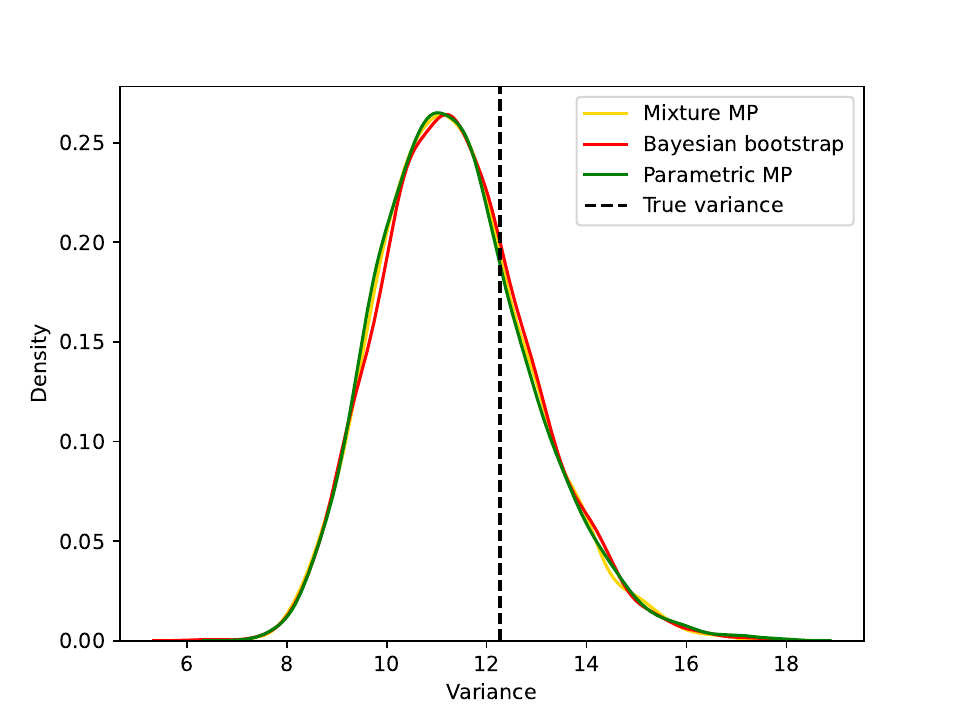}
    \end{subfigure}
    \vspace{-2mm}
    \caption{Posterior distributions of (Left) mean in the well-specified scenario and (Right) variance in the misspecified scenario.}
    \label{fig:3}
\end{figure}
\begin{figure}[!ht]
\centering
    \begin{subfigure}[t]{0.46\textwidth}
        \includegraphics[width=\linewidth]{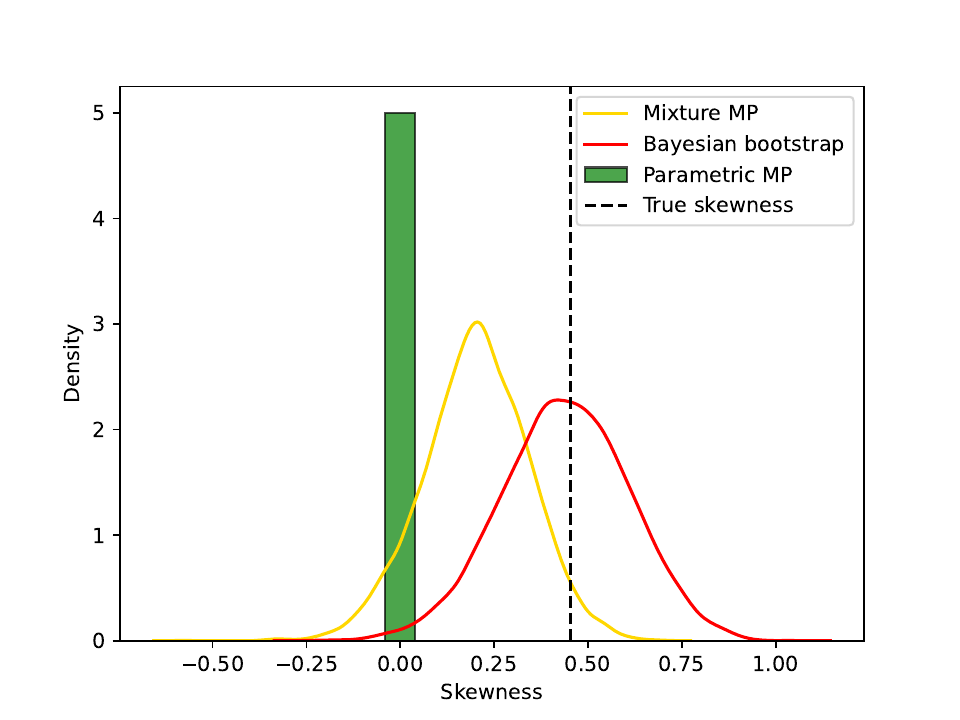}
    \end{subfigure}
    \begin{subfigure}[t]
    {0.46\textwidth}
\includegraphics[width=\linewidth]{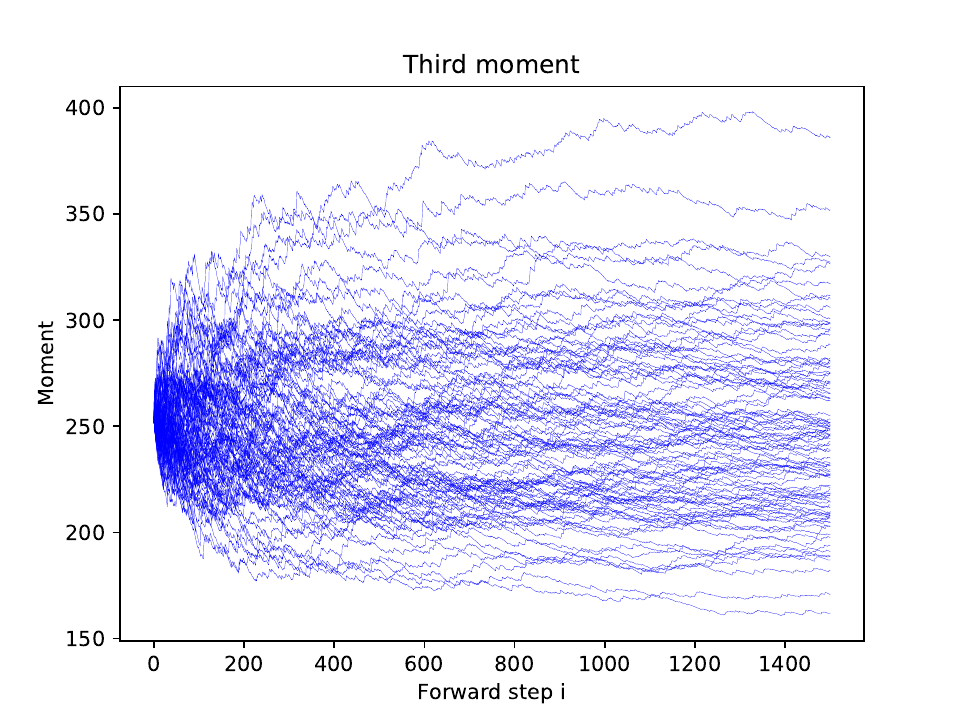}
    \end{subfigure}
    \vspace{-2mm}
    \caption{(Left) Posterior distribution of skewness in the misspecified case and (Right) sample paths of $\mu^{(3)}_{n+i}$ in the misspecified case.}
    \label{fig:4}
\end{figure}

\subsubsection{Small sample size}
Statistical inference for small datasets is  more challenging for the BB. This section reports the results of simulations conducted with a limited number of data points, demonstrating that the proposed method outperforms the BB, especially in estimating statistics that requires smoothness which the BB struggles with (e.g. Example \ref{exp1}). In both well-specified and misspecified scenarios, we consider a dataset size of $n = 7$.

Table \ref{table1} (right) displays the optimal values of $c$ across two scenarios, evaluated over 200 repeated samples of the datasets of size $n = 7$. In this case, the magnitude of $c$ in both cases are comparable as the sample size is likely too small to distinguish between the normal and the skew normal distribution.

In the misspecified scenario, we select a dataset for which $c=9.2$ is selected to demonstrate how the mixture MP can be helpful with small $n$. Fig. \ref{fig:5} (left) and (right) illustrate the posterior distributions of the 95th percentile and median respectively using the same setup as Example \ref{exp1}. We see that the mixture MP is smooth and continuous, whereas the BB posterior is concentrated on several values. This smoothness can be further observed in the histogram of posterior samples of the 95th percentile in Fig. \ref{fig:6} (left).
 Furthermore, the mixture MP intuitively lies in between the BB and the parametric MP. Fig. \ref{fig:6} (right) illustrates the posterior distribution of the skewness. Here, the BB greatly overestimates the skewness due to small sample size, while the parametric component of the mixture MP provides useful regularization towards 0. We defer the well-specified case to Appendix \ref{sec:app_sim}.
\begin{figure}[!ht]
    \centering
    \begin{subfigure}[t]{0.46\textwidth}
        \centering
        \includegraphics[width=\linewidth]{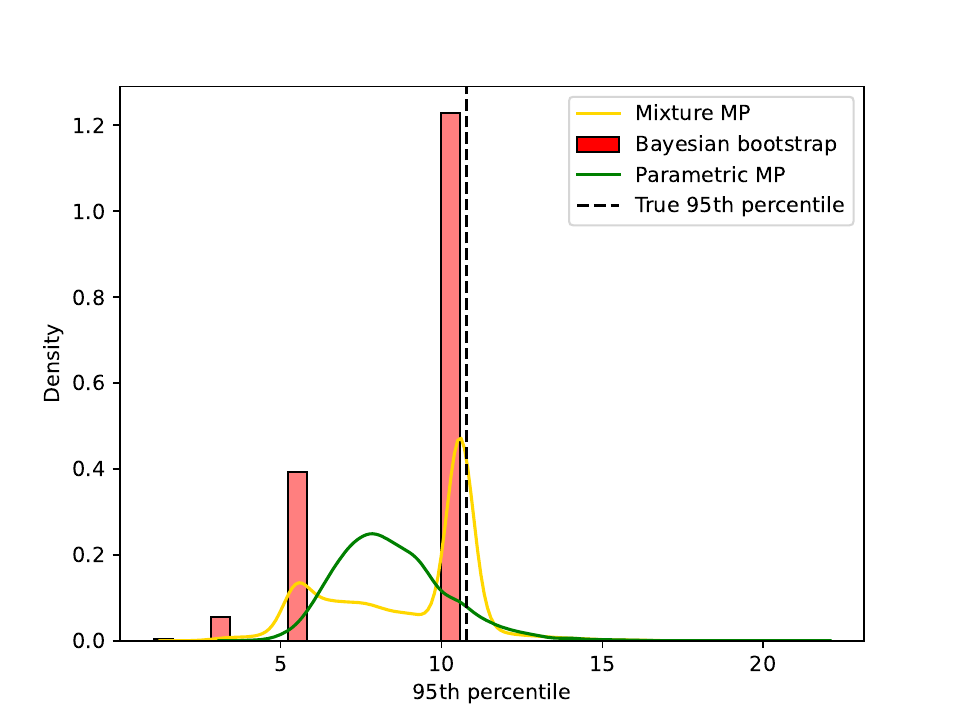} 
    \end{subfigure}
    \begin{subfigure}[t]{0.46\textwidth}
        \centering
        \includegraphics[width=\linewidth]{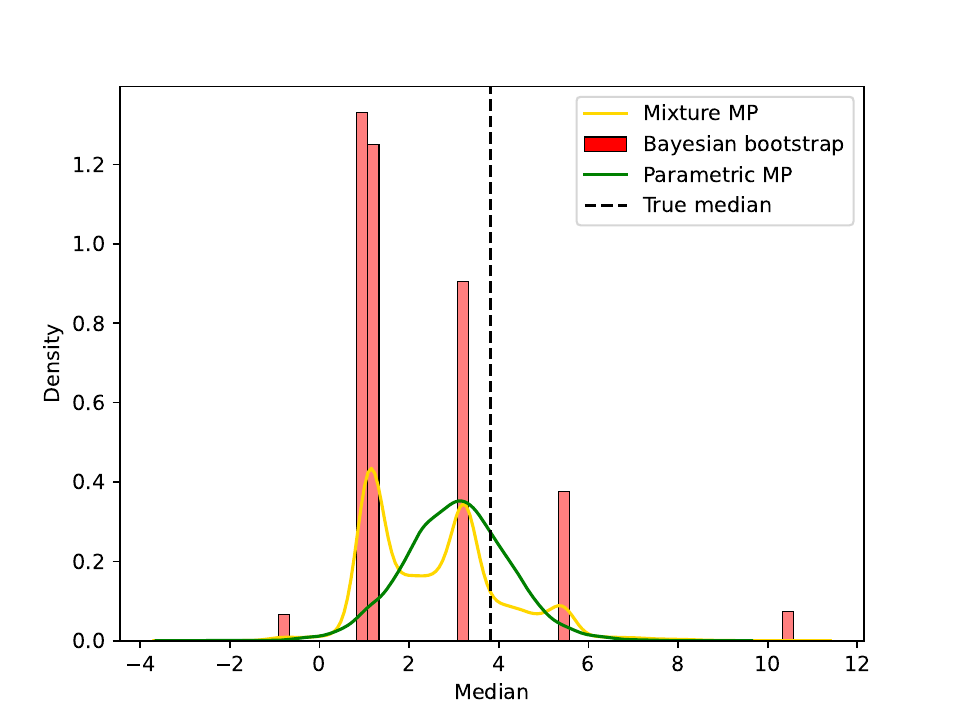} 
    \end{subfigure}\vspace{-2mm}
    \caption{Posterior distributions of the (Left) 95th percentile and (Right) median  in the misspecified scenario.}
    \label{fig:5}\vspace{-5mm}
\end{figure}
\begin{figure}[!ht]
\centering
    \begin{subfigure}[t]{0.46\textwidth}
        \centering
        \includegraphics[width=\linewidth]{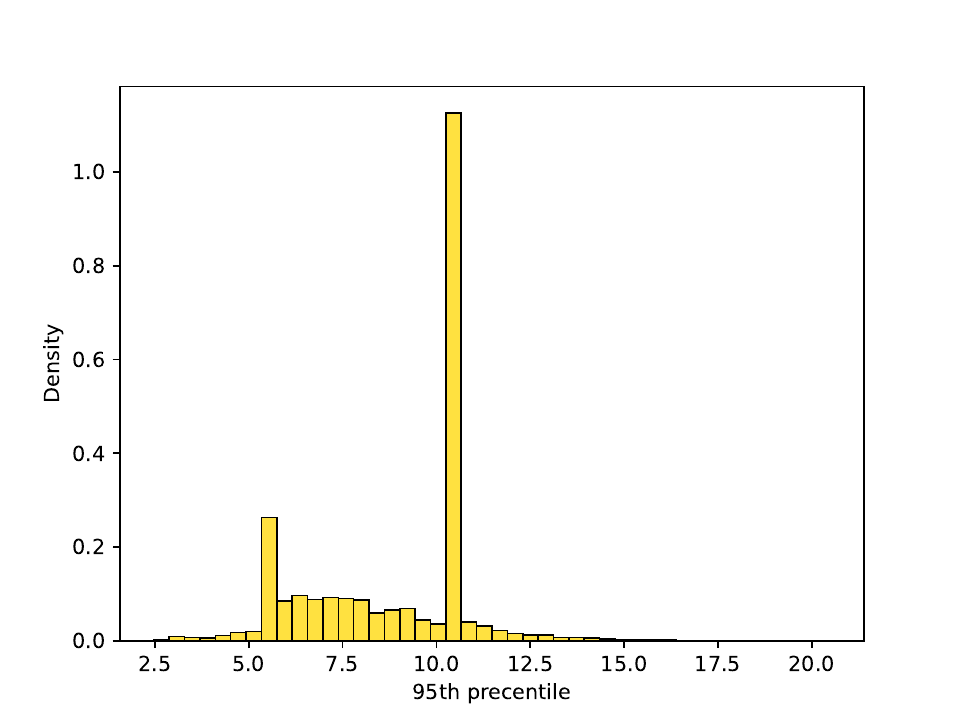} 
    \end{subfigure}
    \begin{subfigure}[t]{0.46\textwidth}
        \centering
\includegraphics[width=\linewidth]{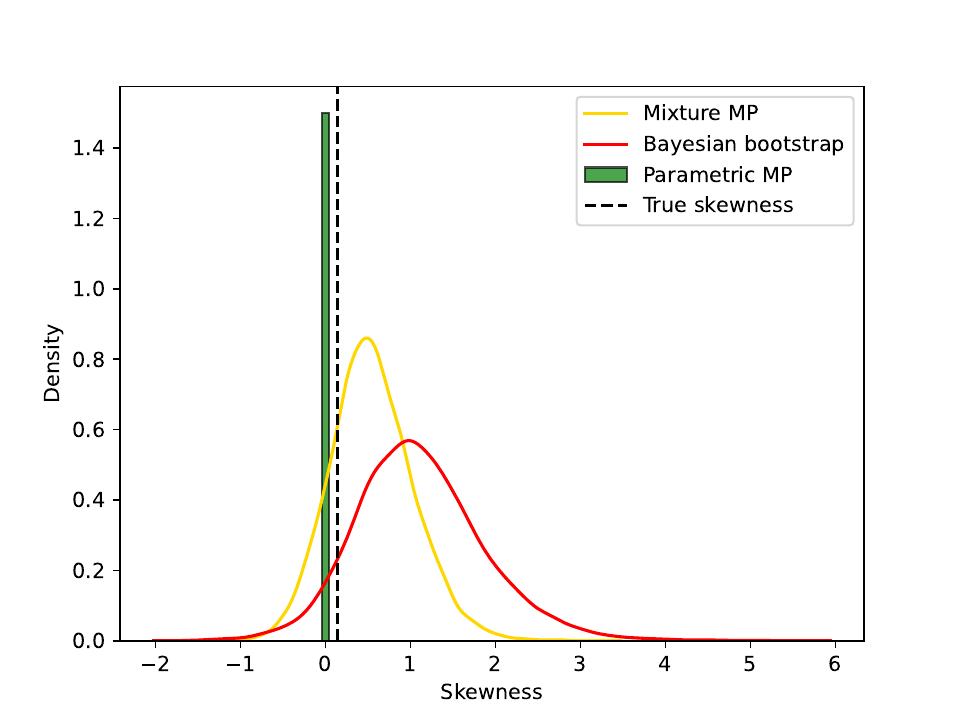} 
    \end{subfigure}\vspace{-2mm}
    \caption{(Left) Histogram of posterior samples of the 95th percentile obtained by mixture MP and (Right) posterior distribution of the skewness in the misspecified scenario.}
    \label{fig:6}
    \end{figure}

\subsection{Logistic regression} \label{sec:logreg}
The German Credit dataset \citep{statlog_(german_credit_data)_144} contains 1,000 data points of individuals applying for loans, and includes 20 features such as age, employment status, credit history, and loan amount. The binary target variable classifies applicants as ``good'' or ``bad'' credit risks. We employ this dataset in our logistic regression example. As suggested in Section \ref{sec:reg_prac}, we utilize an online approximation of the MLE, which is outlined in Appendix \ref{sec:app_regprac}.

Among the 1000 data points, we randomly choose 500 for training (including for calculating $c$) and 500 for testing. We first applied a non-Bayesian logistic regression model to the data in order to eliminate statistically insignificant features, resulting in the retention of 9 out of 20 features. Subsequently, we estimated the optimal value of $c$ using 5-fold CV on the 500 training data points, resulting in a 
selected  value of $c = 2000$ (Fig. \ref{fig:7} (left)). Note that we report the relative energy score for easier interpretation, which is simply the energy score with its value at $c = 0$ subtracted.

The posterior distributions of $\beta_3$ derived using the parametric MP and BB demonstrate some minor differences (Fig. \ref{fig:7} (right)). The mixture MP lies between the two but is more aligned with the parametric MP, which is because the weight parameter $c$ is quite large relative to the training sample size $n$. The similarities are unsurprising, as we do not include strong prior information regarding $\beta$.
We include the posteriors for other coefficients in Appendix \ref{sec:app_logreg}, which display a similar trend.

To highlight the substantial difference between the mixture MP and the BB in the regression case, we consider the posterior of the empirical \textit{conditional} mean, defined as
\begin{align*}
{\mu}_i^{y \mid x} &= \frac{i^{-1}\sum_{j = 1}^{i}\mathbf{1}(x_j = x)\,  y_j}{i^{-1}\sum_{j = 1}^i \mathbf{1}(x_j = x)}
\end{align*}
where $x \in \{x_1,\ldots,x_n\}$. The above is precisely the nonparametric estimate of $E[Y \mid X = x]$, and Fig. \ref{fig:8} (left) illustrates convergence of $\mu_i^{y \mid x}$ under predictive resampling, which we prove theoretically in Appendix \ref{sec:app_condit}. 
 Note that the convergence rate is much slower for $\mu_i^{y \mid x}$ than for the cross-moment $\mu_i^{yx}$, as $\mu_i^{y \mid x}$ is only updated when the specific atom $x$ is drawn from $\{x_1,\ldots,x_n\}$. As a result, we opt to use $N = 50000, B = 500$ specifically for Fig. \ref{fig:8}. An alternative option which would require fewer forward samples $N$ would be to consider $E[Y \mid X \in (x - \delta, x+\delta)$] for some small $\delta$.

\begin{figure}[ht]
    \centering
    \begin{subfigure}[t]{0.42\textwidth}
        \centering
\includegraphics[width=\linewidth]{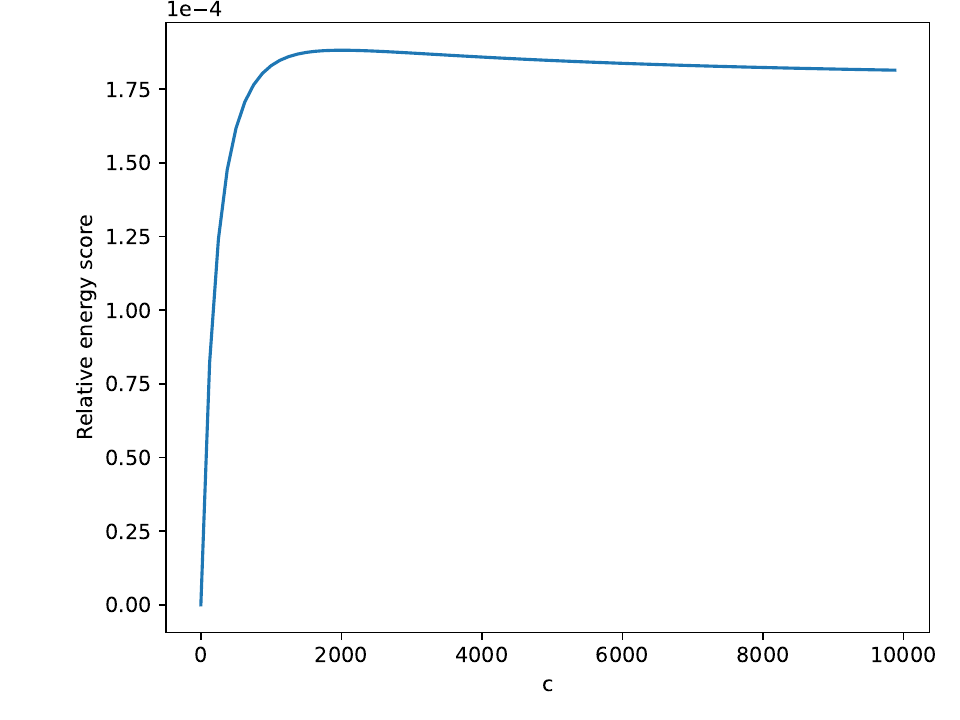} 
    \end{subfigure}
    \begin{subfigure}[t]{0.46\textwidth}
        \centering
\includegraphics[width=\linewidth]{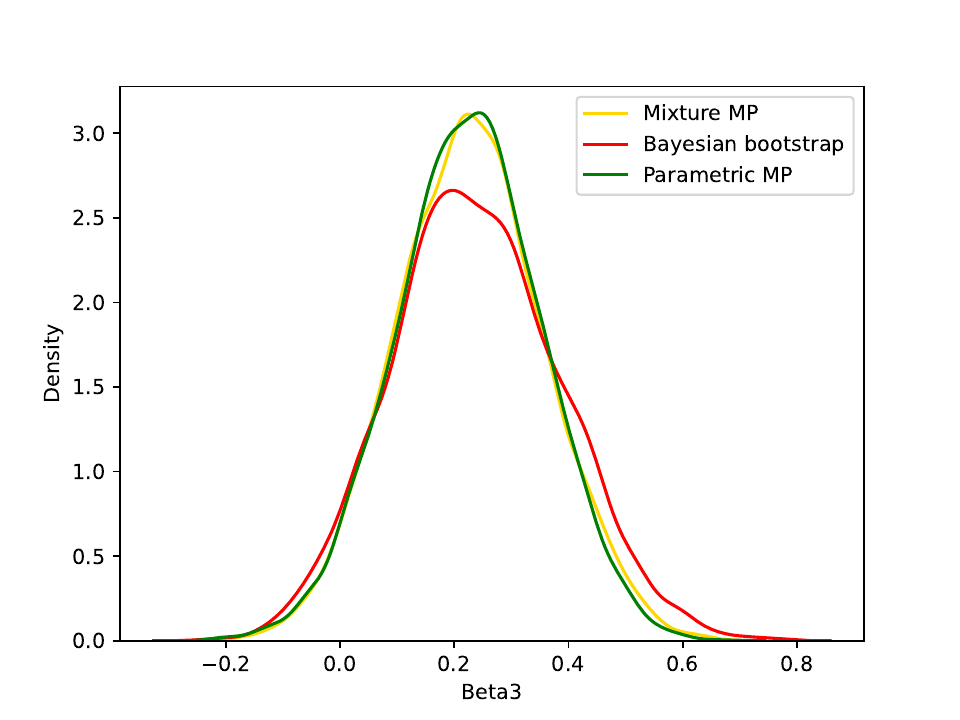} 
    \end{subfigure}\vspace{-2mm}
    \caption{(Left) Plot of $c$ against 5-fold CV relative energy score and (Right) posterior distribution of $\beta_3$; the relative energy score is translated to be equal to 0 at $c = 0$.}
    \label{fig:7}
\end{figure}

Fig \ref{fig:8} (right) illustrates the posterior distribution of $\mu_{N}^{y \mid  x_{150}}$, which concretely highlights the difference of the three approaches. The posterior distribution of $\mu_{N}^{y \mid  x_{150}}$ obtained by the BB is a point mass at $y_{150} = 0$, whereas that obtained via the parametric MP roughly has support over $(0,0.6)$ and is centered at $0.25$. In contrast, the posterior obtained via our mixture exhibits more substantial mass near 0, while also allowing for the generation of values in the range $(0,0.6)$, clearly interpolating between the parametric MP and the BB. This illustrates the key regularizing behavior of the parametric component.

\begin{figure}[!ht]
\centering
    \begin{subfigure}[t]{0.46\textwidth}
        \centering
\includegraphics[width=\linewidth]{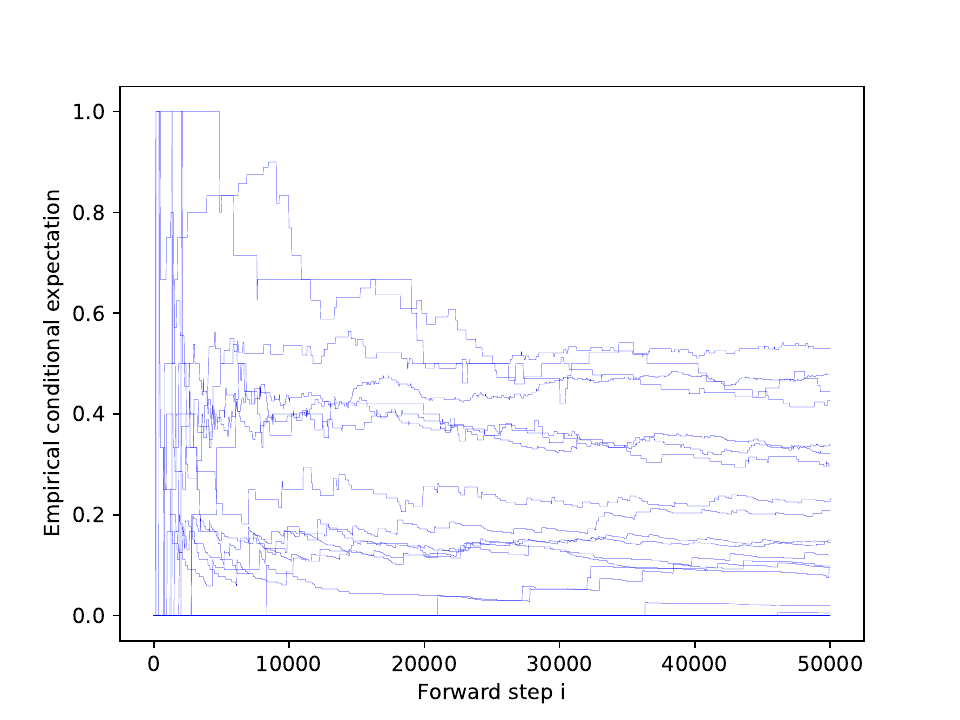} 
    \end{subfigure}
    \centering
    \begin{subfigure}[t]{0.46\textwidth}
        \centering
\includegraphics[width=\linewidth]{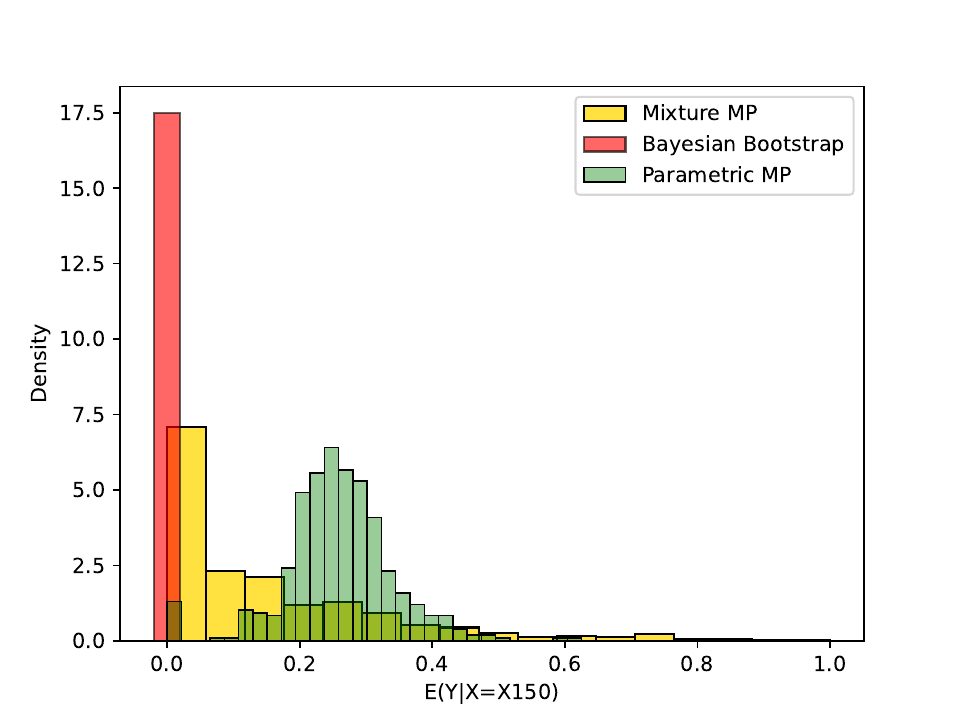} 
    \end{subfigure}
    \caption{(Left) Convergence of empirical conditional expectation $\mu_i^{y \mid x_{150}} $ 
    and (Right) posterior distribution of $\mu_N^{y \mid x_{150}}$; for this specific demonstration, we have $N = 50000$ and $B = 500$.}
    \label{fig:8}
    \end{figure}

\begin{table}[h]
\vspace{4mm}
\centering
    \scriptsize
    \begin{tabular}{c c c c c}
        \hline
          & \textbf{Mixture MP} & \textbf{Parametric MP} & \textbf{BB}\\ \hline
        \textbf{Relative energy score ($\times 10^{-4}$)} &1.622 & 1.616 & 0\\ 
        \textbf{Standard error ($\times 10^{-4}$)} & 0.041 & 0.045 & / \\ \hline
        \textbf{Run-times (s)} &6.47 & 6.30 & 3.21\\
        \hline
    \end{tabular}
    \caption{Hold out relative energy scores  (over 200 data splits) and run-times (seconds); for reference, Bayesian logistic regression with MCMC required 128 seconds. 
}
    \label{table3}
    \end{table}

To further evaluate the three methods, we compare their average predictive energy score on the hold out test data (Table \ref{table3}). Again, we report the relative energy score, which is the energy score with its value at $c = 0$ subtracted (i.e. the BB acts as the baseline). The predictive energy score of the mixture MP method is marginally better than that of the parametric MP method and significantly better than the BB. The predictive score of mixture MP and parametric MP is likely similar as the dataset is relatively linear. Our results thus empirically verifies the dominance of the predictive mixture as in (\ref{eq:dom}).

The run-times are presented in the last row of Table \ref{table3}. As a benchmark, Bayesian logistic regression with MCMC, which samples an effective sample size of 5000, has a runtime of 128s.  The methodology based on the MDP presented in \cite{Lyddon2018} requires MCMC, so its running time will exceed 128s. In contrast, our mixture MP completes the process in 6.5s. Additionally, although not demonstrated, our method is parallelizable while MCMC is not, which would drastically further enhances the speed of our approach. Finally, we highlight that the BB is still the fastest as we can skip predictive resampling by drawing the Dirichlet weights directly.

\section{Discussion}
We have introduced a novel approach for constructing the martingale posterior utilizing a semiparametric predictive distribution based on the method of moments. Simulation studies show that our approach can outperform the BB in the small sample size scenario, where the regularization of the parametric component can be useful. One potential future direction is to consider the performance under other nonparametric predictive components, such as those based on copulas. Another obvious direction is to extend the methodology to an \textit{ensemble} of predictive distributions, where it may be of interest to update the weights during imputation, in a similar fashion to \cite{shirvaikar2024general}.










\section*{Acknowledgments}
This work was supported by the Research Grants Council of Hong Kong through the General Research Fund (Grant No. 17307321) and the Early Career Scheme (Grant No. 27304424).
\bibliography{paper-ref}

\newpage
\appendix
\renewcommand{\theprop}{\thesection.\arabic{prop}} 
\renewcommand{\thecorollary}{\thesection.\arabic{corollary}} 
\renewcommand{\thetheorem}{\thesection.\arabic{theorem}} 
\renewcommand{\thelemma}{\thesection.\arabic{lemma}} 
\renewcommand{\thealgorithm}{\thesection.\arabic{algorithm}}
\renewcommand{\theequation}{\thesection.\arabic{equation}}
\renewcommand{\thefigure}{\thesection.\arabic{figure}}
\renewcommand{\thetable}{\thesection.\arabic{table}}
\renewcommand{\theexample}{\thesection.\arabic{example}}

\section*{Appendix notation}
 In the Appendices, we will use the following notation: 
\begin{center}
\begin{tabular}{ c|c } 
 Notation & Description \\ 
  \hline
 $\mathbb{Q}(y), q(y)$ & CDF/PDF of true distribution of $Y$  \\ 
 $P_n(y), p_n(y)$ & CDF/PDF of mixture predictive  \\ 
 $F_\theta, f_\theta; \, \theta \in \mathbb{R}^p$ & CDF/PDF of parametric component\\
  $\mu^{(k)}_\theta = E_{f_\theta}(Y^k)$ & $k$-th moment of $f_{\theta}$\\
  $|\mu_\theta|^{(k)} = E_{f_\theta}(|Y|^k)$ & $k$-th absolute moment of  $f_{\theta}$\\
  $\mu^{(k)}_n=n^{-1}\sum_{i=1}^n y_i^k$ & $k$-th empirical moment \\
  $|\mu_n|^{(k)}=n^{-1}\sum_{i=1}^n |y_i|^k$ & $k$-th empirical absolute moment\\
  $\theta_n = h(\mu_n^{(1)},\ldots, \mu_n^{(p)})$  &  Estimated from $Y_{1:n}\iid \mathbb{Q}$\\
    $\theta^* = \lim_{n \to \infty} \theta_n$ & Limit of $\theta_n$ estimated from $Y_{1:n} \iid \mathbb{Q}$ \\
    $\theta_i = h(\mu_i^{(1)},\ldots, \mu_i^{(p)})$ & Estimated from predictive resampling \\
  $\theta_\infty = \lim_{i \to \infty} \theta_i$ & Limit of $\theta_i$ from predictive resampling\\
       $\mathcal{F}_i=\sigma(Y_1,\ldots,Y_i)$ &  $\sigma$-algebra generated by $Y_{1:i}$ 
\end{tabular}
\end{center}\vspace{5mm}

For the regression setting, we introduce some additional notation.
\begin{center}
\begin{tabular}{ c|c } 
 Notation & Description \\ 
  \hline
       $x\in \mathbb{R}^p$ & Deterministic covariate vector\\
            $X\in \mathbb{R}^p$ & Random covariate vector\\
  $\mathbf{X}_n\in \mathbb{R}^{n\times p}$ & Design matrix \\
       $y \in \mathcal{Y}$& Deterministic response\\
       $Y \in \mathcal{Y}$ & Random response\\ 
     $\mathbf{Y}_n \in \mathcal{Y}^n$ & Column vector of responses\\
     $\beta \in \mathbb{R}^p$ & Column vector of GLM coefficients
\end{tabular}
\end{center}\vspace{5mm}

\setcounter{prop}{0}
\setcounter{corollary}{0}
\setcounter{theorem}{0}
\setcounter{lemma}{0}
\setcounter{algorithm}{0}
\setcounter{equation}{0}
\setcounter{figure}{0}
\setcounter{table}{0}
\setcounter{example}{0}
\section{Appendix: Martingale convergence}\label{sec:app_mart}
\subsection{Proof of Proposition \ref{prop:mom_mart}} 
 For any $i\in \mathbb{N}$, we have that
\begin{flalign*}
E[\mu^{(k)}_{i+1} \mid \mathcal{F}_i]&\nonumber= E\left[\frac{1}{i+1}\sum_{j=1}^{i+1} Y^k_j\mid \mathcal{F}_i\right] &&\\\nonumber
&= \frac{1}{i+1} E\left[Y^k_{i+1}\mid \mathcal{F}_i\right] + \frac{1}{i+1}\sum_{j=1}^i y_{j}^k &&\nonumber\\
&= \frac{1}{i+1} \left[\frac{c}{c+i}\mu^{(k)}_{\theta_i}+ \frac{i}{c+i}\mu_{i}^{(k)}\right]+ \frac{i}{i+1}\mu_{i}^{(k)}&&\\\nonumber
&= \frac{1}{i+1} \left[\frac{c}{c+i} \mu_{i}^{(k)} + \frac{i}{c+i}\mu_{i}^{(k)}\right]+ \frac{i}{i+1}\mu_{i}^{(k)}&&\\\nonumber
&= \mu_{i}^{(k)} &&\nonumber
\end{flalign*}
where we have used the moment matching in the 4th line. 

\subsection{Proof of Theorem \ref{thm:1}} \label{sec:app_proofth1}
We split the proof into the two cases of even and odd $p$, where $p$ is the number of moments being matched.
\subsubsection{Even $p$}
We begin by showing convergence of the $p$-th moment, which is easy as $\mu_i^{(p)}$ is a non-negative martingale.
\begin{lemma}\label{lem:positive moment} If $p$ is even, then the $p$-th moment $\mu_i^{(p)}$ converges a.s. under the moment martingale posterior.
\end{lemma}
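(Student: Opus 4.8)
The plan is to exploit the fact that for even $p$ the $p$-th empirical moment is automatically non-negative, which upgrades the martingale supplied by Proposition \ref{prop:mom_mart} to a non-negative martingale; Lemma \ref{lem:1} then applies with essentially no additional work, and crucially without any assumption on the parametric component $f_\theta$.

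First I would note that since $p$ is even, $y_j^p \ge 0$ for every $j$, so $\mu_i^{(p)} = i^{-1}\sum_{j=1}^i y_j^p \ge 0$ for all $i \ge n$. By Proposition \ref{prop:mom_mart} (whose hypothesis, existence of the method of moments estimator, is assumed in Theorem \ref{thm:1}), the sequence $(\mu_i^{(p)})_{i \ge n}$ is a martingale with respect to $(\mathcal{F}_i)_{i \ge n}$ under predictive resampling. Next I would verify the $L_1$-boundedness needed to invoke Lemma \ref{lem:1}: non-negativity gives $|\mu_i^{(p)}| = \mu_i^{(p)}$, so the martingale property yields $E[\,|\mu_i^{(p)}| \mid \mathcal{F}_n\,] = E[\,\mu_i^{(p)} \mid \mathcal{F}_n\,] = \mu_n^{(p)}$ for every $i \ge n$. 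Since $\mu_n^{(p)} = n^{-1}\sum_{j=1}^n y_j^p$ is a finite constant once the data $y_{1:n}$ are observed, we get $\sup_{i \ge n} E[\,|\mu_i^{(p)}| \mid \mathcal{F}_n\,] = \mu_n^{(p)} < \infty$, i.e. the martingale is bounded in $L_1$. Applying Lemma \ref{lem:1} then gives that $\mu_\infty^{(p)} := \lim_i \mu_i^{(p)}$ exists a.s.

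There is no real obstacle in this lemma: the entire point is that evenness of $p$ makes $\mu_i^{(p)}$ non-negative, and for a non-negative martingale $L_1$-boundedness is an immediate consequence of the martingale identity rather than something requiring control of $|\mu_\theta|^{(p)}$. This is exactly why the even case is separated out and treated first — it isolates the trivial half of Theorem \ref{thm:1}, leaving the genuine work (bounding $E|\mu_i^{(p)}|$ via regularity of the parametric absolute moment) for the odd case that follows.
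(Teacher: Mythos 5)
Your proof is correct and follows essentially the same route as the paper: evenness of $p$ makes $\mu_i^{(p)}$ a non-negative martingale (via Proposition \ref{prop:mom_mart}), so $E\bigl[|\mu_i^{(p)}| \mid \mathcal{F}_n\bigr] = \mu_n^{(p)} < \infty$ gives $L_1$-boundedness for free, and Doob's theorem (Lemma \ref{lem:1}) yields a.s. convergence. No gaps to report.
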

\begin{proof} For any $i \geq n$, we have that
\begin{align*}
E\left[|\mu^{(p)}_i| \hspace{0.1cm} \mid \mathcal{F}_n\right]\nonumber= E\left[\mu^{(p)}_i \mid \mathcal{F}_n\right]  = \mu^{(p)}_n < \infty.
\end{align*}
Therefore $\sup_{i \geq n} E\left[|\mu^{(p)}_i| \hspace{0.1cm} \mid \mathcal{F}_n\right] = \mu^{(p)}_n < \infty$, so by Doob's martingale convergence theorem, we have
$\mu^{(p)}_i \to \mu^{(p)}_\infty$ a.s.
\end{proof}
For $k < p$, it is clear from the moment matching condition that $\mu_i^{(k)}$ is still a martingale. What remains is to show that the martingales are bounded in $L_1$, which now follows.
\begin{lemma}\label{lem:pImplyLower} If $p$ is even, for any $k < p$, the $k$-th moment is bounded in $L_1$ under the moment martingale posterior.
\end{lemma}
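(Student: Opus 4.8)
\textbf{Proof proposal for Lemma \ref{lem:pImplyLower}.}

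The plan is to bound the absolute value of the $k$-th empirical moment by a convex combination of $1$ and the $p$-th empirical moment, and then take conditional expectations. First I would fix any $k < p$ and any realization $y_{1:i}$ produced by predictive resampling, and apply Jensen's inequality (or equivalently the power-mean inequality) to the empirical measure: since $t \mapsto |t|^{p/k}$ is convex for $p/k \geq 1$, we have $\bigl(\tfrac{1}{i}\sum_{j=1}^i |y_j|^k\bigr)^{p/k} \leq \tfrac{1}{i}\sum_{j=1}^i |y_j|^p = \mu_i^{(p)}$, where the last equality uses that $p$ is even so $|y_j|^p = y_j^p$. Hence $|\mu_i^{(k)}| \leq |\mu_i|^{(k)} \leq \bigl(\mu_i^{(p)}\bigr)^{k/p}$. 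Using the elementary bound $x^{k/p} \leq 1 + x$ for $x \geq 0$ (since $k/p \in (0,1)$), this gives the clean pointwise inequality $|\mu_i^{(k)}| \leq 1 + \mu_i^{(p)}$.

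Next I would take conditional expectations given $\mathcal{F}_n$ on both sides. By Lemma \ref{lem:positive moment} (more precisely, by the martingale property established in its proof, namely $E[\mu_i^{(p)} \mid \mathcal{F}_n] = \mu_n^{(p)}$), we obtain
\begin{align*}
E\bigl[|\mu_i^{(k)}| \,\big|\, \mathcal{F}_n\bigr] \leq 1 + E\bigl[\mu_i^{(p)} \,\big|\, \mathcal{F}_n\bigr] = 1 + \mu_n^{(p)}.
\end{align*}
Since the right-hand side does not depend on $i$, taking the supremum over $i \geq n$ yields $\sup_{i \geq n} E\bigl[|\mu_i^{(k)}| \mid \mathcal{F}_n\bigr] \leq 1 + \mu_n^{(p)} < \infty$, which is exactly the statement that the martingale $(\mu_i^{(k)})_{i \geq n}$ is bounded in $L_1$. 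Combined with the fact noted before the lemma that $\mu_i^{(k)}$ is a martingale (immediate from the moment-matching identity, exactly as in the proof of Proposition \ref{prop:mom_mart}), Lemma \ref{lem:1} then delivers $\mu_i^{(k)} \to \mu_\infty^{(k)}$ a.s., completing the even-$p$ case of Theorem \ref{thm:1}.

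I do not anticipate a serious obstacle here; the only mild subtlety is making sure the convexity/power-mean step is applied to $|y_j|^k$ rather than $y_j^k$ (so that it works uniformly for odd $k < p$ as well), and then invoking the evenness of $p$ only at the final step to identify $\sum |y_j|^p$ with $\sum y_j^p = i\,\mu_i^{(p)}$, which is the non-negative martingale from Lemma \ref{lem:positive moment}. One could alternatively use Hölder's inequality in place of the power-mean inequality, or bound $x^{k/p}$ by $\max\{1,x\} \leq 1+x$; any of these routes gives a uniform-in-$i$ $L_1$ bound, so the argument is robust.
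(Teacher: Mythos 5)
Your proposal is correct and follows essentially the same route as the paper: both bound $|\mu_i^{(k)}| \leq \bigl(\mu_i^{(p)}\bigr)^{k/p}$ via the power-mean/H\"{o}lder inequality together with the evenness of $p$, and then invoke the martingale property of $\mu_i^{(p)}$ from Lemma \ref{lem:positive moment} to get a uniform-in-$i$ bound. The only cosmetic difference is that you finish with the elementary bound $x^{k/p} \leq 1+x$ whereas the paper applies Jensen's inequality a second time to the conditional expectation; both steps are valid and interchangeable.
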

\begin{proof} 
A standard result from Jensen's and H\"{o}lder's inequalities is that 
$$
\left|E\left[X^k\right]\right| \leq E\left[\left|X\right|^k\right]\leq E\left[\left|X \right|^p\right]^{k/p}
$$
for $p > k$. In our setting then, we have
\begin{align*}
    \left|\mu_i^{(k)} \right| \leq \left(\left|\mu_i\right|^{(p)}\right)^{k/p} = \left(\mu_i^{(p)}\right)^{k/p}  
\end{align*}
where the equality follows from the fact that $p$ is even. Taking expectations, we have 
\begin{align*}
    E\left[\left|\mu_i^{(k)} \right|\mid \mathcal{F}_n\right] &\leq E\left[\left(\mu_i^{(p)}\right)^{k/p} \mid \mathcal{F}_n \right]  \\
    &\leq E\left[\mu_i^{(p)} \mid \mathcal{F}_n \right]^{k/p} \end{align*}
    where the second inequality follows from Jensen's inequality.
From Lemma \ref{lem:positive moment}, we have that $\sup_{i \geq n}E\left[\mu_i^{(p)} \mid \mathcal{F}_n \right] < \infty$, which implies the desired result
$\sup_{i \geq n} E\left[\left|\mu_i^{(k)} \right|\mid \mathcal{F}_n\right] \leq \infty$.
\end{proof}
Putting the above together, since $\mu_i^{(k)}$ is a martingale bounded in $L_1$, we have the desired a.s. convergence.

\subsubsection{Odd $p$}
When $p$ is odd, we require the additional assumption given in Theorem \ref{thm:1}, which allows us to control the absolute value of the $p$-th moment. Once again, we verify that the moment martingales are bounded in $L_1$.
\begin{lemma}\label{lem:oddp} If $p$ is an odd integer,  then the $p$-th moment $\mu_i^{(p)}$ is bounded in $L_1$ if there exist some non-negative $a$ and $b$ such that
    $$|\mu_\theta|^{(p)} \leq a|\mu^{(p)}_\theta |+ b.$$
\end{lemma}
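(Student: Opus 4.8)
The plan is to show that $\sup_{i \geq n} E[|\mu_i^{(p)}| \mid \mathcal{F}_n] < \infty$ by bounding the expectation of $|\mu_i^{(p)}|$ in terms of the expectation of the (genuine, signed) moment $\mu_i^{(p)}$, which we already control since $\mu_i^{(p)}$ is a martingale. The key object to analyze is the empirical absolute moment $|\mu_i|^{(p)} = i^{-1}\sum_{j=1}^i |y_j|^p$, since the triangle inequality gives $|\mu_i^{(p)}| \leq |\mu_i|^{(p)}$ pointwise. So it suffices to show $\sup_{i \geq n} E\big[\,|\mu_i|^{(p)} \mid \mathcal{F}_n\big] < \infty$.

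First I would derive a recursion for $E[\,|\mu_{i+1}|^{(p)} \mid \mathcal{F}_i]$ analogous to the one in the proof of Proposition \ref{prop:mom_mart}. Writing $|\mu_{i+1}|^{(p)} = \frac{i}{i+1}|\mu_i|^{(p)} + \frac{1}{i+1}|Y_{i+1}|^p$ and conditioning on $\mathcal{F}_i$, the mixture predictive (\ref{eq:2}) gives
\begin{align*}
E\big[\,|Y_{i+1}|^p \mid \mathcal{F}_i\big] = \frac{c}{c+i}\,|\mu_{\theta_i}|^{(p)} + \frac{i}{c+i}\,|\mu_i|^{(p)}.
\end{align*}
Now I invoke the hypothesis $|\mu_{\theta_i}|^{(p)} \leq a\,|\mu_{\theta_i}^{(p)}| + b$, and use the moment-matching identity $\mu_{\theta_i}^{(p)} = \mu_i^{(p)}$ together with $|\mu_i^{(p)}| \leq |\mu_i|^{(p)}$, to get $|\mu_{\theta_i}|^{(p)} \leq a\,|\mu_i|^{(p)} + b$. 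Substituting back yields a bound of the form
\begin{align*}
E\big[\,|\mu_{i+1}|^{(p)} \mid \mathcal{F}_i\big] \leq |\mu_i|^{(p)} + \frac{1}{i+1}\cdot\frac{c}{c+i}\,\big(a\,|\mu_i|^{(p)} + b - |\mu_i|^{(p)}\big),
\end{align*}
i.e. $E[\,|\mu_{i+1}|^{(p)} \mid \mathcal{F}_i] \leq \big(1 + \frac{c(a-1)}{(i+1)(c+i)}\big)|\mu_i|^{(p)} + \frac{cb}{(i+1)(c+i)}$. Since $\sum_i \frac{c|a-1|}{(i+1)(c+i)} < \infty$ and $\sum_i \frac{cb}{(i+1)(c+i)} < \infty$, this exhibits $|\mu_i|^{(p)}$ as a nonnegative \emph{almost supermartingale} in the sense of \cite{robbins1971convergence} (more precisely, the perturbation is summable, so $E[\,|\mu_i|^{(p)} \mid \mathcal{F}_n]$ stays bounded: iterating the recursion and taking expectations gives $E[\,|\mu_i|^{(p)} \mid \mathcal{F}_n] \leq |\mu_n|^{(p)}\prod_{j \geq n}(1 + c_j) + (\text{summable})$, with $\prod(1+c_j) < \infty$). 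Hence $\sup_{i \geq n} E[\,|\mu_i|^{(p)} \mid \mathcal{F}_n] < \infty$, and therefore $\sup_{i \geq n} E[\,|\mu_i^{(p)}| \mid \mathcal{F}_n] < \infty$, which is the claim.

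The main obstacle I anticipate is handling the case $a > 1$ cleanly: when $a \leq 1$ the coefficient of $|\mu_i|^{(p)}$ in the recursion is $\leq 1$ and one gets a genuine supermartingale-type bound immediately, but when $a > 1$ one must argue that the multiplicative factors $\prod_{j \geq n}\big(1 + \frac{c(a-1)}{(j+1)(c+j)}\big)$ converge (which they do, since the summands are $O(j^{-2})$), so that the conditional expectations remain uniformly bounded rather than growing. A secondary, purely bookkeeping point is to make sure the argument does not secretly require $|\mu_i|^{(p)}$ itself to converge (it will, by the Robbins–Siegmund theorem, but for this lemma we only need the $L_1$ bound) — and to note that $\mu_i^{(p)}$ being a martingale in $L_1$ then delivers its a.s. convergence by Lemma \ref{lem:1}, with the lower moments $k < p$ following exactly as in Lemma \ref{lem:pImplyLower} but using $|\mu_i^{(p)}| \leq |\mu_i|^{(p)}$ in place of the even-$p$ identity.
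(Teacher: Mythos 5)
Your proposal is correct and follows essentially the same route as the paper's proof: bound $|\mu_i^{(p)}|$ by the empirical absolute moment $|\mu_i|^{(p)}$, derive the one-step recursion from the mixture predictive, invoke the hypothesis together with moment matching to get $|\mu_{\theta_i}|^{(p)} \leq a|\mu_i|^{(p)} + b$, and iterate the resulting $\left(1 + O(i^{-2})\right)$-perturbed recursion to conclude uniform boundedness of the conditional expectations. The only cosmetic difference is that you retain the $(a-1)$ coefficient where the paper simply drops the negative term, which changes nothing in the conclusion.
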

\begin{proof}
First, we analogously define the empirical absolute value moment as
$$
|\mu_n|^{(p)} = \frac{1}{n}\sum_{i = 1}^n |Y_i|^p.
$$
It is immediate that $|\mu_i^{(p)}| \leq |\mu_i|^{(p)}$, so we study the latter. We note that
\begin{align*}
        E\left[|\mu_{i+1}|^{(p)}\Big|\mathcal{F}_{i}\right] 
        &= \frac{i}{i+1} |\mu_i|^{(p)}+ \frac{1}{i+1} E\left[|Y_{i+1}|^{p} \mid \mathcal{F}_{i}\right] 
        \end{align*}
        where
        \begin{align*}
    E\left[|Y_{i+1}|^{p} \mid \mathcal{F}_{i}\right] &=\frac{c}{c+i}|\mu_{\theta_i}|^{(p)} + \frac{i}{c+i}|\mu_i|^{(p)} 
    \end{align*}
    Putting this together, we have
    \begin{align}\label{eq:mom_rec}
        E\left[|\mu_{i+1}|^{(p)}\Big|\mathcal{F}_{i}\right] &= |\mu_i|^{(p)} + \frac{c}{(i+1)(c+i)}\left[|\mu_{\theta_i}|^{(p)} - |\mu_i|^{(p)}\right] \nonumber\\
        &\leq |\mu_i|^{(p)} + \frac{c}{(i+1)(c+i)}|\mu_{\theta_i}|^{(p)} 
    \end{align}
Next, we apply the assumption 
\begin{align*}
    |\mu_\theta|^{(p)} &\leq a|\mu^{(p)}_\theta| + b  \\
    &=a|\mu^{(p)}_i|  + b\\
    &\leq a|\mu_i|^{(p)} + b
\end{align*}
where the second line follows from moment matching. Plugging this back into (\ref{eq:mom_rec}) gives
\begin{align*}
      E\left[|\mu_{i+1}|^{(p)}\Big|\mathcal{F}_{i}\right] 
        &\leq |\mu_i|^{(p)} \left[ 1 + \frac{ac}{(i+1)(c+i)}\right] + \frac{bc}{(i+1)(c+i)}\\
         &\leq |\mu_i|^{(p)} \left[ 1 + O(i^{-2})\right] + O(i^{-2})
\end{align*}
Iterating the expectation, we can show that
\begin{align*}
    E\left[|\mu_{i+1}|^{p}\mid \mathcal{F}_n\right] \leq |\mu_n|^{(p)}\prod_{j = n}^i \left[1 + O(j^{-2})\right] + \sum_{j = n}^i O(j^{-2})\prod_{k = j+1}^{i} \left[1 + O(k^{-2})\right]
    \end{align*}
Since $|\mu_n|^{(p)} < \infty$, standard arguments give $\sup_i E\left[|\mu_i|^{(p)} \mid \mathcal{F}_n\right] < \infty$, which gives the desired $\sup_i E\left[|\mu_i^{(p)}|\right] < \infty$.
\end{proof}
Since $p-1$ is even, we can then apply Lemmas \ref{lem:positive moment} and \ref{lem:pImplyLower} to show convergence of all moments $k < p$ in the odd case.
 
In practice, we may not want to work with the absolute moment, as it may involve manually computing an integral. Instead, an easier assumption to check is based on the following.
\begin{lemma}\label{lem:oddp2} Suppose $p$ is an odd integer. Let us assume that the parametric $(p+1)$-th moment exists and there exist some non-negative constants $a'$ and $b'$ such that
$$
\mu_\theta^{(p+1)} \leq  a' |\mu_\theta^{(p)}|^{\frac{p+1}{p}} + b'.
$$
Then the condition of Lemma \ref{lem:oddp} holds, i.e. there exist some non-negative $a$ and $b$ such that
    $$|\mu_\theta|^{(p)} \leq a|\mu^{(p)}_\theta |+ b.$$
\end{lemma}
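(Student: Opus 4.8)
The plan is to bound the $p$-th absolute moment $|\mu_\theta|^{(p)}=E_{f_\theta}[|Y|^p]$ in terms of $|\mu_\theta^{(p)}|$ by passing through the $(p+1)$-th moment. The crucial observation is that since $p$ is odd, $p+1$ is even, so $\mu_\theta^{(p+1)}=E_{f_\theta}[|Y|^{p+1}]=|\mu_\theta|^{(p+1)}\ge 0$; in particular the signed and absolute $(p+1)$-th moments coincide, which is exactly what makes the hypothesis usable.

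First I would apply Jensen's inequality (equivalently Lyapunov's moment inequality) to the probability density $f_\theta$ using the convex map $z\mapsto z^{(p+1)/p}$, which is convex because $(p+1)/p>1$. This yields
\[
\bigl(|\mu_\theta|^{(p)}\bigr)^{(p+1)/p}=\bigl(E_{f_\theta}[|Y|^p]\bigr)^{(p+1)/p}\le E_{f_\theta}\bigl[|Y|^{p+1}\bigr]=\mu_\theta^{(p+1)}.
\]
Next, substituting the hypothesis $\mu_\theta^{(p+1)}\le a'|\mu_\theta^{(p)}|^{(p+1)/p}+b'$ gives
\[
\bigl(|\mu_\theta|^{(p)}\bigr)^{(p+1)/p}\le a'\,|\mu_\theta^{(p)}|^{(p+1)/p}+b'.
\]
Finally I would raise both sides to the power $s:=p/(p+1)\in(0,1)$ and use the elementary subadditivity $(u+v)^s\le u^s+v^s$ for $u,v\ge 0$ (which holds since $t\mapsto t^s$ is concave and vanishes at $0$), obtaining
\[
|\mu_\theta|^{(p)}\le (a')^{p/(p+1)}\,|\mu_\theta^{(p)}|+(b')^{p/(p+1)}.
\]
Setting $a=(a')^{p/(p+1)}$ and $b=(b')^{p/(p+1)}$ then gives the desired conclusion.

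I do not expect a serious obstacle here: the only mildly non-obvious step is recognizing that one should move \emph{up} to the even exponent $p+1$ (where absolute and signed moments agree) and then come back \emph{down} via Jensen, rather than attempting to control $|\mu_\theta|^{(p)}$ directly against $|\mu_\theta^{(p)}|$. The subadditivity of $t\mapsto t^s$ and the convexity of $z\mapsto z^{(p+1)/p}$ are both elementary. I would also note that the assumed existence of the parametric $(p+1)$-th moment is precisely what guarantees $\mu_\theta^{(p+1)}<\infty$, hence $|\mu_\theta|^{(p)}<\infty$, so every quantity above is finite and all the manipulations are legitimate.
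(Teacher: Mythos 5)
Your proposal is correct and follows essentially the same route as the paper: both pass to the even $(p+1)$-th moment (your Jensen step is exactly the paper's H\"{o}lder/Lyapunov inequality $|\mu_\theta|^{(p)} \leq (\mu_\theta^{(p+1)})^{p/(p+1)}$), substitute the hypothesis, and come back down. The only difference is cosmetic: where the paper says ``one can thus find $a$ and $b$,'' you make the constants explicit via the subadditivity $(u+v)^s \leq u^s + v^s$, giving $a = (a')^{p/(p+1)}$ and $b = (b')^{p/(p+1)}$.
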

\begin{proof}
The assumption in Theorem \ref{thm:1} is an upper bound on $|\mu_\theta|^{(p)}$. From H\"{o}lder's inequality, we have
\begin{align*}
|\mu_\theta|^{(p)} &\leq \left(|\mu_\theta|^{(p+1)}\right)^{\frac{p}{p+1}} = \left(\mu_\theta^{(p+1)}\right)^{\frac{p}{p+1}}
\end{align*}
where the equality follows as $p+1$ is even. Plugging in the assumption, we have
$$
|\mu_\theta|^{(p)} \leq \left(a' |\mu_\theta^{(p)}|^{\frac{p+1}{p}} + b'\right)^{\frac{p}{p+1}}.
$$
One can thus find $a$ and $b$ such that
$$
|\mu_\theta|^{(p)} \leq a |\mu_\theta^{(p)}| + b
$$
as required.
\end{proof}
The above assumption, while stronger, may be nicer to check in practice, as it does not depend on any absolute moments.

\subsubsection{Example}
\begin{example}
Consider a three parameter gamma distribution with $\alpha,\beta,\theta$ representing shape, scale and location respectively. Here, as $p = 3$, we verify the condition of Lemma \ref{lem:oddp2}. As we have $$\mu_\theta^{(3)}=\theta^3+3\theta^2\beta\alpha+3\theta\beta^2\alpha(\alpha+1)+\beta^3\alpha(\alpha+1)(\alpha+2)$$
and 
$$\mu_\theta^{(4)}=\theta^4+4\theta^2\beta\alpha+6\theta^2\beta^2\alpha(\alpha+1)+4\theta\beta^3\alpha(\alpha+1)(\alpha+2)+\beta^4\alpha(\alpha+1)(\alpha+2)(\alpha+3),$$
we obviously have $$\mu_\theta^{(4)}=O\left(\left(|\mu_\theta^{(3)}|\right)^{\frac{4}{3}}\right).$$
\end{example}

\subsection{Proof of Proposition \ref{prop:exist}}
\subsubsection{Almost supermartingale}
Before we prove Proposition \ref{prop:exist}, we present a specialization of the almost supermartingale convergence theorem of \citet{robbins1971convergence}, which will be used in the proof of Proposition \ref{prop:exist} and \ref{prop:higher}, as well as later results. The original result is more general than the version we provide below.

\begin{lemma} [\citet{robbins1971convergence}]
    A non-negative adapted sequence of random variables $X_1,X_2,\ldots$ is an almost supermartingale if it satisfies
    $$E[X_{i+1}|\mathcal{F}_{i}]\leq (1+a_i)X_{i}+b_i$$
    where $a_i, b_i$ are non-negative adapted random variables. 
    Suppose all the following are satisfied:
    \begin{enumerate}
        \item $\sum_{i=1}^\infty a_i<\infty$ a.s., and 
        \item $\sum_{i=1}^\infty b_i<\infty$ a.s.
    \end{enumerate}
    Then we have $X_i{\to} X_\infty$ a.s., where $ X_\infty$ is finite a.s.
    \label{lemma:Convergence theorem of almost-supermartingale}
\end{lemma}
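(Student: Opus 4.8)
The plan is to prove this as a specialization of the classical argument of \citet{robbins1971convergence}, with the negative drift term absent ($c_i\equiv 0$) and only the convergence conclusion retained. There are two routes: one could simply quote the general almost-supermartingale theorem of \citet{robbins1971convergence} and set $c_i=0$; but since only the present statement is available, I would give the self-contained derivation, which proceeds by first neutralising the multiplicative perturbations $a_i$, then the additive perturbations $b_i$, and finally handling the fact that the resulting supermartingale is bounded below only by an a.s.-finite (not necessarily integrable) random variable.

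First I would rescale. Set $A_i=\prod_{j=1}^{i-1}(1+a_j)$, with $A_1=1$; since $a_i\ge 0$ and $\sum_i a_i<\infty$ a.s., $A_i$ increases a.s. to a finite limit $A_\infty\in[1,\infty)$, and $A_{i+1}=(1+a_i)A_i\ge A_i\ge 1$. For $\tilde X_i:=X_i/A_i\ge 0$ the hypothesis becomes
\[
E[\tilde X_{i+1}\mid\mathcal{F}_i]=\frac{E[X_{i+1}\mid\mathcal{F}_i]}{A_{i+1}}\le\frac{(1+a_i)X_i}{A_{i+1}}+\frac{b_i}{A_{i+1}}\le\tilde X_i+b_i,
\]
so it suffices to treat the case $a_i\equiv 0$. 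Next, with $a_i\equiv 0$, put $B_i:=\sum_{j=1}^{i-1}b_j$, which increases a.s. to a finite limit $B_\infty$. Then $Y_i:=\tilde X_i-B_i$ satisfies $E[Y_{i+1}\mid\mathcal{F}_i]\le\tilde X_i+b_i-B_{i+1}=Y_i$, i.e.\ it is a supermartingale, and $Y_i\ge-B_i\ge-B_\infty$.

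The only delicate point — and the main obstacle — is that $B_\infty$ is merely a.s.\ finite, so Doob's supermartingale convergence theorem does not apply to $Y_i$ directly. I would resolve this by localisation: for each $K$, $\tau_K:=\inf\{i:B_{i+1}>K\}$ is a stopping time (as $B_{i+1}$ is $\mathcal{F}_i$-measurable), the stopped process $Y_{i\wedge\tau_K}$ is again a supermartingale, and $Y_{i\wedge\tau_K}\ge-B_{i\wedge\tau_K}\ge-K$, so $Y_{i\wedge\tau_K}+K\ge 0$ is a non-negative supermartingale and hence converges a.s. Because $B_\infty<\infty$ a.s., the events $\{\tau_K=\infty\}=\{B_\infty\le K\}$ exhaust the probability space as $K\to\infty$, and on $\{\tau_K=\infty\}$ one has $Y_i=Y_{i\wedge\tau_K}$ for all $i$; taking the union over $K$ shows $Y_i$ converges a.s.\ to a finite limit. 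Consequently $\tilde X_i=Y_i+B_i$ and $X_i=A_i\tilde X_i$ converge a.s.\ to finite limits, and $X_\infty\ge 0$ since $X_i\ge 0$, which is the claim. I expect the rescaling and the additive bookkeeping to be routine; the localisation step is where care is genuinely needed, since it is precisely what makes the a.s.-only summability hypotheses $\sum a_i<\infty$, $\sum b_i<\infty$ sufficient without any integrability assumption on the accumulated perturbations.
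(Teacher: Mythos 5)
Your argument is correct, but it is worth noting that the paper does not prove this lemma at all: it is stated as a specialization of the almost-supermartingale convergence theorem of \citet{robbins1971convergence} (with the negative drift term $c_i$ dropped) and is simply cited, with the remark that the original result is more general. Your self-contained derivation is essentially the standard proof of that cited theorem: normalize by $A_i=\prod_{j<i}(1+a_j)$ (finite a.s.\ since $\sum_j a_j<\infty$, and $\mathcal{F}_i$-measurable at index $i+1$ so it can be pulled out of the conditional expectation), subtract the accumulated additive perturbations $B_i=\sum_{j<i}b_j$ to obtain a supermartingale $Y_i$ bounded below by $-B_\infty$, and then localize with $\tau_K=\inf\{i:B_{i+1}>K\}$ so that $Y_{i\wedge\tau_K}+K$ is a non-negative supermartingale; since $\{B_\infty\le K\}\uparrow\Omega$ a.s., convergence follows on each such event and hence a.s. All of these steps check out, including the measurability of $\tau_K$ and the bound $B_{i\wedge\tau_K}\le K$. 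The only caveat is the usual implicit one, present already in the lemma as stated: to invoke Doob's convergence theorem for the non-negative supermartingale one needs $E[X_1]<\infty$ (or a further conditioning on $\mathcal{F}_1$), an integrability assumption that Robbins and Siegmund make but that the paper's statement of the lemma omits. What your write-up buys over the paper's treatment is a verifiable, self-contained proof of exactly the version used; what it costs is that the localization bookkeeping must be done explicitly, which the paper avoids by deferring to the more general published result.
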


\subsubsection{Weak convergence}
For Proposition \ref{prop:exist}, it suffices to prove that $\int g(y)\,dP_i(y)$ converges almost surely for all bounded and continuous functions $g$, as we can then apply  \citet[Theorem 2.2]{berti2006almost}. 
    Note that 
    $$\int g(y)\,dP_i(y)=\frac{c}{c+i}\int g(y)\,dF_{\theta_i}(y)+\frac{i}{c+i}\mathbb{P}_i[g]$$
    where $\mathbb{P}_i[g] = n^{-1}\sum_{j = 1}^i g(y_j)$. Since $g$ is bounded, there exists a finite scalar $B>0$ such that
\begin{align*}
    \int g(y)\, dF_{\theta_i}(y) \leq B, \quad \mathbb{P}_i[g] \leq B.
\end{align*}
This then implies
\begin{align*}
\left|\int g(y)\,dP_i(y) - \mathbb{P}_i[g] \right|&=\frac{c}{c+i}\left|\int g(y)\, dF_{\theta_i}(y) - \mathbbm{P}_i[g]\right|\\
&\leq \frac{2Bc}{c+i}\to 0 
\end{align*}
so it suffices to show almost sure convergence of $\mathbb{P}_i[g]$. 

Consider the following decomposition:  $g^+(y)=g(y) \mathbbm{1}\{g(y)\geq0\}$ and $g^-(y)=|g(y)| \mathbbm{1}\{g(y)<0\}$, which gives
$$
\mathbb{P}_i[g] = \mathbb{P}_{i}[g^+] - \mathbb{P}_{i}[g^-]
$$
 We aim to prove that both $\mathbb{P}_{i}[g^+]$ and  $\mathbb{P}_{i}[g^-]$ converge a.s., which implies their difference also converges. 
    To begin, we prove the convergence of $\mathbb{P}_{i}[g^+]$ by showing it is a almost supermartingale. We have that
    \begin{flalign} E\left[\mathbb{P}_{i+1}\left[g^+\right]\Big|\mathcal{F}_{i}\right] =  \frac{i}{i+1} \mathbb{P}_{i}\left[g^+\right]+ \frac{1}{i+1} E\left[g^+(Y_{i+1})\Big|\mathcal{F}_{i}\right] \label{eq:empirical moment of g(y)}
    \end{flalign}
    where
    $$E\left[g^+(Y_{i+1})\Big|\mathcal{F}_{i}\right]=\frac{c}{c+i} \int g^+(y)\,dF_{\theta_i}(y) + \frac{i}{c+i}\mathbb{P}_{i}\left[g^+\right]. $$
    The boundedness of $g$ immediately gives
 \begin{align*}
    E\left[g^+(Y_{i+1})\Big|\mathcal{F}_{i}\right]\leq \frac{Bc}{c+i}  + \frac{i}{c+i}\mathbb{P}_i[g^+]
\end{align*}
Plugging this into the (\ref{eq:empirical moment of g(y)}) gives
\begin{align*}
    E\left[ \mathbb{P}_{i+1}\left[g^{+}\right] \mid \mathcal{F}_i\right] \leq \left[\frac{i}{i+1}+ \frac{i}{(c+i)(i+1)}\right]\mathbb{P}_i\left[g^+\right] + \frac{Bc}{(c+i)(i+1)}
\end{align*}
Since $N/(c+i) \leq 1$, we have
\begin{align*}
        E\left[ \mathbb{P}_{i+1}\left[g^{+}\right] \mid \mathcal{F}_i\right] \leq \mathbb{P}_i\left[g^+\right] + \frac{Bc}{(c+i)(i+1)}
\end{align*}
so $\mathbb{P}_i[g^+]$ is an almost supermartingale. It is clear that
$$
\sum_{i = n}^\infty \frac{Bc}{(c+i)(i+1)} < \infty,
$$
so we can apply Lemma \ref{lemma:Convergence theorem of almost-supermartingale} to guarantee a.s. convergence of $\mathbb{P}_i[g^+]$.

The same proof can be used to show that $\mathbb{P}_{i}\left[g^{-}\right]$ converges a.s. Together we have that $\mathbb{P}_i[g]$ converges a.s. for each bounded and continuous $g$.

\subsection{Proof of Proposition \ref{prop:higher}}\label{sec:app_proofhigher}
    The goal is to show that $\mu_i^{(k)}$ converges almost surely, where $k \geq p$.
    \subsubsection{Even $k$}
    We begin with the case where $k$ is even, as $\mu_i^{(k)}$ is non-negative. To begin, we can carry out the same computation as (\ref{eq:mom_rec}) in  Lemma \ref{lem:oddp} to obtain:
    \begin{align*}
        E\left[\mu_{i+1}^{(k)}\Big|\mathcal{F}_{i}\right] &= \mu_i^{(k)} + \frac{c}{(i+1)(c+i)}\left[\mu_{\theta_i}^{(k)} - \mu_i^{(k)}\right]\\
        &\leq \mu_i^{(k)} + \frac{c}{(i+1)(c+i)}\mu_{\theta_i}^{(k)} 
    \end{align*}
It is clear that $\mu_{i+1}^{(k)}$ is an almost supermartingale, so from Lemma \ref{lemma:Convergence theorem of almost-supermartingale}, it converges if
\begin{align*}
    \sum_{j = n}^\infty \frac{c}{(j+1)(c+j)} \, \mu_{\theta_j}^{(k)} < \infty \quad \text{a.s.}
\end{align*}
By assumption, $\mu_{\theta}^{(k)}$ is bounded by a continuous function $g(\theta)$. Since $\theta_i \to \theta_\infty$ a.s., we have  $g(\theta) \to g(\theta_\infty) < \infty$ a.s. 
Finally, this implies $\mu_{\theta_i}^{(k)}$ is bounded a.s., so the above term is finite a.s.

\subsubsection{Odd $k$}
When $k$ is odd, the story is slightly more challenging as $\mu_i^{(k)}$ is not guaranteed to be non-negative. Following the proof of Proposition \ref{prop:mom_mart}, we can decompose the moment into
\begin{align*}
    \mu_i^{(k)} = \mu_i^{(k)+} - \mu_i^{(k)-}
\end{align*}
where
\begin{align*}
    \mu_i^{(k)+} &= \frac{1}{i}\sum_{j = 1}^i \mathbbm{1}\left(y_j^k \geq 0\right) y_j^k, \\
    \mu_i^{(k)-} &= \frac{1}{i}\sum_{j = 1}^i \mathbbm{1}\left(y_j^k < 0\right) \left|y_j^k\right|.
\end{align*}
We can compute
\begin{align*}
    E\left[\mu_i^{(k)+} \mid \mathcal{F}_i\right] &= \frac{i}{i+1} \mu_{i}^{(k)+} + \frac{1}{i+1}E\left[\mathbbm{1}\left(Y_{i+1} \geq 0\right) Y_{i+1}^k \mid \mathcal{F}_i\right]\\
    &=   \mu_i^{(k)+} + \frac{c}{(i+1)(c+i)}\left[\mu_{\theta_{i}}^{(k)+} - \mu_i^{(k)+}\right]\\
    &\leq    \mu_i^{(k)+} + \frac{c}{(i+1)(c+i)}\mu_{\theta_{i}}^{(k)+}
\end{align*}
where the definition of $\mu_{\theta}^{(k)+}$ is
\begin{align*}
    \mu_{\theta}^{(k)+} = \int \mathbbm{1}\left(y^k \geq 0\right)y^k\,  f_{\theta}(y) \, dy.
\end{align*}
Again, this is an almost supermartingale, which converges if
\begin{align*}
    \sum_{j = n}^i \frac{c}{(j+1)(c+j)}\, \mu_{\theta_j}^{(k)+} < \infty \quad \text {a.s.}
\end{align*}
It is clear that
\begin{align*}
    \mu_{\theta}^{(k)+} \leq \left|\mu_{\theta}\right|^{(k)}
\end{align*}
so $\mu_{\theta_i}^{(k)+}$ is bounded a.s. if $\left|\mu_{\theta_i}\right|^{(k)}$ converges a.s. 
By assumption, $|\mu_{\theta}|^{(k)}$ is bounded by $g(\theta)$ which is continuous in $\theta$, so $\left|\mu_{\theta_i}\right|^{(k)}$ is bounded a.s. again. The same proof can then be applied to show a.s. convergence of $\mu_{i}^{(k)-}$, so $\mu_i^{(k)} \to \mu_{\infty}^{(k)}$ a.s.

\subsubsection{Example}
 We now demonstrate the verification of the assumptions in Proposition \ref{prop:higher} in the below examples.
\begin{example}
For $f_\theta = \mathcal{N}(\mu,\sigma^2)$ where $\theta = (\mu,\sigma^2)$, the 4-th non-central moment is
$$
\mu_\theta^{(4)} = \mu^4 + 6\mu^2 \sigma^2 + 15 \mu \sigma^4.
$$
This is clearly a continuous function of $(\mu,\sigma^2)$ and is bounded for each $(\mu,\sigma^2)$. Furthermore, $k$ is even so $|\mu_\theta|^{(4)} = \mu_\theta^{(4)}$ giving us the required upper bound.

For $k = 3$, we can avoid calculating the absolute moment by using H\"{o}lder's inequality:
$$
|\mu_{\theta}|^{(3)} \leq \left(\mu_\theta^{(4)}\right)^{\frac{3}{4}}.
$$
We can then directly use $g(\theta) = (\mu_\theta^{(4)})^{\frac{3}{4}}$, which is again continuous and bounded for each $\theta$.
\label{exp3}
\end{example}\vspace{10mm}

\setcounter{prop}{0}
\setcounter{corollary}{0}
\setcounter{theorem}{0}
\setcounter{lemma}{0}
\setcounter{algorithm}{0}
\setcounter{equation}{0}
\setcounter{figure}{0}
\setcounter{table}{0}
\setcounter{example}{0}
\section{Appendix: Energy score} \label{sec:app_energy}
\subsection{Energy score and distance}\label{sec:app_lam}
A standard result is that the squared energy distance between $\mathbb{Q}$ and $P_n$ can be written as
\begin{align*}
D^2(\mathbb{Q},P_n) &= - S(\mathbb{Q},P_n)-E|Y-Y'|\\
    &= 2E[|Y-X|] - E[|X-X'|]  -  E[|Y-Y'|] 
\end{align*}
where $Y,Y'\sim \mathbb{Q}$ and $X,X'\sim P_n$. Since the term $E|Y-Y'|$ does not contain $c$, maximizing the expected energy score is the same as minimizing the energy distance. The proofs presented in  Appendix \ref{sec:app_energy} will thus consider minimizing the energy distance as this simplifies the proof.

To avoid cumbersome notation in the proofs, we deviate from the notation of the main paper. Let our hold out test set be of size $v$ and be denoted as $\{{y}_i\}_{i = 1}^v$, and the training set of size $n$ will be written as $\{x_i\}_{i = 1}^n$. We denote the empirical distribution of the test and training set as $\mathbb{P}_v$ and $\mathbb{P}_n$ respectively.
It is clear that in practice, the training and test set will be partitions of the $(n+v)$ total observations.

First, consider the following theoretical limit terms:
\begin{align*}
    A&=\iint |y'-y|\, q(y')\, q(y) \,dy'\,dy\\
    B&=\iint |x-y|\, f_{\theta^*}(x)\,q(y) \,dx \, dy\\
    C&=\iint |x-x'|\,f_{\theta^*}(x)\,f_{\theta^*}(x')\, dx\,dx'
    \end{align*}
where $\theta^*$ is the limit of $\theta_n$ (in probability) and $q$ is the density of $\mathbb{Q}$. We will also consider various empirical estimates of the above terms:
    \begin{align*}
    \Psi_{A,v}=\frac{1}{n v}\sum_{i=1}^n \sum_{j = 1}^v |y_j-x_i|, &\quad 
     \Psi_{A,n} =\frac{1}{n^2}\sum_{i =1}^{n} \sum_{j = 1} ^{n}|x_i-x_j|,\\
    \Psi_{B,v}=\frac{1}{v}\sum_{j = 1}^v\int |y_j-x|f_{\theta_n}(x)\, dx, &\quad 
        \Psi_{B,n}=\frac{1}{n}\sum_{i=1}^{n} \int |x'-x_i| f_{\theta_{n}}(x') dx',\\
    \Psi_{C,n} =\iint |x-x'| &f_{\theta_{n}}(x) f_{\theta_{n}}(x')\, dx \, dx'.
\end{align*}
The subscripts $\{A,B,C\}$ indicates which theoretical quantity the term is approximating, whereas $\{v, n\}$ indicate whether the validation set is involved or not.

\subsection{Proof of Proposition \ref{prop:concave}}

\subsubsection{Hold out energy distance} The following proposition immediately implies Proposition \ref{prop:concave} for the hold-out setting, that is for the score function given by (\ref{eq:est_lam}) and the optimal $\widehat{\lambda}_n$.
\begin{prop}\label{prop:d2_quad}
 The hold out squared energy distance $D^2(\mathbb{P}_v,P_n)$ can be written as a strictly convex quadratic function of $\lambda=c/(c+n)$ as follows:
\begin{flalign}
    D^2(\mathbb{P}_{v},P_n) &= (2\Psi_{B,n}-\Psi_{A,n}-\Psi_{C,n}) \, \lambda^2 - 2(\Psi_{A,v}-\Psi_{A,n}-\Psi_{B,v}+\Psi_{B,n})\, \lambda+K\label{eq:DLambda1}
\end{flalign}
where $K$ contains terms that do not depend on $\lambda$, and the coefficient of $\lambda^2$ is strictly positive. Furthermore, for $\lambda\in [0,1]$, the above is minimized at
\begin{align}
\widehat{\lambda}_n=\min\left\{1,\max\left\{0,\dfrac{\Psi_{A,v} - \Psi_{A,n} - \Psi_{B,v} + \Psi_{B,n}}{2\Psi_{B,n} - \Psi_{A,n}- \Psi_{C,n}}\right\}\right\}
\cdot \label{eq:app_opt_lam}
\end{align}
\label{lem:optimise lambda}
\end{prop}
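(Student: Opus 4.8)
The plan is to expand the squared energy distance $D^2(\mathbb{P}_v, P_n)$ directly using the mixture structure $P_n = \lambda F_{\theta_n} + (1-\lambda)\mathbb{P}_n$, collect the result as a polynomial in $\lambda$, and read off the coefficients. Recall from Appendix~\ref{sec:app_lam} that, up to terms not depending on $c$, minimizing the expected energy score is equivalent to minimizing
\[
D^2(\mathbb{P}_v, P_n) = 2\,E_{Y\sim\mathbb{P}_v,\,X\sim P_n}|Y-X| - E_{X,X'\iid P_n}|X-X'| + (\text{const}).
\]
First I would substitute $X \sim P_n$ in the cross term: since $P_n$ is a mixture, $E_{X\sim P_n}|y_j - X| = \lambda\int|y_j-x|f_{\theta_n}(x)\,dx + (1-\lambda)\cdot\frac{1}{n}\sum_i|y_j-x_i|$, which averaged over $y_j \in \{y_1,\dots,y_v\}$ gives exactly $\lambda\,\Psi_{B,v} + (1-\lambda)\,\Psi_{A,v}$. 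Next I would expand the self-interaction term using bilinearity of $(X,X')\mapsto|X-X'|$ over the two-component mixture: $E_{X,X'\iid P_n}|X-X'| = \lambda^2\Psi_{C,n} + 2\lambda(1-\lambda)\Psi_{B,n} + (1-\lambda)^2\Psi_{A,n}$, using the definitions of $\Psi_{A,n}, \Psi_{B,n}, \Psi_{C,n}$ as the three pairwise expectations.

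Assembling these, $D^2(\mathbb{P}_v,P_n) = 2\big[\lambda\Psi_{B,v} + (1-\lambda)\Psi_{A,v}\big] - \big[\lambda^2\Psi_{C,n} + 2\lambda(1-\lambda)\Psi_{B,n} + (1-\lambda)^2\Psi_{A,n}\big] + (\text{const})$. Expanding the squares in $(1-\lambda)$ and grouping powers of $\lambda$, the $\lambda^2$ coefficient is $-\Psi_{C,n} + 2\Psi_{B,n} - \Psi_{A,n}$ and the $\lambda^1$ coefficient is $2\Psi_{B,v} - 2\Psi_{A,v} - 2(\Psi_{B,n} - \Psi_{A,n})$, which matches the claimed display~(\ref{eq:DLambda1}) after factoring out the $-2$. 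The stationary point of this quadratic is $\lambda^\star = \dfrac{\Psi_{A,v} - \Psi_{A,n} - \Psi_{B,v} + \Psi_{B,n}}{2\Psi_{B,n} - \Psi_{A,n} - \Psi_{C,n}}$; since the feasible region is the compact interval $[0,1]$ and a strictly convex function restricted to an interval attains its minimum either at the interior stationary point or at whichever endpoint is nearer, clipping $\lambda^\star$ to $[0,1]$ yields~(\ref{eq:app_opt_lam}).

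The one genuine step requiring an argument rather than bookkeeping is strict positivity of the leading coefficient $2\Psi_{B,n} - \Psi_{A,n} - \Psi_{C,n}$, i.e. that this equals a strictly positive quantity. I would recognize $2\Psi_{B,n} - \Psi_{A,n} - \Psi_{C,n}$ as $-D^2(\mathbb{P}_n, F_{\theta_n})$, the negative squared energy distance between the empirical training measure and the fitted parametric component; by the characterization of the energy distance as a genuine metric (equivalently, the negative-definiteness of the kernel $-|x-x'|$, which is what makes the energy score strictly proper, as cited via \cite{gneiting2007strictly,rizzo2016}), this is strictly positive unless $\mathbb{P}_n = F_{\theta_n}$, which cannot happen since $\mathbb{P}_n$ is discrete and $F_{\theta_n}$ is (assumed) continuous. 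Hence the coefficient of $\lambda^2$ is strictly positive, the quadratic is strictly convex, and the minimizer is unique. This is the only place where one leans on a nontrivial property (conditional negative-definiteness of the Euclidean-distance kernel) rather than algebra; everything else is substitution and collecting terms. An entirely analogous computation with $\mathbb{P}_v$ replaced by $\mathbb{Q}$ and $n$ replaced throughout gives the statement for $S(\mathbb{Q}, P_n)$ and $\lambda_n$, completing Proposition~\ref{prop:concave}.
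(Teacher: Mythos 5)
Your proposal is correct and follows essentially the same route as the paper: expand the mixture in both the cross term and the self-interaction term, collect powers of $\lambda$, identify the leading coefficient with the squared energy distance between $F_{\theta_n}$ and $\mathbb{P}_n$ (strictly positive since a continuous distribution cannot equal a discrete one and the energy distance is a metric), and clip the vertex of the convex quadratic to $[0,1]$. One sign slip to fix: with the convention $D^2(\mathbb{Q},\mathbb{P}) = 2E|Y-X| - E|X-X'| - E|Y-Y'|$ used here, the leading coefficient satisfies $2\Psi_{B,n}-\Psi_{A,n}-\Psi_{C,n} = +\,D^2(F_{\theta_n},\mathbb{P}_n)$, not its negative, which is exactly what makes it strictly positive and your convexity conclusion valid as stated.
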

\begin{proof}
Expanding out the predictive mixture form, the energy distance takes the form
\begin{align*}
     D^2(\mathbb{P}_{v},P_n)&= 2\left[\lambda\Psi_{B,v} + (1-\lambda)\Psi_{A,v}\right] -\left[\lambda^2 \Psi_{C,n} +2\lambda(1-\lambda)\Psi_{B,n}+\left(1-\lambda\right)^2 \Psi_{A,n}\right]\nonumber\\
     &- \iint |y-y'| \, d\mathbb{P}_{v}(y) \, d\mathbb{P}_v(y')\\
    &= (2\Psi_{B,n}-\Psi_{A,n} -\Psi_{C,n})\lambda^2 - 2(\Psi_{A,v}-\Psi_{A,n}-\Psi_{B,v}+\Psi_{B,n})\lambda + K 
    \end{align*}
Completing the square then gives
\begin{align*}
   D^2(\mathbb{P}_{v},P_n) &= (2\Psi_{B,n} -\Psi_{A,n}-\Psi_{C,n}) \left(\lambda-\frac{\Psi_{A,v} -\Psi_{A,n}-\Psi_{B,v}+\Psi_{B,n}}{2\Psi_{B,n}-\Psi_{A,n}-\Psi_{C,n}}\right)^2 \nonumber + K'
\end{align*}
where again $K'$ does not depend on $\lambda$.  We note the key fact that
$$2\Psi_{B,n}-\Psi_{A,n}-\Psi_{C,n}=D^2(F_{\theta_n},\mathbb{P}_n) > 0$$
for each $n$. Note that the inequality is strict, as $F_{\theta}$ is a continuous distribution and $\mathbb{P}_n$ is discrete, and the energy distance is a proper metric. This positivity then guarantees strict convexity. As a result, to minimize $D^2(\mathbb{P}_v, P_n)$, we require 
$$
 \left( \lambda -\frac{\Psi_{A,v}-\Psi_{A,n}-\Psi_{B,v}+\Psi_{B,n}}{2\Psi_{B,n}-\Psi_{A,n}-\Psi_{C,n}}\right)^2  = 0
$$
which gives the result if the solution of the above lies in the range $[0,1]$. If the solution lies outside $[0,1]$, since the energy distance (\ref{eq:DLambda1}) is quadratic in $\lambda$, the minimum on $\lambda \in [0,1]$ will be the closest boundary point to the global minimum, which is attained by clipping appropriately as in (\ref{eq:app_opt_lam}).
\end{proof}

\subsubsection{Theoretical and cross-validated energy distance}
 Proposition \ref{prop:d2_quad} can be straightforwardly extended to the case where $D^2(\mathbb{P}_v,P_n)$ is replaced with the theoretical quantity $D^2(\mathbb{Q},P_n)$, which would then imply the the second part of Proposition \ref{prop:concave} regarding the theoretical score (\ref{eq:opt_lam}) and the optimal $\lambda_n$. For completion, we also consider the cross-validated estimate
 \begin{align*}
   \widehat{D}^2(\mathbb{Q},P_n) = \frac{1}{J}\sum_{j = 1}^J D^2\left(\mathbb{P}_{v}^{(j)},P_{n}^{(j)} \right)
 \end{align*}
 where we consider $J$ splits of the $(n+v)$ data points into $v$ validation and $n$ training points (sticking to the Appendix notation). In the above $\mathbb{P}_{v}^{(j)}$ corresponds to the empirical distribution of the $j$-th validation set, and $P_{n}^{(j)}$ is the mixture predictive fitted to the $j$-th training set. 

 We state the above two extensions formally below.
 \begin{corollary}\label{cor:lambda_cv}
Proposition \ref{prop:d2_quad} holds if  $D^2(\mathbb{P}_v,P_n)$ is replaced with $D^2(\mathbb{Q},P_n)$, and $\Psi_{A,v}$ and $\Psi_{B,v}$ are replaced with $\Psi_A$ and $\Psi_B$ respectively, where
\begin{align*}
    \Psi_{A} &= \frac{1}{n}\sum_{i = 1}^n \int |y - x_i| \, q(y)\, dy,\\
    \Psi_B &= \iint |y - x| \, q(y)\, f_{\theta_n}(x)\, dy\, dx.
\end{align*}
Similarly, Proposition \ref{prop:d2_quad} holds if  $D^2(\mathbb{P}_v,P_n)$ is replaced with $ \widehat{D}^2(\mathbb{Q},P_n)$, and all quantities $\Psi_{\cdot,\cdot}$ are replaced with their respective averages over $J$ data splits.
 \end{corollary}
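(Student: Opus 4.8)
The plan is to observe that the proof of Proposition \ref{prop:d2_quad} never used any structural property of the validation empirical measure $\mathbb{P}_v$ beyond linearity of expectation, so the same algebra transfers verbatim when $\mathbb{P}_v$ is replaced by $\mathbb{Q}$. First I would write $P_n = \lambda F_{\theta_n} + (1-\lambda)\mathbb{P}_n$ and expand
$$D^2(\mathbb{Q},P_n) = 2\,E_{Y\sim\mathbb{Q},\,X\sim P_n}|Y-X| - E_{X,X'\sim P_n}|X-X'| - E_{Y,Y'\sim\mathbb{Q}}|Y-Y'|.$$
By linearity the first term is $\lambda\Psi_B + (1-\lambda)\Psi_A$ with $\Psi_A,\Psi_B$ as in the statement; the middle term is $\lambda^2\Psi_{C,n} + 2\lambda(1-\lambda)\Psi_{B,n} + (1-\lambda)^2\Psi_{A,n}$, identical to the hold-out case because it involves only training quantities; and the last term is free of $\lambda$. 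This reproduces exactly the quadratic (\ref{eq:DLambda1}) with $\Psi_{A,v},\Psi_{B,v}$ swapped for $\Psi_A,\Psi_B$. Finiteness of the integrals defining $\Psi_A,\Psi_B$ follows from finiteness of the first moments of $\mathbb{Q}$ and $f_{\theta_n}$ under the standing assumptions.

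Next I would reuse the positivity argument unchanged: the leading coefficient is $2\Psi_{B,n}-\Psi_{A,n}-\Psi_{C,n}=D^2(F_{\theta_n},\mathbb{P}_n)>0$, strict because $F_{\theta_n}$ is continuous while $\mathbb{P}_n$ is discrete and the energy distance is a proper metric. Completing the square and clipping the unconstrained minimizer to $[0,1]$ exactly as in Proposition \ref{prop:d2_quad} then yields the stated closed form for $\lambda_n$.

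For the cross-validated estimate, I would note that $\widehat{D}^2(\mathbb{Q},P_n) = J^{-1}\sum_{j=1}^J D^2(\mathbb{P}_v^{(j)}, P_n^{(j)})$ is an average of $J$ functions each of which, by Proposition \ref{prop:d2_quad} applied split-by-split, is a strictly convex quadratic in $\lambda$. An average of strictly convex quadratics in the same variable is again a strictly convex quadratic whose coefficients are the averages of the individual coefficients; in particular the leading coefficient equals $J^{-1}\sum_j\bigl(2\Psi_{B,n}^{(j)}-\Psi_{A,n}^{(j)}-\Psi_{C,n}^{(j)}\bigr) = J^{-1}\sum_j D^2\bigl(F_{\theta_n^{(j)}},\mathbb{P}_n^{(j)}\bigr) > 0$. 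Completing the square once more produces the closed-form minimizer with every $\Psi_{\cdot,\cdot}$ replaced by its average over the $J$ splits, and the same clipping to $[0,1]$ handles the boundary case.

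There is no genuine obstacle here; the only point needing slight care is checking that strict positivity of the leading coefficient survives each extension — in the theoretical case it is inherited from the single strictly positive term $D^2(F_{\theta_n},\mathbb{P}_n)$ (continuous versus discrete measure), and in the cross-validated case from the average of strictly positive terms remaining strictly positive. Everything else is a mechanical repetition of the completing-the-square computation already carried out for Proposition \ref{prop:d2_quad}.
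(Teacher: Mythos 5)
Your proposal is correct and follows essentially the same route as the paper: the theoretical case is just the hold-out argument with $\mathbb{P}_v$ exchanged for $\mathbb{Q}$, and the cross-validated case uses linearity of the quadratic in the $\Psi$ quantities so that averaging over folds yields a quadratic with averaged coefficients (and a still strictly positive leading coefficient). The paper states this in two lines; your write-up simply makes the same steps explicit.
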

 \begin{proof}
     The first part holds trivially by exchanging $\mathbb{P}_v$ with $\mathbb{Q}$. For the second part, we note that the objective $D^2(\mathbb{P}_v,P_n)$ is linear in $\Psi_{\cdot, \cdot}$ so averaging over $J$ folds just results in a quadratic equation in $\lambda$ with averaged coefficients.
 \end{proof}

\subsection{Proof of Theorem \ref{theo:optimal}}
To begin, we restate the term of interest:
$$ \widehat{\lambda}_n=\min\left\{1,\max\left\{0,\dfrac{\Psi_{A,v} -\Psi_{A,n}-\Psi_{B,v}+ \Psi_{B,n}}{2\Psi_{B,n}-\Psi_{A,n}-\Psi_{C,n}}\right\}\right\}$$
which we show converges to 0 under model misspecification. The proof technique is relatively simple, although we will require a few technical steps. The main idea is that under model misspecification, the denominator of the inner term satisfies
\begin{align*}
 2\Psi_{B,n}-\Psi_{A,n}-\Psi_{C,n} \overset{p}{\to} 2B  -A -C = D^2\left(\mathbb{Q}, F_{\theta^*}\right) > 0
\end{align*}
where we again use the fact that the energy distance is a proper metric, and $\mathbb{Q} \neq F_{\theta^*}$ by assumption.
On the other hand, the numerator of the inner term satisfies
\begin{align*}
    \Psi_{A,v} -\Psi_{A,n}-\Psi_{B,v}+ \Psi_{B,n} \overset{p}{\to} A - A - B + B = 0.
\end{align*}
Since the clipping function is continuous (from the continuity of $\min$ and $\max$), we have $\widehat{\lambda}_n \overset{p}{\to} 0$ given the above. The required assumptions are exactly those needed to obtain the above convergence.

\subsubsection{Technical lemmas}
\begin{lemma} 
Assuming $E_{\mathbb{Q}}[Y^2]<\infty$ where $Y \sim \mathbb{Q}$, we have $\Psi_{A,n},\Psi_{A,v}\overset{p}\to A$.\label{lem:psi2,5}
\end{lemma}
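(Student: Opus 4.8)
The plan is to recognise $\Psi_{A,n}$ and $\Psi_{A,v}$ as a one-sample and a two-sample $V$-/$U$-statistic, respectively, built from the symmetric kernel $h(u,w)=|u-w|$, whose population mean under $\mathbb{Q}\otimes\mathbb{Q}$ is exactly $A$, and then to apply an $L_2$ law of large numbers. The hypothesis $E_{\mathbb{Q}}[Y^2]<\infty$ is used only to control the kernel: it gives $A=E|Y-Y'|\le 2\sqrt{E[Y^2]}<\infty$ (so $A$ is well defined and finite) and, more importantly, $E\big[h(Y,Y')^2\big]=E|Y-Y'|^2\le 2E[Y^2]+2E[(Y')^2]=4E[Y^2]<\infty$, which is all that is needed to make the relevant variances vanish.

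For $\Psi_{A,v}=\frac{1}{nv}\sum_{i=1}^n\sum_{j=1}^v |y_j-x_i|$: since the training points $\{x_i\}_{i=1}^n$ and the validation points $\{y_j\}_{j=1}^v$ are disjoint sub-collections of the i.i.d.\ sample, the pooled collection is i.i.d.\ from $\mathbb{Q}$ and the two groups are mutually independent. Hence $E|y_j-x_i|=A$ for every $(i,j)$, so $E[\Psi_{A,v}]=A$ exactly. The summands $|y_j-x_i|$ and $|y_{j'}-x_{i'}|$ are independent (hence uncorrelated) unless $i=i'$ or $j=j'$, and the number of index quadruples $(i,j,i',j')$ with $i=i'$ or $j=j'$ is $nv^2+n^2v-nv$; bounding each covariance by $E[h^2]<\infty$ via Cauchy--Schwarz gives $\operatorname{Var}(\Psi_{A,v})=O(n^{-1}+v^{-1})\to 0$ as $n,v\to\infty$. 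Chebyshev's inequality then yields $\Psi_{A,v}\overset{p}\to A$.

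For $\Psi_{A,n}=\frac{1}{n^2}\sum_{i,j}|x_i-x_j|$: the diagonal terms vanish, $|x_i-x_i|=0$, so $\Psi_{A,n}=\frac{n-1}{n}\,U_n$ with $U_n=\binom{n}{2}^{-1}\sum_{i<j}|x_i-x_j|$ a bona fide degree-$2$ $U$-statistic with kernel $h$. Then $E[U_n]=A$, and the standard variance formula for $U$-statistics with a square-integrable kernel gives $\operatorname{Var}(U_n)=O(n^{-1})\to 0$, so $U_n\overset{p}\to A$ by Chebyshev and $\Psi_{A,n}=\frac{n-1}{n}U_n\overset{p}\to A$. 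Alternatively, both claims follow from the strong law of large numbers for one- and two-sample $U$-statistics, which needs only $E|h|<\infty$ and even gives almost sure convergence; we state the result in probability since that is all that is used downstream. There is no substantive obstacle here: the only care required is the elementary bookkeeping of which pairs of summands are correlated in the variance bound, and the observation that the diagonal of the $V$-statistic is zero so that it differs from a genuine $U$-statistic only by the negligible factor $(n-1)/n$.
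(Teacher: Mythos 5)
Your proposal is correct and follows essentially the same route as the paper: both identify $\Psi_{A,n}$ as $\tfrac{n-1}{n}$ times a degree-2 $U$-statistic and $\Psi_{A,v}$ as a two-sample statistic with mean $A$, use $E_{\mathbb{Q}}[Y^2]<\infty$ to make the kernel square-integrable so the variances are $O(n^{-1})$ and $O(n^{-1}+v^{-1})$, and conclude by Chebyshev/Markov. The only cosmetic difference is that you bound the variance of $\Psi_{A,v}$ by directly counting the correlated index quadruples, whereas the paper invokes the standard two-sample $U$-statistic variance formula and a Cauchy--Schwarz bound on the covariance term; the two computations are equivalent.
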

\begin{proof}

Firstly, when $E_{\mathbb{Q}}(Y^2)<\infty$, we have $$A=E_{Y,Y'\iid \mathbb{Q}}\left[|Y-Y'|\right]\leq 2E_{\mathbb{Q}}\left[|Y|\right]<\infty,$$  
so the limit $A$ is finite by assumption.

\paragraph*{Term 1} For $\Psi_{A,n}$,  the double sum includes terms where \( i = j \), so we can write
  \[
  \sum_{i=1}^n \sum_{j=1}^n |x_i - x_j| =  \sum_{i \neq j} |x_i - x_j| = 2 \sum_{i < j} |x_i - x_j|.
  \]
Thus,  we have
  \[
  \Psi_{A,n}= \frac{2}{n^2} \sum_{i < j} |x_i - x_j| .
  \]
For the proof, we relate \( \Psi_{A,n} \) to a U-statistic. Define 
$$U = \frac{2}{n(n-1)} \sum_{i<j} |x_i - x_j|, $$
where 
$$E_{\mathbb{Q}}[U] = E_{Y,Y'\iid \mathbb{Q}}[|Y - Y'|] = A.$$ Then,  
\[
\Psi_{A,n}  = \frac{n-1}{n} U,
\]
so we only need to study the consistency of the U-statistic $U$, which we include for completion. For a U-statistic with kernel \( g(x_i, x_j) = |x_i - x_j| \) (degree 2), the variance is:  
\[
\text{Var}_{\mathbb{Q}}(U) = \frac{4(n-2)}{n(n-1)} \zeta_1 + \frac{2}{n(n-1)} \zeta_2,
\]
where 
\begin{align*}
\zeta_1 &=
 \text{Var}_{Y_1\sim \mathbb{Q}}(E_{Y'\sim \mathbb{Q}}[|Y - Y'| | Y]) = \text{Var}_{\mathbb{Q}}(h(Y)),\\
 \zeta_2 &= 
 \text{Var}_{Y,Y'\iid \mathbb{Q}}(|Y - Y'|).  
\end{align*}
Thus,  
\begin{align*}
\text{Var}_{\mathbb{Q}}(U) = \frac{4(n-2)}{n(n-1)} \text{Var}_{\mathbb{Q}}(h(Y)) + \frac{2}{n(n-1)} \text{Var}_{Y,Y'\iid \mathbb{Q}}(|Y - Y'|)
\end{align*}
and
$$
h(y)=\int |y'-y|\,q(y')\, dy'.
$$
With the assumption that $E_{\mathbb{Q}}[Y^2]<\infty$, we have $$\text{Var}_{Y,Y'\iid \mathbb{Q}}(|Y-Y'|)\leq2 E_{\mathbb{Q}}\left[Y^2\right]<\infty.$$
Also, note that $h(Y)=E_{Y'\sim \mathbb{Q}}[|Y'-Y|\,|Y]$, 
\begin{flalign*}
    E_{Y\sim \mathbb{Q}}\left[h(Y)^2\right] &= E_{Y\sim \mathbb{Q}}\left[E_{Y'\sim \mathbb{Q}}[|Y'-Y|\,|Y]^2\right]\\
    &\leq E_{Y,Y'\iid \mathbb{Q}}\left[(Y'-Y)^2\right]\leq 2E_{\mathbb{Q}}[Y^2] <\infty
\end{flalign*}
so $\text{Var}_{\mathbb{Q}}(h(Y)) < \infty$. Therefore, we have $\text{Var}_{\mathbb{Q}}[U]\to 0$, so by Markov's inequality, we have $U\overset{p}\to A$, which in turn gives $\Psi_{A,n} \overset{p}\to A$.

\paragraph*{Term 2} For $\Psi_{A,v}$, since \( y_i \) and \(x_i\) are independent, we have that $\Psi_{A,v}$ is a 2-sample U-statistic with degree 1 in each sample. The mean is 
  \[
  E_{\mathbb{Q}}[\Psi_{A,v}] = \frac{1}{nv} \sum_{i = 1}^n \sum_{j = 1}^v E_{Y_j,X_j \iid\mathbb{Q}}\left[|Y_j - X_i|\right] = A.
  \]
  The variance of $\Psi_{A,v}$ in this case (e.g. \citet[Theorem 12.6]{van2000}) is 
\begin{align*}
\text{Var}_{\mathbb{Q}}(\Psi_{A,v}) &= 
\frac{1}{nv} \sum_{c = 0}^1 \sum_{d = 0}^1 {{n-1}\choose {1 - c}}  {{v - 1} \choose {1 - d}} \zeta_{cd}
\end{align*}
where
\begin{align*}
    \zeta_{00} =0, &\quad  \zeta_{11}= \text{Var}_{X,Y \iid \mathbb{Q}}\left(|X - Y|\right), \quad \zeta_{01} = \zeta_{10} = \text{Cov}_{{X,X',Y \iid \mathbb{Q}}}\left(|X - Y|, |X' - Y|\right).
\end{align*}
Plugging this in gives
\begin{align*}
    \text{Var}_{\mathbb{Q}}(\Psi_{A,v}) &= \frac{1}{nv}\left[\text{Var}_{X,Y \iid \mathbb{Q}}\left(|X - Y|\right) + (n + v-2)  \text{Cov}_{{X,X',Y \iid \mathbb{Q}}}\left(|X - Y|, |X' - Y|\right)\right]
\end{align*}
We can show that
\begin{align*}
    \text{Cov}_{{X,X',Y \iid \mathbb{Q}}}\left(|X - Y|, |X' - Y|\right) &= E_{{X,X',Y \iid \mathbb{Q}}}\left[|X - Y||X' - Y|\right] - E_{X,Y \iid \mathbb{Q}}\left[|X - Y|\right]^2\\
    &\leq E_{{X,Y \iid \mathbb{Q}}}\left[(X - Y)^2\right]- E_{X,Y \iid \mathbb{Q}}\left[|X - Y|\right]^2\\
    &= \text{Var}_{X,Y \iid \mathbb{Q}}\left(|X - Y|\right)
\end{align*}
where the second line follows from Cauchy-Schwarz. This gives
\begin{align*}
    \text{Var}_{\mathbb{Q}}(\Psi_{A,v}) &\leq \frac{n+v-1}{nv}\, \text{Var}_{X,Y \iid \mathbb{Q}}\left(|X - Y|\right) 
\end{align*}
From the previous section, we have
$$
\text{Var}_{X,Y \iid \mathbb{Q}}\left(|X - Y|\right) \leq 2E_{\mathbb{Q}}[Y^2]< \infty
$$ 
which gives  $\text{Var}_{\mathbb{Q}}(\Psi_{A,v})\to0$ as $n,v \to 0$. By Markov's inequality, we then have $\Psi_{A,v}\overset{p}{\to}A$.

\end{proof}
\begin{lemma}
Assume that there exists a neighborhood $U$ of $\theta^*$ such that $\sup_{\theta \in U}\mu_\theta^{(2)} < \infty$. Then we have
$\Psi_{C,n}\overset{p}{\to} C.$ \label{lem:psi3}
\end{lemma}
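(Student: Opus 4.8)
The plan is to observe that both quantities are the same functional evaluated at two points: writing $g(\theta) := \iint |x-x'|\, f_{\theta}(x)\, f_{\theta}(x')\, dx\, dx' = E_{X,X' \iid f_\theta}|X-X'|$, we have $\Psi_{C,n} = g(\theta_n)$ and $C = g(\theta^*)$. Since $\theta_n \overset{p}{\to} \theta^*$ under the standing assumptions of Theorem \ref{theo:optimal}, it suffices by the continuous mapping theorem to show that $g$ is continuous at $\theta^*$; moreover, since $\theta_n$ lies in the neighborhood $U$ with probability tending to one (so that $g(\theta_n)$ is a well-defined, finite random variable on that event), it is enough to establish continuity of $g$ on $U$, where $M := \sup_{\theta\in U}\mu_\theta^{(2)} < \infty$.

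To prove continuity of $g$ on $U$, I would first record the crude bound $g(\theta) \le 2\,E_{f_\theta}|X| \le 2\,(\mu_\theta^{(2)})^{1/2} \le 2M^{1/2}$, so that $g$ is finite and bounded on $U$. Now fix a deterministic sequence $\theta_m \to \theta^*$ with $\theta_m \in U$ and, for $K>0$, split
\begin{align*}
|g(\theta_m) - g(\theta^*)| &\le \iint_{|x-x'|\le K} |x-x'|\,\bigl|f_{\theta_m}(x)f_{\theta_m}(x') - f_{\theta^*}(x)f_{\theta^*}(x')\bigr|\,dx\,dx' \\
&\quad + T_{\theta_m}(K) + T_{\theta^*}(K),
\end{align*}
where $T_\theta(K) := E_{X,X'\iid f_\theta}\bigl[|X-X'|\,\mathbbm{1}(|X-X'|>K)\bigr]$. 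For the tails, using $|X-X'|\le |X|+|X'|$ and $\mathbbm{1}(|X-X'|>K)\le \mathbbm{1}(|X|>K/2)+\mathbbm{1}(|X'|>K/2)$ together with independence, $T_\theta(K)$ is controlled by sums of terms of the form $E_{f_\theta}\bigl[|X|\,\mathbbm{1}(|X|>K/2)\bigr]$ and $E_{f_\theta}|X|\cdot P_{f_\theta}(|X|>K/2)$, each of which tends to $0$ as $K\to\infty$ uniformly over $\theta\in U$, by Markov's inequality and the bound $\mu_\theta^{(2)}\le M$. For the truncated term, I would bound it by $K\,\| f_{\theta_m}\otimes f_{\theta_m} - f_{\theta^*}\otimes f_{\theta^*}\|_{L^1(\mathbb{R}^2)}$, which tends to $0$ as $m\to\infty$ for each fixed $K$: the product densities converge pointwise because $f_\theta(x)$ is continuous in $\theta$ (a standing assumption of Theorem \ref{theo:optimal}) and they integrate to one, so Scheffé's lemma (combined with the triangle inequality for the product) applies. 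Letting $m\to\infty$ first and then $K\to\infty$ yields $g(\theta_m)\to g(\theta^*)$, hence $g$ is continuous at $\theta^*$.

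With continuity of $g$ at $\theta^*$ established, the continuous mapping theorem gives $\Psi_{C,n} = g(\theta_n)\overset{p}{\to} g(\theta^*) = C$, as required. I expect the main obstacle to be the continuity step: because the kernel $|x-x'|$ is unbounded, pointwise convergence of the densities does not by itself transfer to convergence of the integrals, and the hypothesis $\sup_{\theta\in U}\mu_\theta^{(2)}<\infty$ is precisely what makes the tail contributions uniformly negligible — equivalently, it secures uniform integrability of $|X-X'|$ over the family $\{f_\theta : \theta\in U\}$ — after which Scheffé's lemma handles the truncated part routinely.
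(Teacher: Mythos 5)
Your proof is correct and follows essentially the same route as the paper: reduce via the continuous mapping theorem to continuity of $\theta \mapsto E_{X,X'\iid f_\theta}|X-X'|$ at $\theta^*$ along deterministic sequences, and establish that continuity by combining Scheff\'{e}'s lemma (via continuity of $f_\theta(x)$ in $\theta$) with tail control coming from the uniform second-moment bound on $U$. The only difference is cosmetic: you truncate the kernel $|x-x'|$ at level $K$ on $\mathbb{R}^2$ and handle the tails by a uniform-integrability/Markov bound, whereas the paper splits the product-density difference into two cross terms and truncates $|x|$ at $K$ in the resulting one-dimensional integrals; both arguments rest on the same two ingredients.
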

\begin{proof}
First, we note that $C \leq 2E_{f_{\theta^*}}[|X|]<\infty$ from the bounded variance assumption. Next, if we can show that for any deterministic sequence $\theta_n$ converging to $\theta^*$, it follows that $\Psi_{C,n}(\theta_n)\to\Psi_{C,n}(\theta^*)=C$, then $\Psi_{C,n}(\theta)$ is continuous at the point $\theta^*$. We can thus apply continuous mapping theorem to show that $\Psi_{C,n}\overset{p}\to C$. 

Now we consider a sequence $\theta_n\to\theta^*$ deterministically to show continuity of $\Psi_{c,n}(\theta)$. Consider a sequence $X_n \sim f_{\theta_n}$ for $n \geq 1$ and $X \sim f_{\theta}^*$. 
We begin by showing a few intermediate results that will be useful in the proof. Note that the assumption implies that $\{X_n\}_{n \geq 1}$ is uniformly integrable as $X_n$ is bounded in $L_2$, and $E_{f_{\theta}^*}[|X|]<\infty$. As a result, for any $\epsilon > 0$, we can choose $K$ sufficiently large so that
\begin{align}
    \int_{|x|> K}|x|\,|f_{\theta_n}(x)-f_{\theta^*}(x)|dx\leq \int_{|x|> K}|x|\,f_{\theta_n}(x)dx+\int_{|x|> K}|x|\,f_{\theta^*}(x)dx <\frac{\epsilon}{2}\cdot \label{eq:ui1}
\end{align}
Next, choose $n$ sufficiently large so that
\begin{align}
    \left\|f_{\theta_n}-f_{\theta^*}\right\|_{L_1} \leq \frac{\epsilon}{2K} \label{eq:L1}
\end{align}
which is possible as $f_{\theta_n}(x) \to f_{\theta^*}(x)$ for each $x$ (from continuity), and Scheff\'{e}'s lemma gives
$$
\|f_{\theta_n}- f_{\theta^*}\|_{L_1} \to 0.
$$
Now we prove that $\Psi_{C,n}\to C$ deterministically by considering the absolute difference 
$$|\Psi_{C,n} -C|=\Big|\iint|x-x'|\left[f_{\theta_n}(x)f_{\theta_n}(x')-f_{\theta^*}(x)f_{\theta^*}(x')\right]dx\,dx'\Big|.$$
We rewrite the term 
$$f_{\theta_n}(x)f_{\theta_n}(x')-f_{\theta^*}(x)f_{\theta^*}(x')=f_{\theta_n}(x)\left[f_{\theta_n}(x')-f_{\theta^*}(x')\right]+f_{\theta^*}(x')\left[f_{\theta_n}(x)-f_{\theta^*}(x)\right].$$
Thus, 
\begin{align*}
|\Psi_{C,n}-C|&\leq \underbrace{\iint|x-x'|f_{\theta_n}(x)\left|f_{\theta_n}(x')-f_{\theta^*}(x')\right|dx\,dx'}_{T_1} \\
&+ \underbrace{\iint|x-x'|f_{\theta^*}(x')\left|f_{\theta_n}(x)-f_{\theta^*}(x)\right|dx\,dx'}_{T_2}.
\end{align*}
For $T_1$, we have 
$$T_1=\int\left(\int|x-x'|f_{\theta_n}(x)dx\right)\left|f_{\theta_n}(x')-f_{\theta^*}(x')\right|dx'.$$
The inner integral is $E_{f_{\theta_n}}[|X_n-x'|]$. Note that $|X_n-x'|\leq |X_n|+|x'|$, so $E_{f_{\theta_n}}[|X_n-x'|]\leq E_{f_{\theta_n}}[|X_n|]+|x'|$. Thus, we have 
\begin{flalign*}T_1&\leq \int\left(E_{f_{\theta_n}}[|X_n|]+|x'|\right)\left|f_{\theta_n}(x')-f_{\theta^*}(x')\right|dx'\\
&=E_{f_{\theta_n}}[|X_n|]\cdot \|f_{\theta_n}-f_{\theta^*}\|_{L_1} + \int|x'|\left|f_{\theta_n}(x')-f_{\theta^*}(x')\right|dx'.
\end{flalign*}
The first term converges to 0 as $\sup_{n} E_{f_{\theta_n}}|X_n| < \infty$ from uniform integrability, and $\|f_{\theta_n}-f_{\theta^*}\|_{L_1} \to 0$ from (\ref{eq:L1}). We will handle the second term shortly.

Similarly, we have 
\begin{flalign*}T_2&\leq E_{f_{\theta^*}}|X|\cdot\|f_{\theta_n}-f_{\theta^*}\|_{L_1} + \int|x|\left|f_{\theta_n}(x)-f_{\theta^*}(x)\right|dx.
\end{flalign*}
Again the first term converges to 0.

Now consider the common second term in $T_1$ and $T_2$, which we can decompose as
$$\int_{|x|\leq K}|x|\,|f_{\theta_n}(x)-f_{\theta^*}(x)|dx+\int_{|x|> K}|x|\,|f_{\theta_n}(x)-f_{\theta^*}(x)|dx.$$
For the first integral we have 
$$\int_{|x|\leq K}|x|\,|f_{\theta_n}(x)-f_{\theta^*}(x)|dx\leq K \int_{|x|\leq K}\,|f_{\theta_n}(x)-f_{\theta^*}(x)|dx \overset{\textnormal{(\ref{eq:L1})}}\leq K  \frac{\epsilon}{2K}=\frac{\epsilon}{2},$$
while for the second integral it is bounded by $\epsilon/2$ by (\ref{eq:ui1}), this means there exist a sufficiently large $n$ such that 
$$\int|x|\,|f_{\theta_n}(x)-f_{\theta^*}(x)|dx \leq \epsilon.$$

As a result, both $T_1$ and $T_2$ converge to 0, which is sufficient to imply $\Psi_{C,n}\to C$ for any deterministic sequence of $\theta_n\to \theta^*$.

\end{proof}
\begin{lemma}
Assume $E_{\mathbb{Q}}[Y^2]<\infty$ and there exists a neighborhood $U$ of $\theta^*$ such that $\sup_{\theta \in U} \mu_\theta^{(2)}< \infty$. Then we have that $\Psi_{B,n},\Psi_{B,v}\overset{p}{\to} B$.\label{lem:psi1,4}
\end{lemma}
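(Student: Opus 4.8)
The plan is to realize both statistics as sample averages of the map $\psi(y,\theta):=\int|y-x|\,f_\theta(x)\,dx=E_{f_\theta}[\,|y-X|\,]$, so that $\Psi_{B,n}=n^{-1}\sum_{i=1}^n\psi(x_i,\theta_n)$ and $\Psi_{B,v}=v^{-1}\sum_{j=1}^v\psi(y_j,\theta_n)$, while $B=E_{Y\sim\mathbb{Q}}[\psi(Y,\theta^*)]$. Since $\psi(y,\theta^*)\le|y|+E_{f_{\theta^*}}[\,|X|\,]$, with $E_{\mathbb{Q}}[\,|Y|\,]<\infty$ following from $E_{\mathbb{Q}}[Y^2]<\infty$ and $E_{f_{\theta^*}}[\,|X|\,]\le(\mu^{(2)}_{\theta^*})^{1/2}<\infty$ following from the uniform second-moment hypothesis, the limit $B$ is finite. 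Recall that $\theta_n\overset{p}\to\theta^*$ is available by assumption.

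The first step is a decomposition that isolates the plug-in error: write $\Psi_{B,n}-B=\big(n^{-1}\sum_{i=1}^n[\psi(x_i,\theta_n)-\psi(x_i,\theta^*)]\big)+\big(n^{-1}\sum_{i=1}^n\psi(x_i,\theta^*)-B\big)$, and analogously for $\Psi_{B,v}$ with $y_j$ and $v$ in place of $x_i$ and $n$. The second bracket is an average of i.i.d.\ terms with finite mean $B$ (indeed finite variance, since $\psi(y,\theta^*)^2\le 2y^2+2E_{f_{\theta^*}}[\,|X|\,]^2$), so it converges to $0$ in probability by the law of large numbers; here $\theta^*$ is deterministic, so the dependence between $x_{1:n}$ and $\theta_n$ plays no role.

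For the plug-in bracket I would use $|y-x|\le|y|+|x|$ to obtain the pointwise bound $|\psi(y,\theta_n)-\psi(y,\theta^*)|\le|y|\,\|f_{\theta_n}-f_{\theta^*}\|_{L_1}+\int|x|\,|f_{\theta_n}(x)-f_{\theta^*}(x)|\,dx$, hence $\big|n^{-1}\sum_i[\psi(x_i,\theta_n)-\psi(x_i,\theta^*)]\big|\le\big(n^{-1}\sum_i|x_i|\big)\,\|f_{\theta_n}-f_{\theta^*}\|_{L_1}+\int|x|\,|f_{\theta_n}(x)-f_{\theta^*}(x)|\,dx$, and likewise with the $y_j$. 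Since $n^{-1}\sum_i|x_i|\overset{p}\to E_{\mathbb{Q}}[\,|Y|\,]<\infty$ by the LLN, it remains to show $\|f_{\theta_n}-f_{\theta^*}\|_{L_1}\overset{p}\to 0$ and $\int|x|\,|f_{\theta_n}(x)-f_{\theta^*}(x)|\,dx\overset{p}\to 0$. Both follow exactly as in the proof of Lemma~\ref{lem:psi3}: for any deterministic sequence $\theta_m\to\theta^*$, continuity of $\theta\mapsto f_\theta(x)$ gives pointwise convergence of densities, Scheff\'e's lemma gives $\|f_{\theta_m}-f_{\theta^*}\|_{L_1}\to0$, and the uniform bound $\sup_{\theta\in U}\mu^{(2)}_\theta<\infty$ (hence uniform integrability of $X\sim f_{\theta_m}$) upgrades this, via the split over $\{|x|\le K\}$ and $\{|x|>K\}$, to $\int|x|\,|f_{\theta_m}(x)-f_{\theta^*}(x)|\,dx\to0$; the continuous mapping theorem then transfers these deterministic statements to convergence in probability since $\theta_n\overset{p}\to\theta^*$.

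Combining the two brackets yields $\Psi_{B,n}\overset{p}\to B$, and the same argument (with the validation points $y_j$, which are moreover independent of $\theta_n$, so one may alternatively condition on the training set) gives $\Psi_{B,v}\overset{p}\to B$. I expect the only real obstacle to be the Vitali-type step that promotes $L_1$ convergence of the densities to convergence of $\int|x|\,|f_{\theta_n}-f_{\theta^*}|\,dx$, but this is precisely the uniform-integrability argument already carried out for $\Psi_{C,n}$ in Lemma~\ref{lem:psi3} and can be cited rather than repeated; the genuinely new bookkeeping is merely carrying the data point $y$ through the bound and controlling $n^{-1}\sum_i|x_i|$.
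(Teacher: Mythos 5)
Your proof is correct and follows essentially the same route as the paper: both compare $\Psi_{B,n}$ (resp.\ $\Psi_{B,v}$) with the same average evaluated at $\theta^*$, dispose of that term by an $L^2$ law of large numbers, and bound the plug-in error by $\bigl(n^{-1}\sum_i|x_i|\bigr)\,\|f_{\theta_n}-f_{\theta^*}\|_{L_1}+\int|x|\,|f_{\theta_n}(x)-f_{\theta^*}(x)|\,dx$, citing the Scheff\'e/uniform-integrability argument from Lemma \ref{lem:psi3}. The only cosmetic difference is that the paper makes the variance bound and Markov's inequality explicit where you invoke the LLN, so no substantive gap remains.
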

\begin{proof}
Recall that  
\begin{align*}
    \Psi_{B,v} = \frac{1}{v}\sum_{j = 1}^v \int  |y_i-x|  f_{\theta_n}(x) \,dx,\quad \Psi_{B,n} = \frac{1}{n}\sum_{i = 1}^n \int \left| x' - x_i\right|f_{\theta_n}(x')\,dx'.
\end{align*}
and $B = \iint \left|x - y \right| q(y) f_{\theta^*}(x) \, dx\,dy$.
With the assumptions, we have $$B\leq E_{\mathbb{Q}}|Y|+E_{f_{\theta^*}}|X|<\infty.$$

\paragraph*{Term 1} First, we show that $\Psi_{B,n}\overset{p}{\to}B$. Consider the alternative term
\begin{align*}
    \Psi^*_{B,n} = \frac{1}{n}\sum_{i = 1}^n\underbrace{\int \left|x' - x_i \right|f_{\theta^*}(x')\, dx'}_{k(x_i)}.
\end{align*}
 Since  \( x_i \) are i.i.d., and \( E_{\mathbb{Q}}[k(Y)] = B \), we have that
  \[
  E_{\mathbb{Q}}[\Psi^*_{B,n}] = E_{\mathbb{Q}}\left[ \frac{1}{n} \sum_{i=1}^n k(x_i) \right]  =   E_{\mathbb{Q}}\left[k(Y)\right] = B .
  \]
  Similarly, the variance of $\Psi^*_{B,n}$ is
\[
\text{Var}_{\mathbb{Q}}(\Psi^*_{B,n}) = \text{Var}_{\mathbb{Q}}\left( \frac{1}{n} \sum_{i=1}^n k(x_i) \right) = \frac{1}{n} \text{Var}_{\mathbb{Q}}(k(Y)).
\]

Note that $k(Y)=E_{X \sim f_{\theta^*}}[|X-Y|\mid Y]$, so we have 
\begin{flalign*} E_{\mathbb{Q}}[k(Y)^2]&=E_{Y \sim \mathbb{Q}}\left[E_{X \sim f_{\theta^*}}[|X-Y|\mid Y]^2\right]\\
&\leq E_{Y\sim \mathbb{Q}}\left[E_{X\sim f_{\theta^*}}[(X-Y)^2\mid Y]\right]=E_{Y\sim \mathbb{Q}, X\sim f_{\theta^*}}[(X-Y)^2]<\infty.
\end{flalign*}
The last inequality holds under the assumptions $E_{\mathbb{Q}}(Y^2),E_{f_{\theta^*}}(X^2)<\infty$, as we have
$$
E_{Y \sim \mathbb{Q}, X\sim f_{\theta^*}}[(X-Y)^2] \leq E_{f_{\theta^*}}[X^2] + E_{\mathbb{Q}}[Y^2] + 2\sqrt{E_{f_{\theta^*}}[X^2]\, E_{\mathbb{Q}}[Y^2]}
$$
Therefore,  $\text{Var}_{\mathbb{Q}}(k(Y))<\infty$ and $\text{Var}_{\mathbb{Q}}(\Psi^*_{B,n})\to0$. By Markov's inequality, we have $\Psi^*_{B,n}\overset{p}{\to}B$. Now we compute
\begin{align*}
    \left|\Psi_{B,n} - \Psi^*_{B,n}\right|  &= \left|\frac{1}{n}\sum_{i = 1}^n \int \left|x'- x_i\right|\left(f_{\theta^*}(x') - f_{\theta_n}(x')\right)\, dx'\right|\\
    &\leq \frac{1}{n}\sum_{i = 1}^n \int \left|x'- x_i\right|\left|f_{\theta^*}(x') - f_{\theta_n}(x')\right|\, dx'\\
    &\leq \int \left|x\right|\left|f_{\theta^*}(x) - f_{\theta_n}(x)\right|\, dx + \left[\frac{1}{n}\sum_{i = 1}^n \left|x_i\right|\right] \int \left|f_{\theta^*}(x) - f_{\theta_n}(x) \right|\,dx
\end{align*}
The first term converges to 0 in probability, as shown in Lemma \ref{lem:psi3}. The second term converges to 0 in probability as the empirical sum satisfies $n^{-1}\sum_{i = 1}^n |x_i| \overset{p}{\to} E_{\mathbb{Q}}\left[|Y|\right]$ and $\left\|f_{\theta^*} -  f_{\theta_n}\right\|_{L_1} \overset{p}{\to}0$, where the latter result is also shown in Lemma \ref{lem:psi3}.

\paragraph*{Term 2} Next, we show that $\Psi_{B,v} \overset{p}{\to} B$. Fortunately, we can simply repeat the same proof as for $\Psi_{B,n}$, where we replace $n$ with $v$, and $x_i$ with $y_i$. 
\end{proof}
 
\setcounter{prop}{0}
\setcounter{corollary}{0}
\setcounter{theorem}{0}
\setcounter{lemma}{0}
\setcounter{algorithm}{0}
\setcounter{equation}{0}
\setcounter{figure}{0}
\setcounter{table}{0}
\setcounter{example}{0}
\section{Appendix: Regression}\label{sec:app_reg}
            \subsection{Proof of Theorem \ref{theo:reg}}\label{sec:app_proofreg}
            \subsubsection{Martingale condition}
First, we show that the cross moment $\mu_i^{yx}$ is a martingale:
\begin{align*}
E[\mu^{yx}_{i+1} \mid \mathcal{F}_i]
= \frac{i}{i+1}\mu_{i}^{yx} + \frac{1}{i+1}E\left[ Y_{i+1} X_{i+1} \mid \mathcal{F}_i\right]
\end{align*}
where
\begin{align*}
    E[ Y_{i+1} X_{i+1}\mid \mathcal{F}_i]&\nonumber= E[X_{i+1} \, E[ Y_{i+1} \mid \mathcal{F}_i, X_{i+1}]\mid \mathcal{F}_i].
\end{align*}
The inner expectation can be split into the parametric and nonparametric components:
\begin{align*}
    E[ Y_{i+1} \mid \mathcal{F}_i, X_{i+1}] = \frac{c}{c+i}\, g^{-1}\left(X^\top _{i+1}\beta_i\right) + \frac{i}{c+i} \frac{\sum_{j = 1}^i \mathbf{1}\left(x_j = X_{i+1} \right) y_j}{\sum_{j = 1}^i \mathbf{1}\left(x_j = X_{i+1}\right)} 
\end{align*}
For the parametric component, we note that
\begin{align*}
    E\left[X_{i+1} \,g^{-1}\left(X_{i+1}^\top \beta_i\right) \mathcal{F}_i\right]&=  \frac{1}{i}\sum_{j= 1}^i g^{-1}\left(x_j^\top \beta_i \right) \, x_j = \mu_i^{yx}
\end{align*}
where the first equality follows from the BB, and the second follows exactly from the cross-moment matching property.

The nonparametric component requires more care:
\begin{align*}
    E\left[X_{i+1}\frac{\sum_{j = 1}^i \mathbf{1}\left(x_j = X_{i+1} \right) y_j}{\sum_{l = 1}^i \mathbf{1}\left(x_l = X_{i+1}\right)}\mid \mathcal{F}_i \right] &= \frac{1}{i}\sum_{k = 1}^i \sum_{j = 1}^i  \frac{\mathbf{1}\left(x_j = x_k \right) y_j \, x_k}{\sum_{l = 1}^i \mathbf{1}\left(x_l =x_k\right)}\\
    &= \frac{1}{i}\sum_{k = 1}^i \sum_{j = 1}^i \frac{ \mathbf{1}\left(x_j = x_k \right) y_j \, x_j}{\sum_{l = 1}^i \mathbf{1}\left(x_l =x_j\right)}\\
    &= \frac{1}{i}\sum_{j = 1}^i y_j \, x_j \frac{ \sum_{k = 1}^i\mathbf{1}\left(x_j = x_k \right) }{\sum_{l = 1}^i \mathbf{1}\left(x_l =x_j\right)}\\
    &=  \mu_i^{yx}
\end{align*}
where the second line relies on the identity
$$\frac{\mathbf{1}\left(x_j = x_k \right) y_j  \, x_k}{\sum_{l = 1}^i \mathbf{1}(x_l = x_k)} = \frac{\mathbf{1}\left(x_j = x_k \right) y_j \, x_j}{\sum_{l = 1}^i \mathbf{1}(x_l = x_j)}$$
which follows from the indicator function allowing us to swap indices around. We highlight that the nonparametric component also depends closely on the fact that $X_{i+1}$ is drawn from the BB. 

As a result, we have
$
E[Y_{i+1} X_{i+1} \mid \mathcal{F}_i] = \mu_i^{yx},
$
so $\mu_i^{yx}$ is indeed a martingale.

\paragraph*{Bounded in $L_1$ ($Y \geq 0$)}
We now want to verify that $\mu_i^{yx}$ is bounded in $L_1$ for convergence if $Y \geq 0$. We can upper bound
$$
\left\|\mu_{i}^{yx}\right\|_1 \leq  \frac{1}{i}\sum_{j = 1}^i \|x_j y_j\|_1 \leq  \frac{1}{i}\sum_{j = 1}^i |y_j| \|x_j \|_1 =  \frac{1}{i}\sum_{j = 1}^i y_j \|x_j \|_1
$$
where we have used the positivity of $y_j$.
Next, since $x_{1:i}$ only contains unique values $x_{1:n}$, there exists a finite constant $0 < K < \infty$ such that $\sup_j \|x_j \|_1 < K$, we have 
$$
\left\|\mu_{i}^{yx}\right\|_1 \leq K \mu_i^{y}  
$$
where 
$$
\mu_i^{y} = \frac{1}{i}\sum_{j = 1}^i y_j.
$$
However, we note that by assumption, $x_j$ contains an intercept term, so we have
\begin{align*}
    \frac{1}{i}\sum_{j = 1}^i g^{-1}\left(x_j^\top \beta_i\right) = \frac{1}{i}\sum_{j = 1}^i y_i.
\end{align*}
Following the previous section, $\mu_i^y$ is a non-negative martingale, so $E\left[\mu_i \mid \mathcal{F}_n\right] = \mu_n$ and we immediately have that $\mu_i^y$ is bounded in $L_1$, which gives $\mu_i^{yx}$ bounded in $L_1$.

\paragraph*{Bounded in $L_2$ ($Y \in \mathbb{R}$)}
For the second case with a real dependent variable, it is easier to show $\mu_i^{yx}$ is bounded in $L_2$, which implies it is bounded in $L_1$. First, Cauchy-Schwarz gives us 
\begin{align*}
    \|\mu_{i}^{yx}\|^2 &=  \left|\left|\frac{1}{i}\sum_{j = 1}^i x_j y_j\right|\right|^2 \\
    &\leq \mu_i^{xx}\, \mu_i^{yy}
\end{align*}
where
$$
\mu_i^{xx} = \frac{1}{i}\sum_{j = 1}^i \|x_j\|^2, \quad \mu_i^{yy} = \frac{1}{i}\sum_{j = 1}^i y_j^2.
$$
Next, since $x_{1:i}$ only contains unique values $x_{1:n}$, there exists a finite constant $0 < K < \infty$ such that $\sup_j \|x_j\|^2 < K$, which gives
\begin{align*}
    \|\mu_i^{yx}\|^2 \leq K \mu_i^{yy}
\end{align*}
It thus suffices to show that
$$
\sup_i E\left[ \mu_i^{yy} \mid \mathcal{F}_n \right] < \infty.
$$

Again, the standard recursive approach gives
\begin{align*}
    E\left[\mu_{i+1}^{yy} \mid \mathcal{F}_{i}\right] = \frac{i}{i+1} \mu_{i}^{yy} + \frac{1}{i+1}E\left[Y_{i+1}^2 \mid \mathcal{F}_i\right],
\end{align*}
where
\begin{align*}
    E\left[Y_{i+1}^2 \mid \mathcal{F}_i\right] = E\left[E\left[Y_{i+1}^2 \mid \mathcal{F}_i,X_{i+1}\right] \mid \mathcal{F}_i\right]
\end{align*}
and
\begin{align*}
    E\left[Y_{i+1}^2 \mid \mathcal{F}_i,X_{i+1}\right] = \frac{c}{c+i} E_{\beta_i}\left[Y_{i+1}^2 \mid X_{i+1}\right] +\frac{i}{c+i} \frac{\sum_{j = 1}^i \mathbf{1}\left(x_j = X_{i+1} \right) y_j^2}{\sum_{j = 1}^i \mathbf{1}\left(x_j = X_{i+1}\right)} 
\end{align*}
The same argument as before can be applied to the nonparametric component to show that
$$
E\left[\frac{\sum_{j = 1}^i \mathbf{1}\left(x_j = X_{i+1} \right) y_j^2}{\sum_{j = 1}^i \mathbf{1}\left(x_j = X_{i+1}\right)}\mid \mathcal{F}_i\right] = \mu_i^{yy}. 
 $$
The difficulty then lies in upper bounding the parametric component
\begin{align*}
E\left[E_{\beta_i}\left[Y_{i+1}^2 \mid X_{i+1}\right] \mid \mathcal{F}_i\right] = \frac{1}{i}\sum_{j = 1}^i E_{\beta_i}\left[Y^2 \mid  x_j\right]
\end{align*}
As $g$ is the identity, and from the assumption on $V(\mu)$, the conditional second moment in this case satisfies
    \begin{align*}
    E_{\beta}[Y^2 \mid x] &= \phi V(x^\top \beta) + (x^\top \beta)^2\\
    &\leq a'(x^\top \beta)^2 + b'.
    \end{align*}
     This then gives
    \begin{align*}
         n^{-1}\sum_{j = 1}^n E_{\beta_n}\left[Y^2 \mid  x_j\right] \leq  a'n^{-1} \beta_n^\top \mathbf{X}_n^\top \mathbf{X}_n\beta_n + b'
    \end{align*}
    where $a',b'$ are non-negative constants. Again as $g$ is the identity function, (\ref{eq:cross_moment}) gives the least square estimate 
    $\beta_n = \left(\mathbf{X}_n^\top \mathbf{X}_n\right)^{-1}\mathbf{X}_n^\top \mathbf{Y}_n$, so we have
    \begin{align*}
        n^{-1} \beta_n^\top \mathbf{X}_n^\top \mathbf{X}_n\beta_n = n^{-1}\mathbf{Y}_n^\top \mathbf{H}_n \mathbf{Y}_n
    \end{align*}
    where $\mathbf{H}_n = \mathbf{X}_n \left(\mathbf{X}_n^\top \mathbf{X}_n\right)^{-1}\mathbf{X}_n^\top$. Since $\mathbf{H}_n$ is a projection matrix, its eigenvalues are either 0 or 1, so a Rayleigh quotient argument gives
    \begin{align*}
        n^{-1}\mathbf{Y}_n^\top \mathbf{H}_n \mathbf{Y}_n \leq  n^{-1}\mathbf{Y}_n^\top \mathbf{Y}_n  = \mu_n^{yy}.
    \end{align*}

Given the above, we would have
\begin{align*}
     E\left[\mu_{i+1}^{yy} \mid \mathcal{F}_{i}\right] &\leq \frac{i}{i+1} \mu_{i}^{yy} + \frac{1}{i+1}\left[\frac{c}{c+i}\left(a'\mu_i^{yy} + b'\right) + \frac{i}{c+i} \mu_i^{yy}\right]\\
     &\leq \left[1 + O(i^{-2})\right]\mu_i^{yy} + O(i^{-2})
\end{align*}
which would give $\sup_i E[\mu_{i}^{yy} \mid \mathcal{F}_n] < \infty$ as desired.

\subsubsection{Continuous mapping theorem}
The remaining requirement is that $\beta_i$ is a continuous function of $\mu_i^{yx}$, which should follow from standard implicit function arguments. More precisely, we shall show that for any $N\geq n$ and any arbitrary $\mu^{yx}_i$ and $w_{1:n}^i$ with $w^i_i>0$, the regression coefficient $\beta_i$ is a continuous function of $\mu^{yx}_i$ and $w_{1:n}^i$.  As a reminder, we assume that $g$ is strictly monotone and continuously differentiable (which is also satisfied for $g$ being the identity). 

Note that $\beta_i$ satisfies the implicit function
\begin{align*}
      \frac{1}{i}\sum_{j = 1}^i g^{-1}(x_j^\top \beta_i) \,x_j = \mu_i^{yx}.
\end{align*}
Since we know $x_i$ only takes on values in $x_{1:n}$, we can rewrite the above as
\begin{align*}
    \sum_{j = 1}^n w_j^i \, g^{-1}(x_j^\top \beta_i)\, x_j = \mu_i^{yx},\quad w_j^i = \frac{1}{i}\sum_{k = 1}^i \mathbf{1}(x_k = x_j).
\end{align*}
We can just treat $x_{1:n}$ as constants as we condition on their values.  Let $$F(\beta, w_{1:n},\mu^{yx})=\sum_{j = 1}^n w_j \, g^{-1}(x_j^\top \beta)\, x_j - \mu^{yx}$$ be a function defined on an open set $ S:=\{\beta, w_{1:n}, \mu^{yx}\}\subseteq \mathbb{R}^{p+(n+1)}$  with values in $\mathbb{R}^p$. By  \citet[Theorem 13.7]{apostol1958mathematical}, if the following two conditions are satisfied
\begin{enumerate}
    \item $F$ have continuous first partial derivatives in $S$; and 
    \item The Jacobian with respect to $\beta$ is invertible at the point that solves $F(\beta,w_{1:n},\mu^{yx})=0$, 
\end{enumerate}
 then the implicit function theorem can be applied. For the first condition, note that
 \begin{flalign*}      
\frac{\partial F}{\partial w_j} = g^{-1}(x_j^\top \beta) \, x_j,  \quad \frac{\partial F}{\partial \beta} = \sum_{j=1}^n w_j \, \frac{x_j \, x_j^\top}{g'\left(g^{-1}(x_j^\top \beta)\right)}, \quad
\frac{\partial F}{\partial \mu^{yx}} &= -I_p,
    \end{flalign*}
    so the partial derivatives are obviously continuous for all $\beta, \mu^{yx}$ and $w_{1:n}$ which satisfies $w_i > 0$. Secondly, the Jacobian matrix is  $$\frac{\partial F}{\partial \beta} = \sum_{j=1}^n w_j \, \frac{x_j \, x_j^\top}{g'\left(g^{-1}(x_j^\top \beta)\right)}.$$
     With the assumption that $\mathbf{X}_n$ is full rank, the Jacobian matrix is invertible for any arbitrary $\beta$, $\mu^{yx}$ and $w_{1:n}$ that satisfies $w_i>0$.

To see that the Jacobian is invertible, we can write the Jacobian matrix as $$ \frac{\partial F}{\partial \beta} = \mathbf{X}_n^\top J \mathbf{X}_n$$
where $J$ is  an $n \times n$ diagonal matrix with  entries $J_j = w_j \left(g'\left(g^{-1}(x_j^\top \beta)\right)\right)^{-1}$. 
Due to strict monotonicity of $g$, we also have $J_j > 0$ or $J_j < 0$ a.s. for increasing/decreasing $g$ respectively. Without loss of generality, we assume that $g$ is increasing, so $J_j > 0$ a.s. 
To check that $ \mathbf{X}_n^\top J \mathbf{X}_n$ is positive definite, we need to verify that for any nonzero vector $v\in\mathbb{R}^p$, we have $v^\top  \mathbf{X}_n^\top J\mathbf{X}_n v>0$. We write 
$$v^\top  \mathbf{X}_n^\top J \mathbf{X}_n v = (\mathbf{X}_n v)^\top J (\mathbf{X}_n v)=: z^\top J z=\sum_{j=1}^nJ_jz_j^2.$$
If $\mathbf{X}_n$ is full rank, $\mathbf{X}_nv\neq 0$ for all non zero $v$, so $z\neq 0$ if $v\neq 0$. This then implies that for any $v\neq 0$, the sum $\sum_{j=1}^nJ_jz_j^2>0$.

As the conditions of implicit function theorem are satisfied for any $\beta$, $\mu^{yx}$ and $w_{1:n}$ with $w_i>0$, then for any $(\beta_0,w^0_{1:n},\mu^{yx}_0)$ that solves the function $$F(\beta_0,w^0_{1:n},\mu^{yx}_0)=0,$$ $\beta_0$ is a continuous function of $w_{1:n}$ and $\mu^{yx}$.  Note that for every $i\geq n$, we have $$F(\beta_i, w^i_{1:n}, \mu^{yx}_i)=0$$ by the score function condition, so $\beta_i$ is a continuous function of $w^i_{1:n}$ and $\mu^{yx}_i$ for any arbitrary $\mu^{yx}_i$ and $w^i_{1:n}$ with $w^i_i>0$. Since $\mu_i^{yx} \to \mu_\infty^{yx}$ a.s., and $w_{1:n}^i \to w^\infty_{1:n}\sim \text{Dir}(1,\ldots,1)$ a.s. with $w^\infty_i>0$ a.s., we have $\beta_i \to \beta_\infty$ a.s. by the continuous mapping theorem.

\subsection{Convergence of conditional mean ($Y \geq 0$)}\label{sec:app_condit}
In this section, we present a convergence result of the empirical conditional expectation when $Y$ is non-negative (e.g. for logistic regression),  defined as 
\begin{align*}
{\mu}_i^{y \mid x} &= \frac{i^{-1}\sum_{j = 1}^{i}\mathbf{1}(x_j = x)\,  y_j}{i^{-1}\sum_{j = 1}^i \mathbf{1}(x_j = x)} = \frac{m_i^{yx}}{w_i^x}
\end{align*}
where $x \in \{x_1,\ldots,x_n\}$
and 
\begin{align*}
    {m}_i^{yx} &= \frac{1}{i}\sum_{j = 1}^i \mathbf{1}(x_j = x) \,y_j,
         \quad w_i^x = \frac{1}{i}\sum_{j = 1}^i \mathbf{1}(x_j = x).
\end{align*}
\begin{prop}\label{prop:cond_mean}
Assume the conditions of Theorem \ref{theo:reg} hold, and $Y$ is non-negative. Then the empirical conditional expectation satisfies $\mu_i^{y \mid x} \to \mu_\infty^{y \mid x}$ a.s.
\end{prop}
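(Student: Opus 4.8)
The plan is to analyze the numerator and denominator of $\mu_i^{y\mid x}=m_i^{yx}/w_i^x$ separately, prove almost sure convergence of each, verify that the limiting denominator is strictly positive a.s., and then conclude by the continuous mapping theorem.

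First I would treat the denominator $w_i^x$. Since the covariates are resampled from the Bayesian bootstrap, $\mathbf{1}(x_{i+1}=x)$ has conditional mean $w_i^x$ given $\mathcal{F}_i$, so $E[w_{i+1}^x\mid\mathcal{F}_i]=\tfrac{i}{i+1}w_i^x+\tfrac{1}{i+1}w_i^x=w_i^x$; thus $(w_i^x)_{i\ge n}$ is a martingale taking values in $[0,1]$ and converges a.s.\ to some $w_\infty^x$. Moreover $(i\,w_i^x)_{i\ge n}$ is exactly the count of the ``colour'' $x$ in a P\'olya urn with unit reinforcement started from the $n$ balls $x_{1:n}$; since $x\in\{x_1,\dots,x_n\}$ occurs at least once, $w_\infty^x=\sum_{j:\,x_j=x}w_j^\infty$ with $(w_1^\infty,\dots,w_n^\infty)\sim\mathrm{Dir}(1,\dots,1)$ (the same limit used in the proof of Theorem~\ref{theo:reg}), so $w_\infty^x>0$ a.s.

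Next I would treat the numerator $m_i^{yx}$. Conditioning on $x_{i+1}$ and using the mixture predictive (\ref{eq:linreg}) together with the same manipulation of the nonparametric conditional predictive as in the proof of Theorem~\ref{theo:reg}, one obtains
\begin{align*}
E\!\left[m_{i+1}^{yx}\mid\mathcal{F}_i\right]
&= m_i^{yx} + \frac{c}{(i+1)(c+i)}\left[w_i^x\,g^{-1}(x^\top\beta_i) - m_i^{yx}\right]\\
&\le m_i^{yx} + \frac{c}{(i+1)(c+i)}\,w_i^x\,\bigl|g^{-1}(x^\top\beta_i)\bigr|,
\end{align*}
where the inequality uses $Y\ge 0$ (hence $m_i^{yx}\ge 0$) to discard the term $-\tfrac{c}{(i+1)(c+i)}m_i^{yx}$. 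This exhibits $(m_i^{yx})_{i\ge n}$ as a non-negative almost supermartingale with $a_i\equiv0$ and $b_i=\tfrac{c}{(i+1)(c+i)}w_i^x|g^{-1}(x^\top\beta_i)|$. By Theorem~\ref{theo:reg}, $\beta_i\to\beta_\infty$ a.s.; since $g$ is strictly monotone and continuously differentiable, $g^{-1}$ is continuous, so $g^{-1}(x^\top\beta_i)\to g^{-1}(x^\top\beta_\infty)<\infty$ a.s., whence $M:=\sup_i|g^{-1}(x^\top\beta_i)|<\infty$ a.s. Combined with $w_i^x\le1$, this gives $b_i\le cM/[(i+1)(c+i)]=O(i^{-2})$, so $\sum_i b_i<\infty$ a.s., and Lemma~\ref{lemma:Convergence theorem of almost-supermartingale} yields $m_i^{yx}\to m_\infty^{yx}$ a.s.\ for a finite, non-negative limit $m_\infty^{yx}$.

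Finally, since $m_i^{yx}\to m_\infty^{yx}$ a.s., $w_i^x\to w_\infty^x$ a.s., and $w_\infty^x>0$ a.s., the map $(a,b)\mapsto a/b$ is continuous at $(m_\infty^{yx},w_\infty^x)$, so $\mu_i^{y\mid x}\to m_\infty^{yx}/w_\infty^x=:\mu_\infty^{y\mid x}$ a.s. I expect the main obstacle to be the strict positivity $w_\infty^x>0$ a.s.: the denominator $w_i^x$ is updated only on the asymptotically vanishing event $\{x_{i+1}=x\}$, so one must invoke the P\'olya-urn/Dirichlet limiting law for Bayesian-bootstrap covariate resampling to rule out degeneration of the limiting weight; a secondary point is that the correction term $b_i$ in the almost-supermartingale bound is random through $\beta_i$, so its summability must be argued pathwise from $\beta_i\to\beta_\infty$ a.s.\ rather than by a deterministic estimate.
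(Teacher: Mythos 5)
Your proposal is correct and follows essentially the same route as the paper: the same almost-supermartingale bound on the numerator $m_i^{yx}$ (with summable correction $O(i^{-2})$ argued pathwise from $\beta_i\to\beta_\infty$ a.s.), combined with a.s.\ convergence of the Bayesian-bootstrap weight $w_i^x$ and the continuous mapping theorem. Your explicit verification that $w_i^x$ is a bounded martingale and that $w_\infty^x>0$ a.s.\ via the P\'olya-urn/Dirichlet limit is a detail the paper only asserts (it is recorded in the proof of Theorem \ref{theo:reg}), so this is a welcome but not substantively different addition.
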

\begin{proof}
We note that the denominator $w_i^x$ converges to the Dirichlet weight corresponding to $x$ a.s., so we only need to study the numerator $m_i^{yx}$.
We then have
\begin{align*}
E\left[{m}_{i+1}^{yx}\mid  \mathcal{F}_i\right] = \frac{i}{i+1}  {m}_i^{yx} + \frac{1}{i+1} E\left[\mathbf{1}(X_{i+1} = x)\, Y_{i+1}\,\mid \mathcal{F}_i\right].
\end{align*}
We can then compute
\begin{align*}
    E\left[\mathbf{1}(X_{i+1} = x)\, Y_{i+1}\,\mid \mathcal{F}_i\right] &= E\left[Y_{i+1} \mid X_{i+1} = x, \mathcal{F}_i\right]\, E\left[\mathbf{1}(X_{i+1} = x)\,\mid \mathcal{F}_i\right]\\
    &= 
    \left[\frac{c}{c+i}\, g^{-1}(x^\top  \beta_i) + \frac{i}{c+i} {\mu}_i^{y \mid x}\right]\, w_i^x\\
    &= \frac{c}{c+i}\, g^{-1}(x^\top  \beta_i)\, w_i^x + \frac{i}{c+i} {m}_i^{yx}.
\end{align*}
Together, this gives
\begin{align*}
    E\left[{m}_{i+1}^{yx}\mid  \mathcal{F}_i\right] &= \frac{i}{i+1}  {m}_i^{yx} + \frac{1}{i+1}\left[\frac{i}{c+i}{m}_i^{yx} + \frac{c}{c+i}\, g^{-1}(x^\top \beta_i) w_i^x\right] \\
    &\leq {m}_i^{yx} + \frac{c}{(i+1)(c+i)}\, g^{-1}(x^\top \beta_i)\, w_i^x
\end{align*}
where we use the fact that $m_i^{yx}$ is non-negative since $Y\in \{0,1\}$. As both $w_i^x$ and $\beta_i$ converge a.s., since $g^{-1}$ is continuous, we have that $g^{-1}(x^\top \beta_i)\, w_i^x$ is bounded a.s. which gives the following:
\begin{align*}
    \sum_{i = n}^\infty  \frac{c}{(i+1)(c+i)}\, g^{-1}(x^\top \beta_i)\, w_i^x < \infty \quad \text{a.s.}
\end{align*}
Crucially, ${m}_i^{yx}$ is non-negative so we have that ${m}_i^{yx}$ is an almost supermartingale and thus converges a.s. As a result, we have that $\mu_i^{y \mid x}$ converges a.s.
\end{proof}

\subsection{Practical details}\label{sec:app_regprac}
\subsubsection{Selecting $c$ for regression}\label{sec:app_regc}

In the regression setting, computing the energy score on a test datum requires a bit more care due to the nonparametric component $\mathbbm{P}_i(y \mid x)$, which is undefined if $x$ is not one of the observed atoms $x_{1:i}$. As a result, we opt to consider the energy score of the \textit{joint} model on $(x,y)$ instead. We must however to take care that the contribution of $x$ does not dominate that of $y$ if $p$ is large. Specifically, we consider the joint predictive
\begin{align*}
p_i(y_{i+1},x_{i+1})&= p_i(y_{i+1} \mid x_{i+1}) \, p(x_{i+1}),\\
p(x_{i+1}) &= \frac{1}{i}\sum_{j = 1}^i \delta_{x_j}(x_{i+1})
\end{align*}
where $p_i(y_{i+1} \mid x_{i+1})$ is (\ref{eq:linreg}). Note that this corresponds to the joint predictive we are predictive resampling with in Algorithm \ref{alg:linreg}. Let $\mathbb{P}_{v}$ correspond to the empirical distribution of a held-out test set $\{y_{v},x_{v}\}$. We assume that the covariates are normalized to have mean 0 and variance 1, and the covariates are of dimension $p$. Then our score to maximize is
\begin{align*}
    S(\mathbb{P}_{v}, P_n) &= \frac{1}{v}\sum_{j = 1}^v s(y_{v,j},x_{v,j}; P_n),\\
    s(y,x; P) &= - 2 E_{\{Y,X\} \sim P}\left[\sqrt{(4p)^{-1}\|x - X\|_2^2+ (y - Y)^2} \right] \\
    &+ E_{\{Y,X\}, \{Y',X'\} \iid P}\left[\sqrt{(4p)^{-1}\|X - X'\|_2^2+ (Y - Y')^2}\right]
\end{align*}
where $\|\cdot\|_2$ is the $L_2$ norm. Note that we have divided the contribution of the covariates by ${4p}$ to ensure they do not dominate the contribution to the energy score. The factor $4p$ is chosen as the maximum variance of the binary $Y$ is $1/4$, and we assume $X$ is standardized.
This amounts to essentially tuning the hyperparameters of the kernel in the MMD. The extensions to cross-validation are then obvious, which is what we implement in practice.

\subsubsection{Approximating $\beta_i$}\label{sec:app_mle}

In this section, we consider a variant of Algorithm \ref{alg:linreg} where we utilize an online  update based on Newton's method for computational expediency, that is
$$
\beta_{i+1} \gets \beta_{i} + (i+1)^{-1}\mathcal{I}_n(\beta_n)^{-1}s(y_{i+1}, \beta_{i}; x_{i+1})
$$
where we only require the new datum $\{y_{i+1},x_{i+1}\}$, avoiding a costly reprocessing of the full dataset. Additionally, we use the Fisher information at the initial $\beta_n$ to avoid having to invert the Fisher information matrix during predictive resampling. Note that this corresponds to a single step of Newton's method initialized at $\beta_i$, where we assume $\beta_i$ is the MLE computed from $\{y_j,x_j\}_{j = 1}^i$.

Although $\beta_i$ is no longer exactly the solution to (\ref{eq:cross_moment}), we will shortly demonstrate empirically for a logistic regression example that it is generally close, so we expect the martingale to be almost preserved in practice. The intuition for this is as follows: since $Y_{i+1}\mid X_{i+1}$ is drawn from $f_{\beta_i}(\cdot \mid X_{i+1})$, the optimal $\beta_{i+1}$ cannot deviate too far from $\beta_i$ from which the sample is drawn. As a result, one step of Newton's method is sufficient.

\begin{algorithm}[!h]\label{alg:logreg}
\caption{Predictive resampling for logistic regression }\label{alg:logreg}
\begin{algorithmic}[1]
\State Given the observed data $\{y_i, x_i\}_{i=1}^n$ and the number of posterior samples $B$
\State Compute the initial estimate $\beta_n$ and the Fisher information $\mathcal{I}_n(\beta_n)$
  \For{$i \gets n$ to $N-1$}
        \State  Sample $x_{i+1}$ by the Bayesian bootstrap 
        \State  Sample $y_{i+1}$ by Equation (\ref{eq:linreg}) 
        \State  Update coefficients:
        $\beta_{i+1} \gets \beta_{i} + (i+1)^{-1}\mathcal{I}_n(\beta_n)^{-1}s(y_{i+1}, \beta_{i}; x_{i+1})$
  \EndFor 
  \State {Repeat} Lines 3-7 $B$ times and \textbf{return} $\{\beta^{(1)}_N,...,\beta^{(B)}_N\}$
\end{algorithmic}
\end{algorithm}\vspace{-2mm}
\paragraph*{Empirical comparison} We now compare Algorithms \ref{alg:linreg} and \ref{alg:logreg} in the logistic regression example. We consider the same same setup (with the same data split) as Section \ref{sec:logreg}, but only consider $p = 2$ covariates ($\beta_2$ and $\beta_5$) for computational expediency. For all examples, we run predictive resampling for $N = 4000$ forward steps and generate $B = 2000$ samples, and set $c = 1600$.
For Algorithm \ref{alg:linreg}, we implement the MLE with 3 steps of Newton-Raphson with a full pass of the dataset each time.

Fig. \ref{fig:joint_post} illustrates the joint posterior plots  for $\beta_2, \beta_5$ using Algorithms \ref{alg:linreg} and \ref{alg:logreg}. We see that both the marginal and joint posteriors are practically indistinguishable. However, the exact case (Algorithm \ref{alg:linreg}) required 20 minutes to run, compared to 1.8 seconds for the approximate case (Algorithm \ref{alg:logreg}). This large difference is precisely due to the full pass of the dataset required for the exact case.

To further solidify the validity of Algorithm \ref{alg:logreg}, we consider evaluating the actual values of $\beta_i$ along predictive resampling trajectories. Specifically, we consider two setups:
\paragraph*{Setup 1:} Carry out exact predictive resampling with Algorithm \ref{alg:linreg}, and simultaneously compute the approximate MLE:
   $${\beta}^{\text{Online}}_{i+1} \gets {\beta}_{i}^{\text{Online}} + (i+1)^{-1}\mathcal{I}_n(\beta_n)^{-1}s(y_{i+1}, {\beta}_{i}^{\text{Online}}; x_{i+1})
$$
\paragraph*{Setup 2:} Carry out approximate predictive resampling with Algorithm \ref{alg:logreg}, and simultaneously compute the exact MLE:
   $${\beta}^{\text{MLE}}_{i+1} \gets \text{MLE}(y_{1:i+1}, x_{1:i+1})
$$
Along trajectories in both setups, we then monitor $\|\beta_i^{^{\text{Online}}}\|_2$, $\|{\beta}_i^{\text{MLE}}\|_2$ and $\|\beta_i^{\text{MLE}} - {\beta}_i^{\text{Online}}\|_2$. This setup is required as we want to monitor a.s. differences between the two optimization algorithms (and not just distributional differences).

Fig. \ref{fig:setup1} and \ref{fig:setup2} plots these quantities for Setups 1 and 2 respectively for two illustrative predictive resampling trajectories. We see in both settings that $\|\beta_i^{\text{MLE}} - {\beta}_i^{\text{Online}}\|_2$ is very small relative to $\|\beta_i^{\text{MLE}}\|_2$ and $\|\beta_i^{\text{Online}}\|$, indicating that Algorithms \ref{alg:logreg} is close to Algorithm \ref{alg:linreg} not only in distribution, but on a per-sample a.s. basis as well. The key property to notice is that $\|\beta_i^{\text{MLE}} - {\beta}_i^{\text{Online}}\|_2$ stays small even as the number of forward steps increases, which solidifies our earlier intuition that predictive resampling prevents $\beta_i^{\text{MLE}}$ and $\beta_i^{\text{Online}}$ from deviating too much.
\begin{figure}[!ht]
        \centering
        \includegraphics[width=0.52\linewidth]{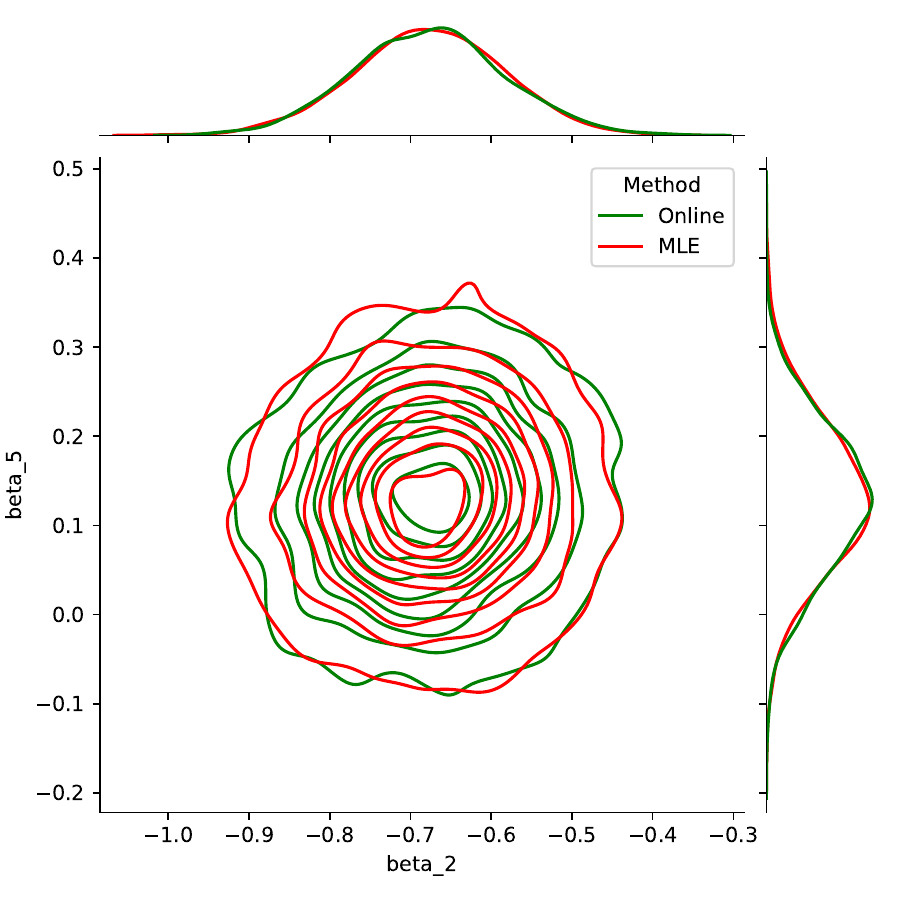} 
        \caption{Joint posterior of $\beta_2, \beta_5$ under Algorithms \ref{alg:linreg} and \ref{alg:logreg}}
        \label{fig:joint_post}
\end{figure}
\begin{figure}[!ht]
    \centering
    \begin{subfigure}[t]{0.49\textwidth}
        \centering
        \includegraphics[width=\linewidth]{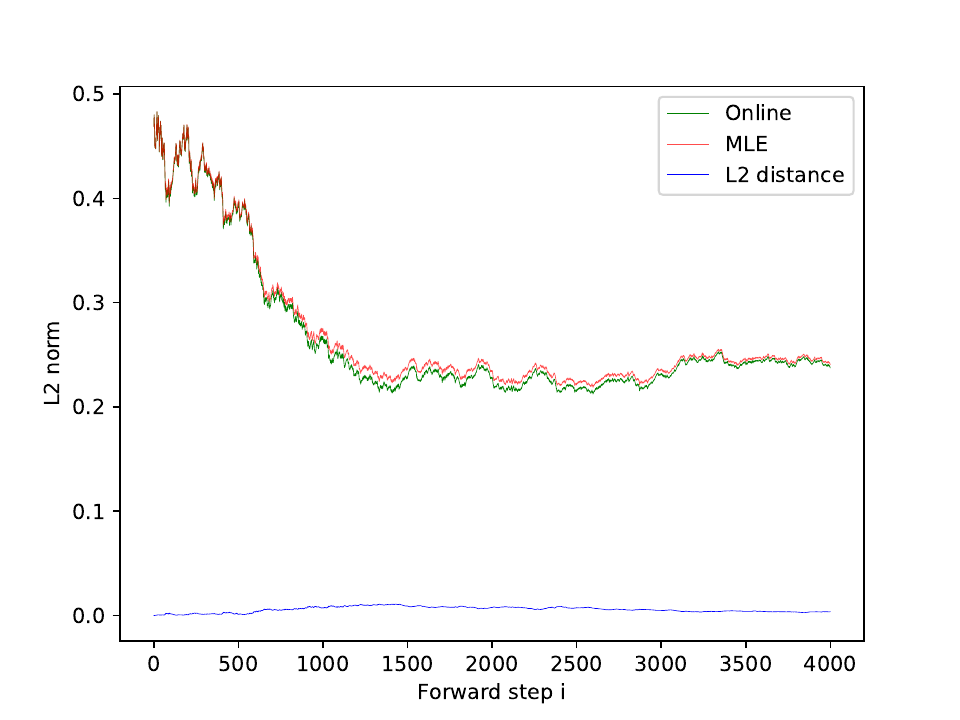} 
    \end{subfigure}
    \begin{subfigure}[t]{0.49\textwidth}
        \centering
        \includegraphics[width=\linewidth]{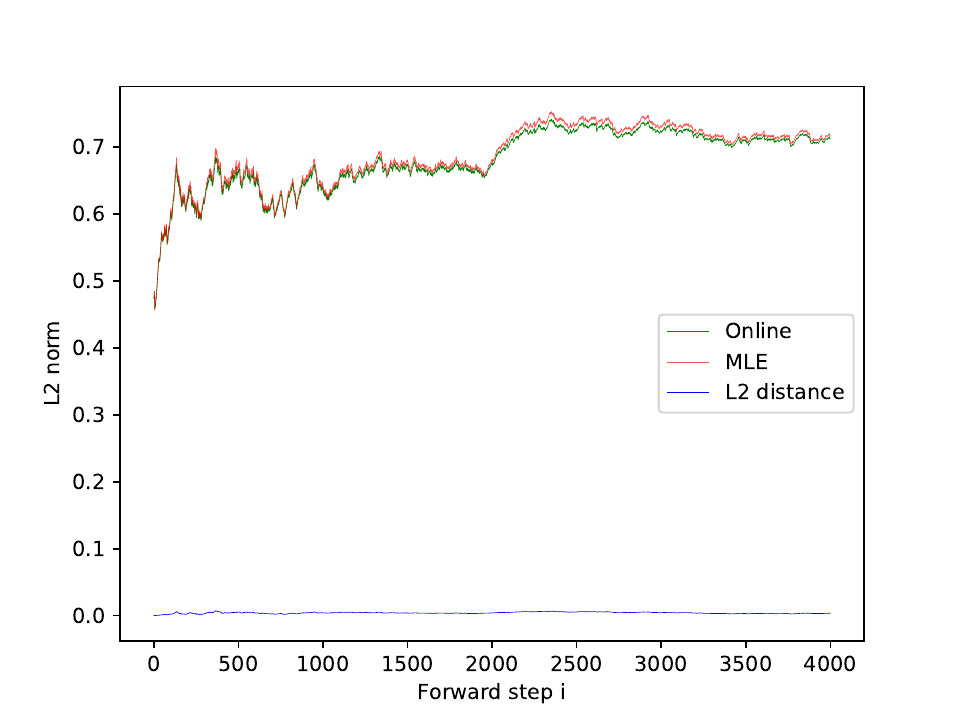} 
    \end{subfigure}
    \caption{Plots of $\|\beta_i^{^{\text{Online}}}\|_2$, $\|{\beta}_i^{\text{MLE}}\|_2$ and $\|\beta_i^{\text{MLE}} - {\beta}_i^{\text{Online}}\|_2$ for Setup 1 for two posterior samples}
    \label{fig:setup1}
\end{figure}\vspace{-5mm}
\begin{figure}[!ht]
    \centering
    \begin{subfigure}[t]{0.49\textwidth}
        \centering
        \includegraphics[width=\linewidth]{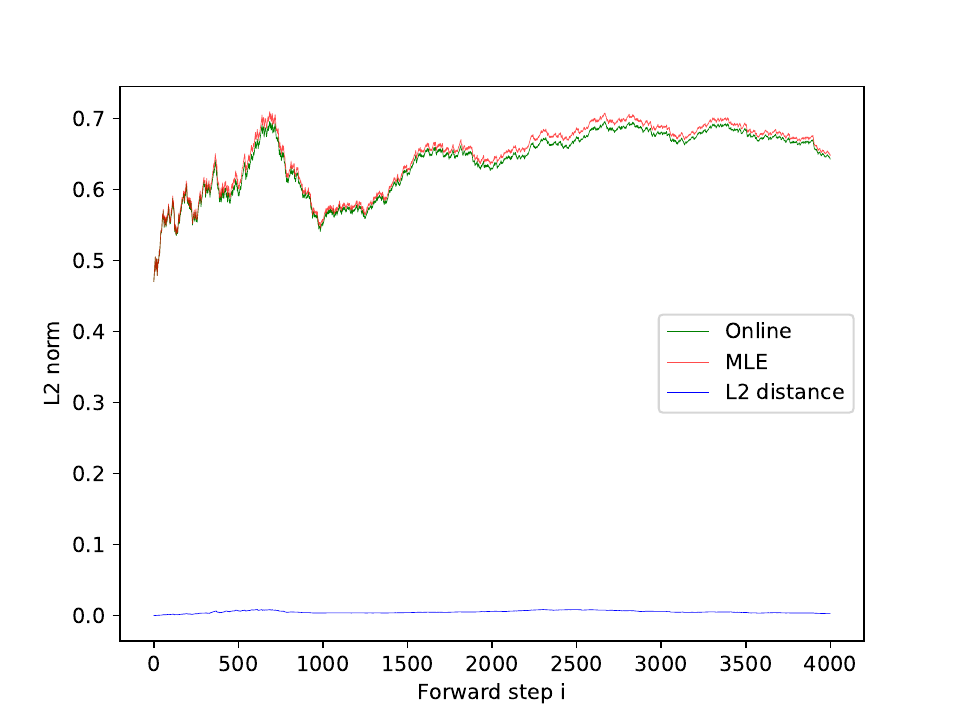} 
    \end{subfigure}
    \begin{subfigure}[t]{0.49\textwidth}
        \centering
        \includegraphics[width=\linewidth]{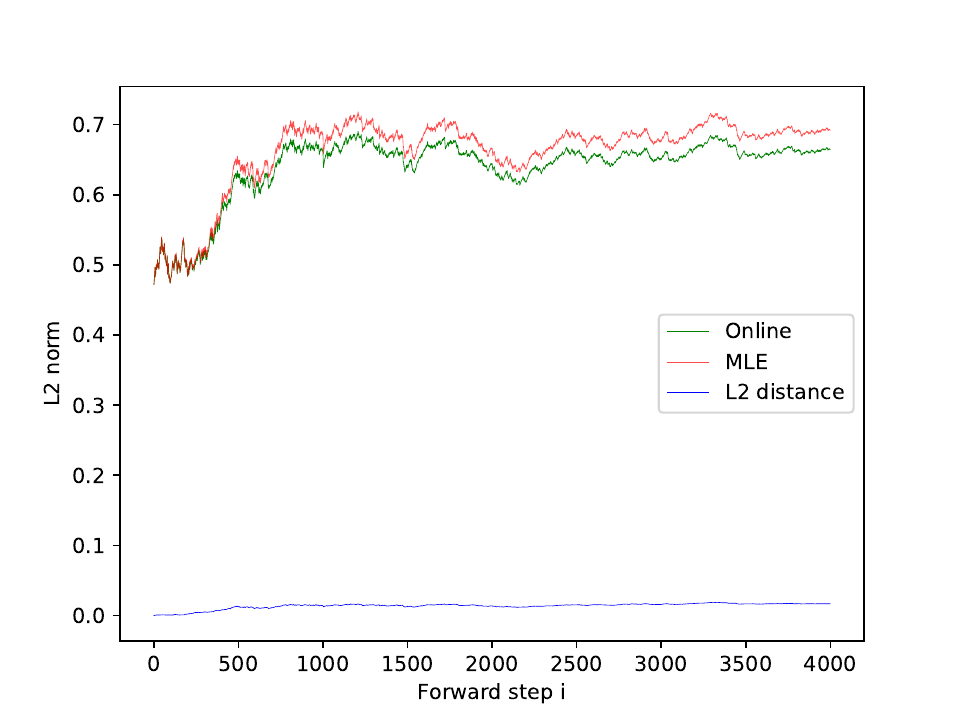} 
    \end{subfigure}
    \caption{Plots of $\|\beta_i^{^{\text{Online}}}\|_2$, $\|{\beta}_i^{\text{MLE}}\|_2$ and $\|\beta_i^{\text{MLE}} - {\beta}_i^{\text{Online}}\|_2$ for Setup 2 for two posterior samples}
    \label{fig:setup2}
\end{figure}

\setcounter{prop}{0}
\setcounter{corollary}{0}
\setcounter{theorem}{0}
\setcounter{lemma}{0}
\setcounter{algorithm}{0}
\setcounter{equation}{0}
\setcounter{figure}{0}
\setcounter{table}{0}
\setcounter{example}{0}
\section{Appendix: Additional experiments}\label{sec:app_exp}
\subsection{Simulation}\label{sec:app_sim}
We present the simulation studies of the well-specified case with small sample size. We select a dataset in which the optimal $c$ is effectively infinity in the well-specified case (Fig. \ref{fig:6} (left)) and another dataset where $c$ remains small (Fig. \ref{fig:6} (right)). 
\begin{figure}[ht]
    \centering
    \begin{subfigure}[t]{0.44\textwidth}
        \centering
        \includegraphics[width=\linewidth]{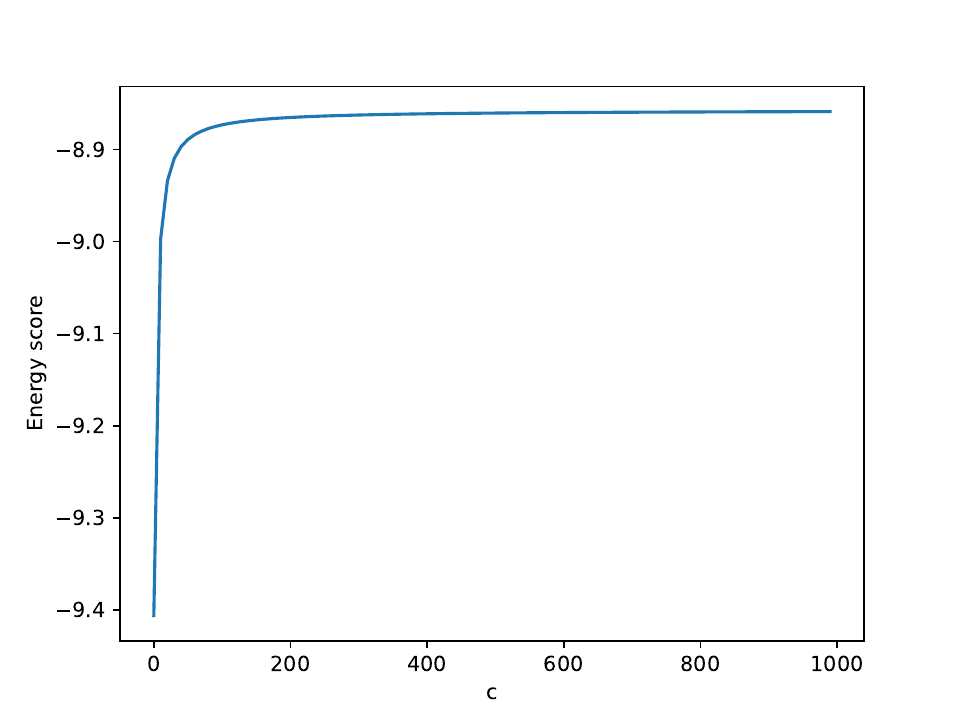}
        \label{fig:graph8}
    \end{subfigure}
        \begin{subfigure}[t]{0.44\textwidth}
        \centering
        \includegraphics[width=\linewidth]{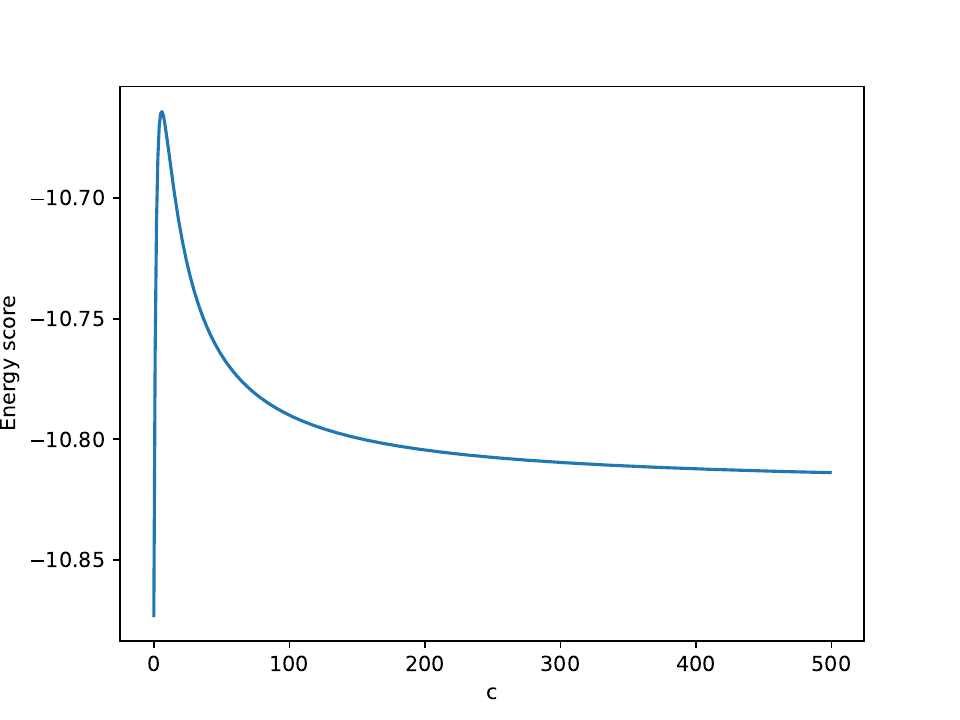} 
        \label{fig:graph9}
    \end{subfigure}\vspace{-5mm}
    \caption{Plots of $c$ against LOOCV energy score.}
    \label{fig:6}
\end{figure}

The posterior distribution obtained by mixture MP demonstrates significant variation from that obtain by BB (Fig. \ref{fig:10}, \ref{fig:13}) irrespective of the value of $c$. In particular, the posterior distribution of the quantile obtained from the mixture MP is continuous and smooth, whereas the BB yields a discrete distribution with posterior masses at several points. 
\begin{figure}[!ht]
      \begin{subfigure}[t]{0.44\textwidth}
        \centering
        \includegraphics[width=\linewidth]{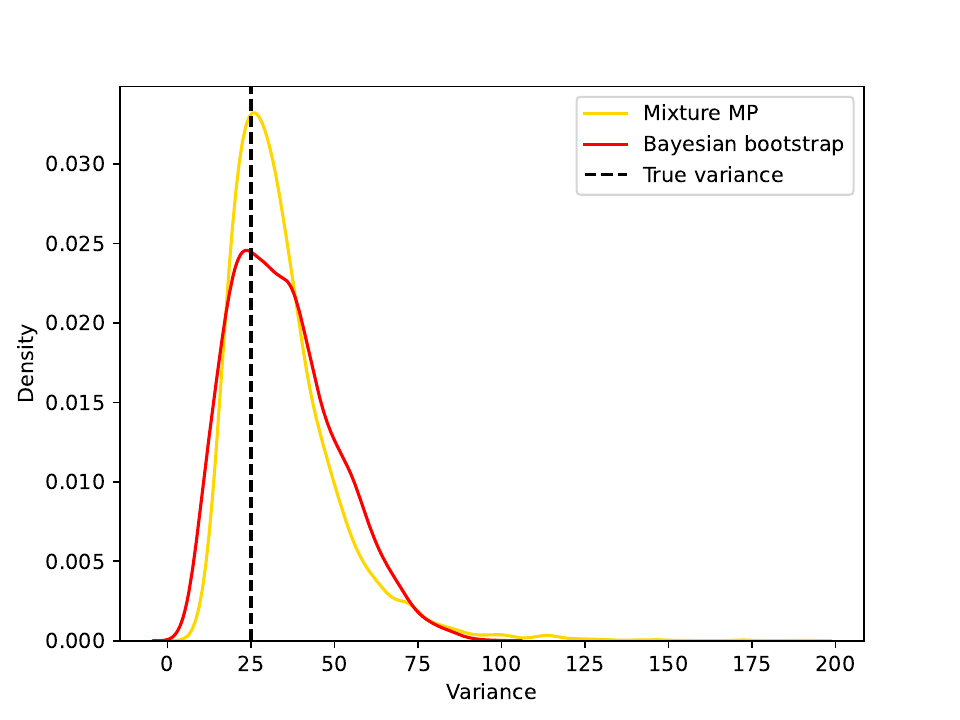} 
        \label{fig:graph28}
    \end{subfigure}
    \centering
      \begin{subfigure}[t]{0.44\textwidth}
        \centering
        \includegraphics[width=\linewidth]{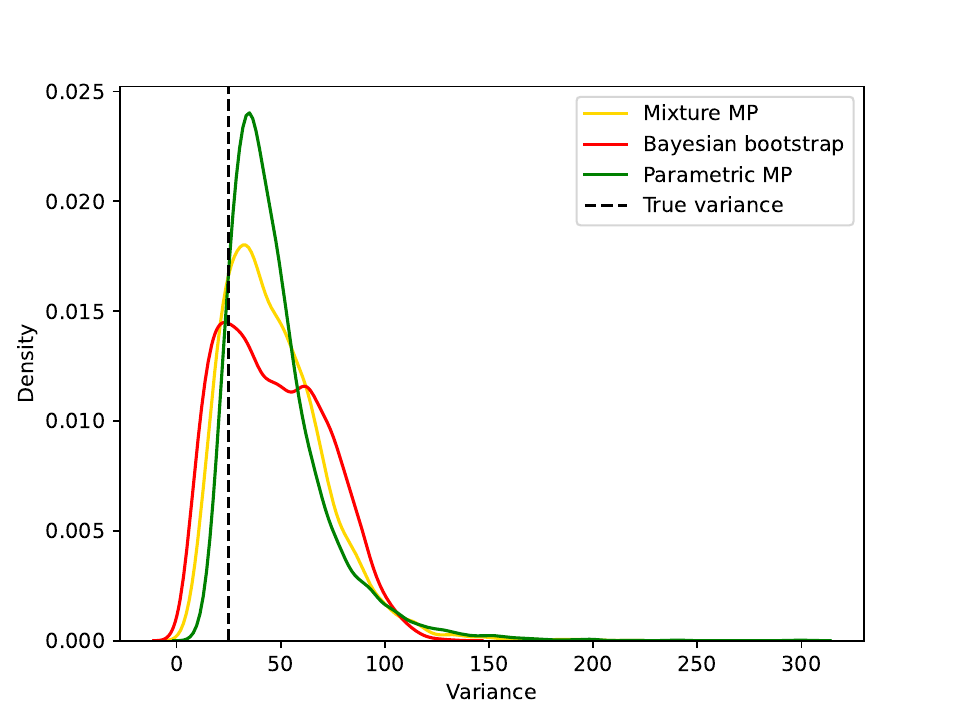} 
        \label{fig:graph29}
    \end{subfigure}\vspace{-5mm}
    \caption{Posterior distribution of variance for $c=\infty$ (Left) and $c=6$ (Right).}
    \label{fig:10}
\end{figure}\vspace{-5mm}
\begin{figure}[!ht]
    \begin{subfigure}[t]{0.44\textwidth}
        \centering
        \includegraphics[width=\linewidth]{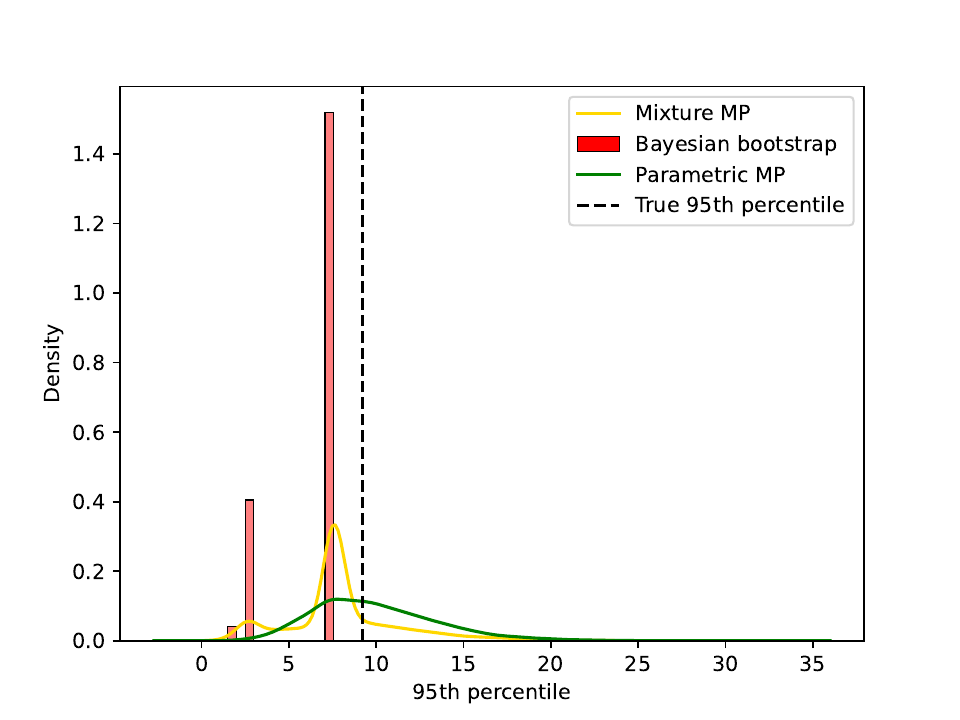} 
        \label{fig:graph30}
    \end{subfigure}
    \centering
    \begin{subfigure}[t]{0.44\textwidth}
        \centering
        \includegraphics[width=\linewidth]{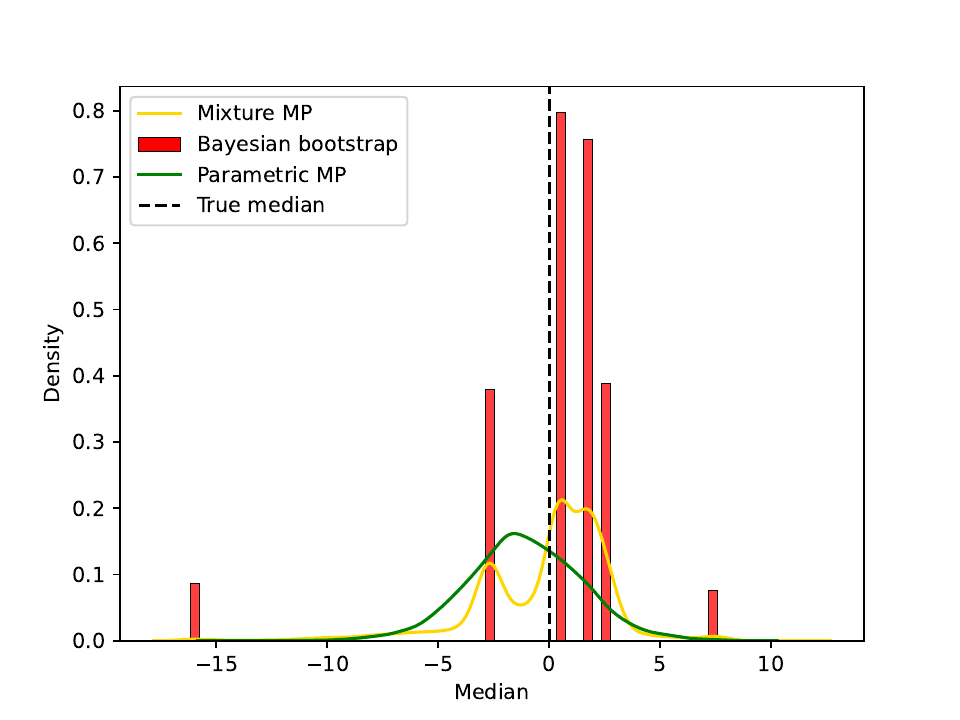} 
        \label{fig:graph31}
    \end{subfigure}\vspace{-2mm}
    \caption{Posterior distribution of (Left) 95th percentile and (Right) median; $c=6$.}
    \label{fig:13}
\end{figure}

The run-times of all the simulation studies, including those presented in the main article, are summarized in the following table.

\begin{table}[h]
\centering
    \scriptsize
    \begin{tabular}{c c c c }
        \hline
         \textbf{Run-times (seconds)} & \textbf{Mixture MP} & \textbf{Parametric MP} & \textbf{BB}\\ \hline
        \textbf{Moderate $n$, well-specified} &0.215 & 0.169 & 0.006\\ 
        \textbf{Moderate $n$, misspecified} & 1.027 & 1.297 & 0.009 \\
        \textbf{Small $n$, well-specified} &0.477 & 0.505 & 0.002\\
        \textbf{Small $n$, misspecifed} &0.477 & 0.505 & 0.002\\ \hline
    \end{tabular}
    \caption{Run-times of all simulations.}
\end{table}

\subsection{Logistic regression}\label{sec:app_logreg}
As stated in the main article, we include posterior plots of some of the remaining coefficients for reference. Consistent with previous observations, the posterior distributions do not exhibit too much variation between the methods, as none of the three methods incorporate prior information regarding the regression coefficients. The mixture MP nonetheless tends to be closer to the parametric MP due to the large value of $c = 2000$.
\begin{figure}[ht]
    \begin{subfigure}[t]{0.45\textwidth}
        \centering
        \includegraphics[width=\linewidth]{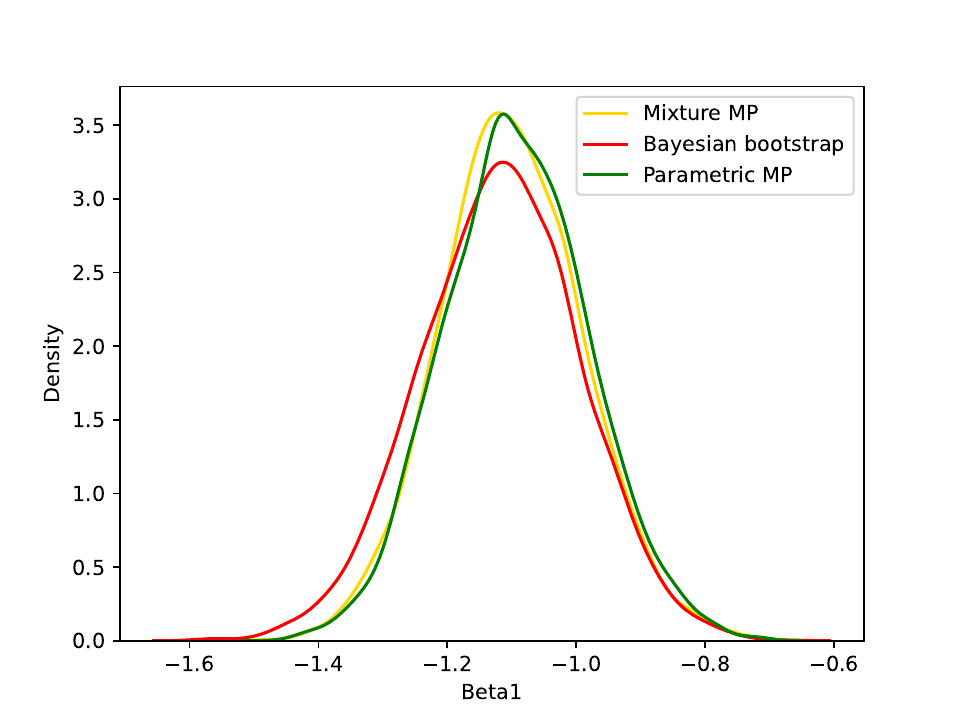} 
    \end{subfigure}
    \begin{subfigure}[t]{0.45\textwidth}
        \centering
        \includegraphics[width=\linewidth]{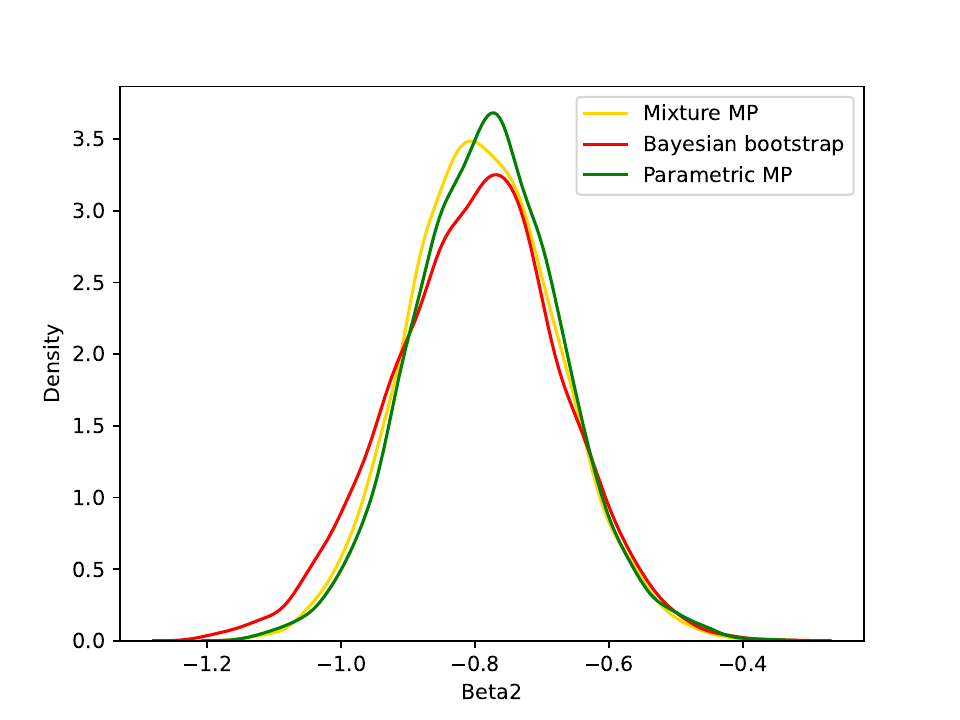} 
    \end{subfigure}\\
    \centering
    \begin{subfigure}[t]{0.45\textwidth}
        \centering
        \includegraphics[width=\linewidth]{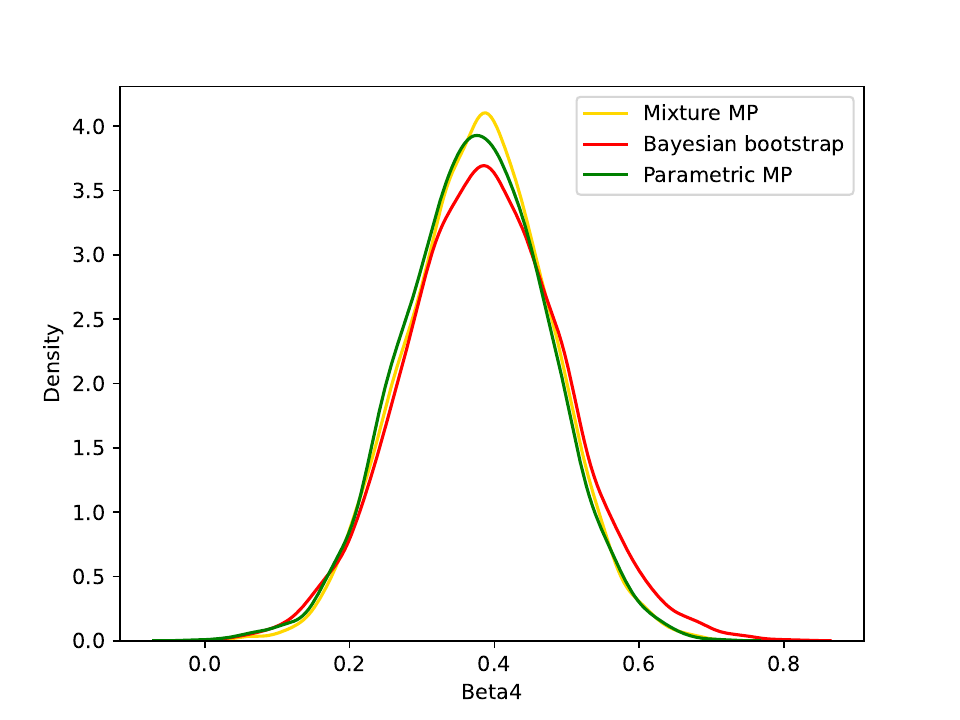} 
    \end{subfigure}
    \begin{subfigure}[t]{0.45\textwidth}
        \centering
        \includegraphics[width=\linewidth]{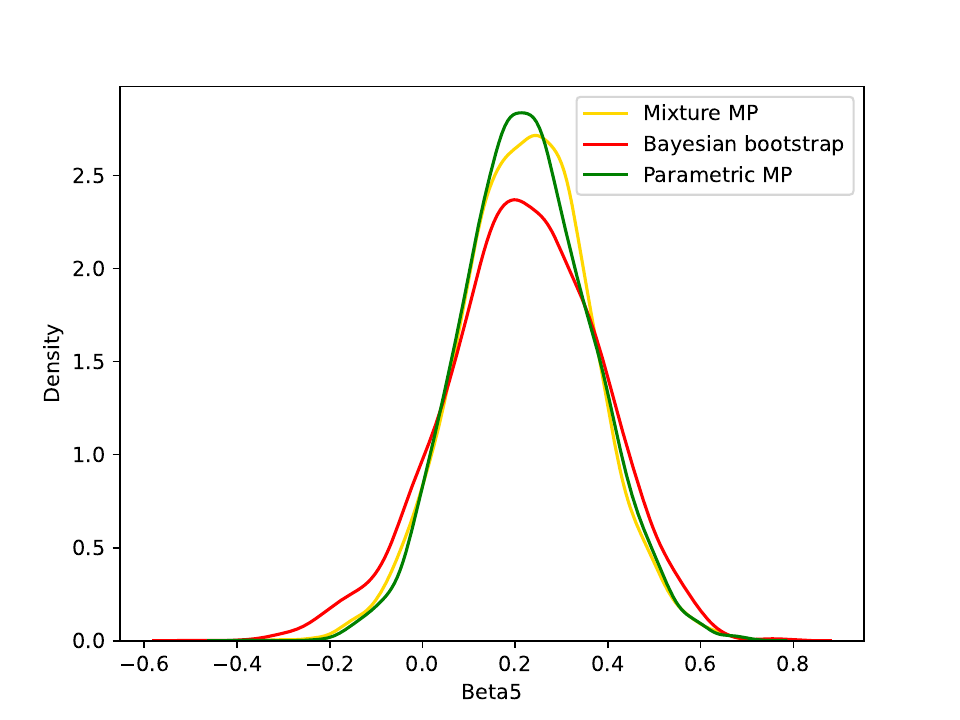} 
    \end{subfigure}
    \caption{Posterior distribution of logistic regression coefficients ($\beta_1,\beta_2,\beta_4,\beta_5$).}
    \end{figure}

\end{document}